\newcommand{\specialcell}[2][c]{%
  \begin{tabular}[#1]{@{}c@{}}#2\end{tabular}}
\renewcommand{\baselinestretch}{\vv}
\newtheorem{Def}{Definition}[section]
\newtheorem{theorem}{Theorem}
\newtheorem{Prop}{Proposition}
\newtheorem{lemma}{Lemma}
\newtheorem{corollary}{Corollary}
\title{High-dimensional Ordinary Least-squares Projection for Screening Variables}
\author{Xiangyu Wang and Chenlei Leng
\footnote{Wang is a graduate student, Department of Statistical Sciences, Duke
  University (Email: xw56@stat.duke.edu). Leng is Professor, Department of
  Statistics, University of Warwick.   Corresponding author: Chenlei Leng (C.Leng@warwick.ac.uk). We
  thank three referees, an associate editor and Prof. Van Keilegom for their
  constructive comments.}
}
\date{First version: October 4, 2013. This version: \today}
\date{}
\begin{document}
\maketitle
\begin{abstract}
Variable selection is a challenging issue in statistical applications
when the number of predictors $p$ far exceeds the number of observations $n$.
In this ultra-high dimensional setting, the sure independence screening (SIS)
procedure was introduced to significantly reduce the dimensionality by preserving the true model with overwhelming probability, before a refined second stage analysis. However, the aforementioned sure screening property strongly relies
on the assumption that the important variables in the model have large
marginal correlations with the response, which rarely holds in 
reality. To overcome this, we propose a novel and simple
screening technique called the high-dimensional ordinary least-squares 
projection (HOLP). We show that HOLP possesses
the sure screening property and gives consistent variable selection without
the strong correlation assumption, and has a low computational complexity. A ridge type HOLP procedure is also discussed. Simulation
study shows that HOLP performs competitively compared to many other marginal
correlation based methods. An application to a mammalian eye disease data
illustrates the attractiveness of HOLP.

\end{abstract}
{\it Keywords}: Consistency; Forward regression; Generalized inverse; High
dimensionality; Lasso; Marginal correlation; Moore-Penrose inverse; Ordinary
least squares; Sure independent screening; Variable selection.

%%%%%%%%%%%%%%%%%%%

\section{Introduction}
The rapid advances of information technology have brought an unprecedented array of large and complex data. In this big data era, a defining feature of a high dimensional dataset is that the  number of variables $p$ far exceeds the number of observations $n$. As a result, the classical ordinary least-squares estimate (OLS) used for linear regression is no longer applicable due to a lack of sufficient degrees of freedom.

Recent years have witnessed an explosion in developing approaches for handling
large dimensional data sets.  A common assumption underlying these approaches is
that although the data dimension is high, the number of the variables
that affect the response is relatively small. The first class
of approaches aim at estimating the parameters and conducting variable
selection simultaneously by penalizing a loss function via a sparsity inducing 
penalty. See, for example, the Lasso
\citep{Tibs:1996,Zhao:Yu:2006,Meinshausen:Buhlmann:2008}, the SCAD
\citep{Fan:Li:2001}, the adaptive Lasso
\citep{Zou:2006,Wang:etal:2007,Zhang:Lu:2007}, the grouped Lasso
\citep{Yuan:Lin:2006}, the LSA estimator \citep{Wang:Leng:2007}, the Dantzig
selector \citep{Candes:Tao:2007}, the bridge regression
\citep{Huang:etal:2008}, and the elastic net
\citep{Zou:Hastie:2005,Zou:Zhang:2009}. However, accurate estimation of a
discrete structure is notoriously difficult. For example, the Lasso can give
non-consistent models if the irrepresentable condition on the design matrix is
violated \citep{Zhao:Yu:2006,Zou:2006}, although computationally more
extensive methods such as those combining subsampling and structure selection \citep{Meinshausen:Buhlmann:2010,Shah:Samworth:2013} may overcome this.

In ultra-high dimensional cases where $p$ is much larger than $n$, these
penalized approaches may not work, and the computation cost for 
large-scale optimization becomes a concern. It is desirable if we
can rapidly reduce the large dimensionality before conducting a 
refined analysis. Motivated by these concerns, \cite{Fan:Lv:2008} initiated a second class of approaches aiming to reduce the dimensionality quickly to a manageable size. In particular, they introduce the sure independence screening (SIS)
procedure that can significantly reduce the dimensionality while preserving
the true model with an overwhelming probability. This important property, termed
the sure screening property, plays a pivotal role for the success of SIS. The
screening operation has been extended, for example, to generalized linear
models \citep{Fan:Fan:2008,Fan:etal:2009,Fan:Song:2010}, additive models
\citep{Fan:etal:2011}, hazard regression
\citep{Zhao:Li:2012,Gorst:Scheike:2013}, and to accommodate conditional
correlation \citep{Barut:etal:2012}. As the SIS builds on marginal
correlations between the response and the features, various extensions of
correlation have been proposed to deal with more general cases
\citep{Hall:Miller:2009,Zhu:etal:2011,Li:etal:2012,LiG:etal:2012}. A number of papers have
proposed alternative ways to improve the marginal correlation aspect of
screening, see, for example,
\cite{Hall:etal:2009,Wang:2009,Wang:2012,Cho:Fryzlewicz:2012}. 

There are two important considerations in designing a screening operator. One
pinnacle consideration is the low computational requirement. After all, screening is predominantly used to quickly reduce the dimensionality. 
The other is that the resulting estimator must possess the sure screening property
under reasonable assumptions. Otherwise, the very purpose of variable
screening is defeated. SIS operates by evaluating the correlations between the
response and one predictor at a time, and retaining the features with top
correlations. Clearly, this estimator can be much more efficiently and easily calculated
than large-scale optimization. For
the sure screening property, a sufficient condition made for SIS
\citep{Fan:Lv:2008} is that the marginal correlations for the important
variables must be bounded away from zero. This condition is referred to as the
marginal correlation condition hereafter. However, for high dimensional
data sets, this assumption is often violated, as predictors
are often correlated. As a result, unimportant variables that are highly
correlated to important predictors will have high priority of being
selected. On the other hand, important variables that are jointly correlated
to the response can be screened out, simply because they are marginally
uncorrelated to the response. Due to these reasons,
\cite{Fan:Lv:2008} put forward an iterative SIS procedure that repeatedly
applies SIS to the current residual in finite many steps. \cite{Wang:2009}
proved that the classical forward regression can also be used for variable
screening,  
and \cite{Cho:Fryzlewicz:2012} advocates a tilting procedure. 

In this paper, we propose a novel variable screener named High-dimensional
Ordinary Least-squares Projection (HOLP), motivated by the ordinary
least-squares estimator and the ridge regression. Like SIS, the resulting HOLP
is straightforward and efficient to compute. Unlike SIS, we show that the sure
screening property holds without the restrictive marginal correlation
assumption. We also discussed Ridge-HOLP, a ridge regression version of HOLP. Theoretically, we prove that the HOLP and Ridge-HOLP possess the sure screening property. More interestingly, we show that both HOLP and Ridge-HOLP are screening consistent in that if we retain a model with the same size as the true model, then the retained model is the same as the true model with probability tending to one. We illustrate the performance of our proposed methods via extensive simulation studies.

The rest of the paper is organized as follows. We elaborate the HOLP estimator
and discuss two viewpoints to motivate it in Section 2. The theoretical properties of HOLP and its ridge version are presented in Section 3. In Section 4,
we use extensive simulation study to compare the HOLP estimator with a number
of competitors and highlight its competitiveness. An analysis of data confirms
its usefulness. Section 5 presents the concluding remarks and discusses future
research. All the proofs are found in the
Supplementary Materials. 

%%%%%%%%%%%%%%%%%%%
%\vspace{-0.2in}
\section{High-dimensional Ordinary Least-Squares Projection}
\subsection{A new screening method}
Consider the familiar linear regression model
\[
y = \beta_1x_1+\beta_2x_2+\cdots+\beta_px_p+\varepsilon,
\]
where $x=(x_1,\cdots,x_p)^T$ is the random predictor vector, $\varepsilon$ is the random error and $y$ is the response. Alternatively, with $n$ realizations of $x$ and $y$, we can write the model as
\[ Y=X\beta+\epsilon,
\]
where $Y \in R^{n}$ is the response vector, $X \in R^{n \times p}$ is the
design matrix, and $\epsilon \in R^n$ consists of {\it i.i.d.} errors. Without
loss of generality, we assume that $\epsilon_i$ follows {\color{black} a distribution
with mean 0 and variance $\sigma^2$}. Furthermore, we assume that $X^TX$ is
invertible when $p<n$ and that $XX^T$ is invertible when $p>n$.   
Denote $\mathcal{M} = \{x_1,...,x_p\}$ as the full model and $\mathcal{M}_S$ as the true model where
$S=\{j: \beta_j\not= 0,~j=1,\cdots,p \}$ is the index set of the nonzero
$\beta_j$'s with cardinality $s = |S|$. To motivate our method, we first look at a general class of {\color{black} linear} estimates of
$\beta$ as
\[ \tilde{\beta}=AY,
\]
where $A \in R^{p\times n}$ maps the response to an estimate and the SIS sets $A=X^T$. Since our emphasis is for screening out the
important variables, $\tilde\beta$ as an estimate of $\beta$ needs not be
accurate but ideally it maintains the rank order of the entries of $|\beta|$
such that the nonzero entries of $\beta$ are large in $\tilde\beta$
relatively. Note that 
\[ AY=A(X\beta+\epsilon)=(AX) \beta+A\epsilon,\]
where $A\epsilon$ consists of linear combinations of zero mean random noises
and $(AX)\beta$ is the signal. In order to preserve the signal part as
much as possible, an ideal choice of $A$ would satisfy that $AX=I$. If this
choice is possible, the signal part would dominate the noise part $A\epsilon$
under suitable conditions. This argument leads naturally to the ordinary
least-squares estimate where $A=(X^TX)^{-1}X^T$ when $p<n$. 

 {However, when $p$ is large than $n$, $X^TX$ is
degenerate and $AX$ cannot be an identity matrix. This fact motivates us to use some kind of generalized inverse of
$X$. 
In Part A of the Supplementary Materials we show that $(X^TX)^{-1}X^T$ can be seen as the
Moore-Penrose inverse of $X$ for $p<n$ and that $X^T(XX^T)^{-1}$ is the
Moore-Penrose inverse of $X$ when $p>n$. We remark that the Moore-Penrose
inverse is one particular form of the
generalized inverse of a matrix. When $A= X^T(XX^T)^{-1}$, $AX$ is no
longer an identity matrix. Nevertheless, as long as $AX$ is diagonally dominant,
$\hat \beta_i ~(i \in S)$ can take advantage of the large diagonal terms of $AX$ to dominate $\hat \beta_i~ (i\not\in S)$ that is just a linear combination of off-diagonal terms.
To show  the diagonal dominance of $AX= X^T(XX^T)^{-1}X$, we
quickly present a comparison to SIS.     
In Fig \ref{fig:motivate}  we plot $AX$ for one
simulated data set with $(n,p)=(50,1000)$, where $X$ is drawn from $N(0,
\Sigma)$ with $\Sigma$ satisfying one of the following: 
(i)~$\Sigma =
I_p$, (ii)~$\sigma_{ij} = 0.6$ and $\sigma_{ii} = 1$, (iii)~$\sigma_{ij} =
0.9^{|i-j|}$ and (iv)~$\sigma_{ij} = 0.995^{|i-j|}$.

 \begin{figure}[!htbp]
   \centering
   \includegraphics[height = 3.8cm]{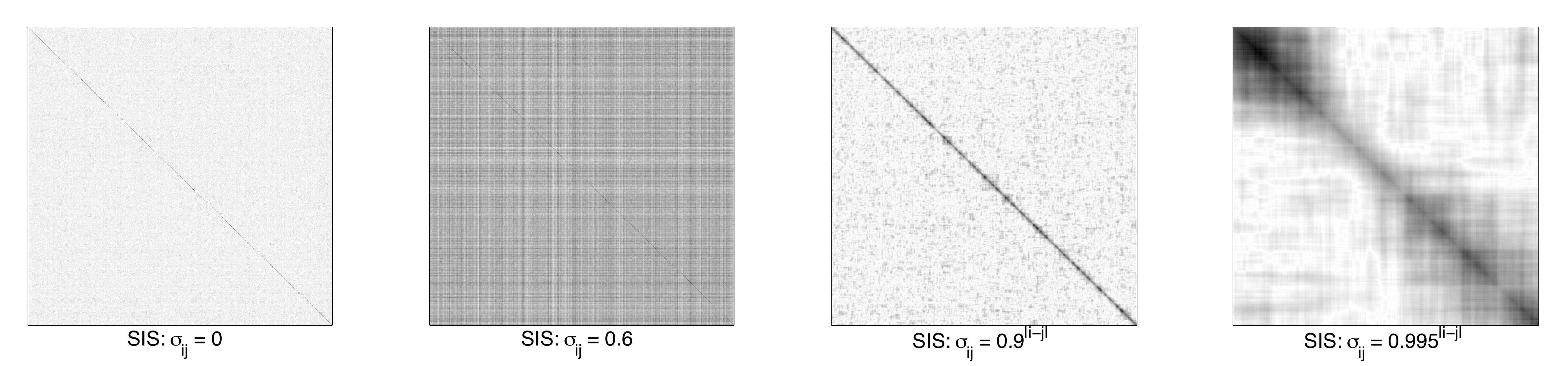}
\medskip
   \centering
   \includegraphics[height = 3.8cm]{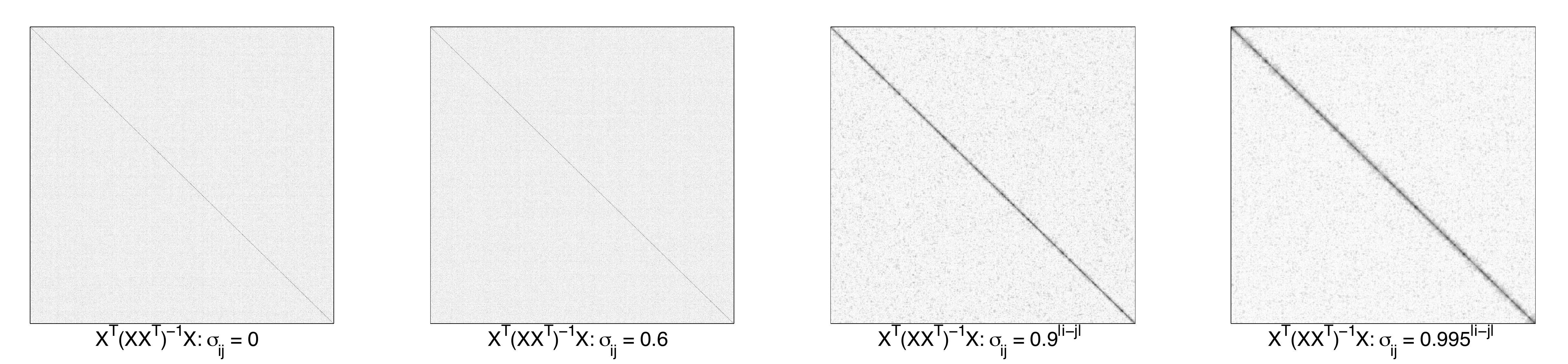}
   \caption{Heatmaps for $AX=X^TX$ in SIS (top) and $AX = X^T(XX^T)^{-1}X$ for
     the proposed method (bottom).}
   \label{fig:motivate}
 \end{figure}

We see a clear pattern of diagonal dominance for $X^T(XX^T)^{-1}X$ under
different scenarios, while the diagonal dominance pattern only emerges {\color{black} for
$AX=X^TX$ in some structures}. To provide an analytical insight, we write
 $X$ via singular value decomposition as $X = VDU^T$,
where $V$ is an $n\times n$ orthogonal matrix, $D$ is an $n\times n$ diagonal
matrix and $U$ is an $p\times n$ matrix that belongs to the Stiefel manifold
$V_{n,p}$. See Part B of the Supplementary Materials for details. Then
  \begin{align*}
    X^T(XX^T)^{-1}X = UU^T,\qquad X^TX = UD^2U^T.
  \end{align*}
Intuitively, $X^T(XX^T)^{-1}X$ reduces the impact from the high correlation of $X$ by removing the random diagonal matrix $D$. As further proved in Part C of the Supplementary Materials, $UU^T$ will be diagonal dominating with overwhelming probability.
 }

These discussions lead to a very simple screening method by first computing
\begin{equation}
\hat\beta = X^T(XX^T)^{-1}Y.\label{eq:holp}
\end{equation}
We name this estimator $\hat\beta$ the High-dimensional Ordinary Least-squares Projection (HOLP) due to the similarity to the classical ordinary least-squares estimate. For variable screening, we follow a very simple strategy by ranking the components of $\hat{\beta}$ and selecting the largest ones. {\color{black} More precisely, let $d$ be the number of the predictors retained after screening.} We choose a submodel $\mathcal{M}_d$ as
\[
\mathcal{M}_d = \{x_j: |\hat \beta_j|\mbox{ are among the largest $d$ of all $|\hat\beta_j|$'s}\}
\quad\mbox{or}\quad
\mathcal{M}_\gamma = \{x_j: |\hat \beta_j|\geq \gamma\}
\]
for some $\gamma$. To see why the HOLP is a projection, we can easily see that
\[
\hat\beta  = X^T(XX^T)^{-1}X\beta+ X^T(XX^T)^{-1}\epsilon,
\]
where the first term indicates that this estimator can be seen as a projection of $\beta$. However, this projection is distinctively different from the usual OLS projection: Whilst the OLS projects the response $Y$ onto the column space of $X$, HOLP uses the row space of $X$ to capture $\beta$. 
We note that many other screening methods, such as tilting and forward
regression, also project $Y$ onto the column space of $X$. Another important
difference between these two projections is the screening mechanism. HOLP
gives a diagonally dominant projection matrix $X^T(XX^T)^{-1}X$, such that the product of this matrix and $\beta$ would be more likely to preserve the rank order of the entries in $\beta$. In contrast, tilting and
forward regression both rely on some goodness-of-fit measure of the selected
variables, aiming to minimize the distance between fitted $\hat Y$ and $Y$. 
An important feature of HOLP is that the matrix $XX^T$ is of full rank whenever $n<p$, in marked contrast to the OLS that is degenerate whenever $n<p$.  Thus, HOLP is unique to high dimensional data analysis from this standpoint.

We now motivate HOLP from a different perspective. Recall the ridge regression
estimate 
\[
\hat\beta(r) = (r I+X^TX)^{-1}X^TY,
\]
where $r$ is the ridge parameter. By letting $r \to \infty$, it is seen that
$r \hat\beta(r) \rightarrow X^TY$. \cite{Fan:Lv:2008} proposed SIS that retains the
large components in $X^TY$ as a way to screen variables. If we let $r \to 0$, the ridge estimator $\hat\beta(r)$ becomes
\[
(X^TX)^{+}X^TY,
\]
where $A^+$ denotes the Moore-Penrose generalized inverse.
 An application of the Sherman-Morrison-Woodbury formula in Part A of the Supplementary Materials gives
\begin{equation*}
(r I+X^TX)^{-1}X^TY = X^T(r I+ XX^T)^{-1}Y. %\label{eq:3}.
\end{equation*}
Then letting $r\rightarrow 0$ gives
\begin{equation*}
(X^TX)^{+}X^TY = X^T(XX^T)^{-1}Y,
\end{equation*}
the HOLP estimator in (\ref{eq:holp}). Therefore, the HOLP estimator can be
seen as the other extreme of the ridge regression estimator by letting {\color{black}$r \to 0$}, as opposed to the marginal screening operator $X^TY$ in \cite{Fan:Lv:2008}
by letting $r \rightarrow \infty$. 
In real data analysis where $X$ and $Y$ are often centered (denoted by $\tilde X$ and $\tilde Y$), the ridge version of HOLP $\tilde X^T(r I+ \tilde X\tilde X^T)^{-1}\tilde Y$ is the correct estimator to use as $\tilde X\tilde X^T$ is now rank-degenerate. Theory on the ridge-HOLP is studied in next section and comparisons with HOLP are provided in the conclusion.

Clearly, HOLP is easy to implement and can be efficiently computed. Its computational complexity is
$O(n^2p)$, while SIS is $O(np)$. 
In the ultra-high dimensional cases where $p\gg n^c$ for any $c$, the
computational complexity of HOLP is only slightly worse than that of
SIS. Another advantage of HOLP is its scale invariance in the signal part
$X^T(XX^T)^{-1}X \beta$. In contrast, SIS is not scale-invariant in  $X^TX\beta$ and its performance may be affected by how
the variables are scaled. 

%=========================================================
\section{Asymptotic Properties}\label{theory}
\subsection{Conditions and assumptions}
Recall the linear model
\[
y = \beta_1x_1+\beta_2x_2+\cdots+\beta_px_p+\varepsilon,
\]
where $x=(x_1,\cdots,x_p)^T$ is the random predictor vector, $\varepsilon$ is
the random error and $y$ is the response. In this paper, $X$ denotes
the design matrix. 
Define $Z$ and $z$ respectively as
\begin{equation*}
Z = X\Sigma^{-1/2},~~ z=\Sigma^{-1/2}x,
\end{equation*}
where $\Sigma = cov(x)$ is the covariance matrix of the predictors. For
simplicity, we assume $x_j$'s to have mean $0$ and standard
deviation $1$, i.e, $\Sigma$ is the correlation matrix. It is easy to see that
the covariance matrix of $z$ is an identity matrix. {\color{black}The tail
  behavior of the random error has a significant impact on the screening
  performance. To capture that in a general form, we present the following
  tail condition as a characterization of different distribution families studied in \cite{Vershynin:2010}.
\begin{Def}
  {\bf ($q$-exponential tail condition)} A zero mean distribution $F$ is said to have a q-exponential tail, if any $N\geq 1$ independent random variables $\epsilon_i\sim F$ satisfy that for any $a\in \mathcal{R}^N$ with $\|a\|_2 = 1$, the following inequality holds
  \begin{align*}
    P\bigg(|\sum_{i=1}^N a_i\epsilon_i|>t\bigg)\leq \exp(1-q(t))
  \end{align*}
for any $t>0$ and some function $q(\cdot)$.
\end{Def}
For example, if $\epsilon_i\sim N(0, 1)$, then $\sum_{i=1}^N
a_i\epsilon_i\sim N(0, 1)$. With the classical bound on
the Gaussian tail, one can show that the Gaussian distribution admits a
square-exponential tail in that $q(t)=t^2/2$.
\par This characterization of the tail behavior is an analog of Proposition
5.10 and 5.16 in \cite{Vershynin:2010} and is very general. As shown in
\cite{Vershynin:2010}, we have {\color{black} $q(t) = O(t^2/K^2)$ for some constant $K$ depending on
$F$ if $F$ is sub-Gaussian including Gaussian, Bernoulli, and any bounded
random variables. And we have $q(t) = O(\min\{t/K, t^2/K^2\})$ if $F$ is sub-exponential including exponential, Poisson and $\chi^2$ distribution. Moreover, as shown in \cite{Zhao:Yu:2006}, any random variable satisfies $q(t) = 2k \log t + O(1)$ if it has bounded $2k^{th}$ moments for some positive integer $k$.

 Throughout this paper, $c_i$ and $C_i$ in various places are used to denote positive constants independent of the sample size and the dimensionality.
We make the following assumptions.
\begin{enumerate}
\item[A1.] The transformed $z$ has a spherically symmetric distribution and there exist some $c_1>1$ and $C_1>0$ such that 
\begin{align*}
P\bigg(\lambda_{max}(p^{-1}ZZ^T)>c_1\quad\mbox{or}\quad\lambda_{min}( p^{-1}ZZ^T)<c_1^{-1}\bigg)\leq e^{-C_1n},
\end{align*}
where $\lambda_{max}(\cdot)$ and $\lambda_{min}(\cdot)$ are the largest and
smallest eigenvalues of a matrix respectively. {\color{black} Assume $p>c_0n$ for some $c_0>1$}.
\item[A2.] The random error $\varepsilon$ has mean zero and standard deviation $\sigma$, and is independent of $x$. The standardized error $\varepsilon/\sigma$ has $q$-exponential tails with some function $q(\cdot)$.
\item[A3.] We assume that $var(y) = O(1)$ and that for some $\kappa\geq 0, \nu\geq 0, \tau\geq 0$ and $c_2,c_3,c_4>0$,
\[
\min_{j\in S}|\beta_j|\geq\frac{c_2}{n^\kappa},\quad\quad s\leq c_3n^\nu\quad\mbox{and}\quad \mbox{cond}(\Sigma)\leq c_4n^\tau,
\]
where $\mbox{cond}(\Sigma)=\lambda_{max}(\Sigma)/\lambda_{min}(\Sigma)$ is the conditional number of $\Sigma$.
\end{enumerate}
The assumptions are similar to those in \cite{Fan:Lv:2008} with a key
difference. The strong condition on the marginal correlation between $y$ and
those $x_j$ with $j \in {S}$ required by SIS to satisfy 
\begin{align}
\min_{j\in S} |cov(\beta_j^{-1}y,x_j)|\geq c_5 \label{eq:strong}
\end{align}
for some constant $c_5$, is no longer needed for HOLP.
This marginal correlation condition, as pointed out by \cite{Fan:Lv:2008}, can be easily violated if variables are correlated.
Assumption A1 is similar to but weaker than the concentration property in \cite{Fan:Lv:2008}. See also \cite{Bai:1999}.
 They require all the submatrices of $Z$ consisting of more than $cn$ rows for
 some positive $c$ to satisfy this eigenvalue concentration inequality, while
 here we only require $Z$ itself to hold. {\color{black}The proof in
   \cite{Fan:Lv:2008} can be directly applied to show that A1 is true for the
   Gaussian distribution, and the results in Section 5.4 of \cite{Vershynin:2010} show that the deviation inequality is also true for any sub-Gaussian distribution. It becomes clear later in the proof that the inequality in A1 is not a critical condition for variable screening. In fact, it can be excluded if the model is nearly noiseless.} In A3, $\kappa$ controls the speed at which nonzero $\beta_j$'s decay to 0, $\nu$ is the sparsity rate, and $\tau$ controls the singularity of the covariance matrix.

%===================
\subsection{Main theorems}
We establish the important properties of HOLP by presenting three theorems.
\begin{theorem}(Screening property)\label{thm:1}
Assume that A1--A3 hold. If we choose $\gamma_n$ such that 
\begin{align}
  \frac{p\gamma_n}{n^{1-\tau-\kappa}}\rightarrow 0\quad\mbox{ and }\quad\frac{p\gamma_n\sqrt{\log n}}{n^{1-\tau-\kappa}}\rightarrow\infty,\label{eq:gamma1}
\end{align}
then for the same $C_1$ specified in Assumption A1, we have
\[
P\bigg(\mathcal{M}_S\subset\mathcal{M}_{\gamma_n}\bigg) = 1-O\bigg\{\exp\bigg(-C_1\frac{n^{1-2\kappa-5\tau-\nu}}{2\log n}\bigg)\bigg\} - s\cdot \exp\bigg\{1-q\bigg(\frac{\sqrt{C_1}n^{1/2-2\tau-\kappa}}{\sqrt{\log n}}\bigg)\bigg\}.
\]
\end{theorem}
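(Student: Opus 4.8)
The plan is to split the HOLP estimator using the decomposition already recorded in the excerpt, $\hat\beta = X^T(XX^T)^{-1}X\beta + X^T(XX^T)^{-1}\epsilon =: \mu + \eta$, and to show that for every $j\in S$ the signal coordinate $\mu_j$ exceeds $2\gamma_n$ while the noise coordinate $\eta_j$ is at most $\gamma_n$. Since $\mathcal{M}_S\subset\mathcal{M}_{\gamma_n}$ fails only if some $j\in S$ has $|\hat\beta_j|<\gamma_n$, the bound $|\hat\beta_j|\geq|\mu_j|-|\eta_j|$ reduces the theorem to controlling $P(\exists j\in S:|\mu_j|<2\gamma_n)$ and $P(\exists j\in S:|\eta_j|>\gamma_n)$, and these two probabilities are designed to yield respectively the first and the third error terms in the statement.

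For the noise part I would condition on $X$, so that $\eta_j = a_j^T\epsilon$ is a fixed linear combination with $a_j$ the $j$-th row of $X^T(XX^T)^{-1}$. Using the singular value decomposition $X=VDU^T$ one gets $\|a_j\|_2^2 = (X^T(XX^T)^{-2}X)_{jj} = (UD^{-2}U^T)_{jj}\le (UU^T)_{jj}/\lambda_{\min}(XX^T)$. On the A1 eigenvalue event, which has probability at least $1-e^{-C_1n}$, the identity $XX^T = Z\Sigma Z^T$ together with the condition-number bound in A3 (and $\lambda_{\max}(\Sigma)\geq1$ for a correlation matrix) gives $\lambda_{\min}(XX^T)\gtrsim pn^{-\tau}$, while the diagonal-dominance of $UU^T$ used to motivate HOLP (Part C) bounds $(UU^T)_{jj}$ by a multiple of $(n/p)\,\mathrm{cond}(\Sigma)\lesssim n^{1+\tau}/p$; combining these yields $\|a_j\|_2\lesssim n^{1/2+\tau}/p$. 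The $q$-exponential tail condition A2 then gives $P(|\eta_j|>\gamma_n\mid X)\le\exp\{1-q(\gamma_n/(\sigma\|a_j\|_2))\}$, and a union bound over the $s$ indices of $S$, with $\gamma_n$ at the lower end of (\ref{eq:gamma1}), reproduces the term $s\exp\{1-q(\sqrt{C_1}\,n^{1/2-2\tau-\kappa}/\sqrt{\log n})\}$.

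For the signal part I would use $\mu = UU^T\beta$ together with $\beta_k=0$ for $k\notin S$, which confines the contamination to at most $s$ terms: $\mu_j = (UU^T)_{jj}\beta_j + \sum_{k\in S,\,k\neq j}(UU^T)_{jk}\beta_k$. The diagonal factor concentrates around $n/p$ up to the condition number, so the leading term is at least of order $n^{1-\tau-\kappa}/p$ by the signal-strength bound in A3. The contaminating sum I would bound by $\max_{k\neq j}|(UU^T)_{jk}|\cdot\sqrt{s}\,\|\beta\|_2$, using the off-diagonal concentration of $UU^T$ (each entry of order $\sqrt{n}/p$ up to logarithmic and condition-number factors), the sparsity bound $s\le c_3n^\nu$, and $\|\beta\|_2^2\le\mathrm{var}(y)/\lambda_{\min}(\Sigma)=O(n^\tau)$ from $\mathrm{var}(y)=O(1)$ and A3. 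The diagonal term dominates the contamination precisely when $2\kappa+5\tau+\nu<1$, which is exactly the positivity of the exponent in the theorem; carrying this out on the A1 event and accounting for the $\sqrt{\log n}$ in the off-diagonal tail produces the first error term $\exp\{-C_1n^{1-2\kappa-5\tau-\nu}/(2\log n)\}$.

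The main obstacle is the signal analysis, and specifically the diagonal-dominance of the random projection $P = X^T(XX^T)^{-1}X = UU^T$ when $\Sigma\neq I$. Spherical symmetry of $z$ makes the row space of $Z$ Haar-distributed and pins down the behaviour of $UU^T$ cleanly when $\Sigma=I$, but for general $\Sigma$ the row space of $X$ is the image under $\Sigma^{1/2}$ of a uniform subspace, so the concentration of the diagonal around $n/p$ and of the off-diagonals around $0$ is recovered only up to factors of the condition number of $\Sigma$. Propagating these condition-number factors carefully --- through both the lower bound on $(UU^T)_{jj}$ that protects the signal and the off-diagonal tail bound that controls the contamination --- is what generates the $\tau$ powers in the exponents and is the most delicate part of the argument; everything else is union bounds fed by the eigenvalue input from A1 and the tail input from A2.
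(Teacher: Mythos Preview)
Your plan is correct and matches the paper's proof: the same signal/noise split, the same bound on $\|a_j\|_2$ via the eigenvalue input from A1 and Lemma~4 (yielding the $q$-tail term after a union bound over $S$), and the same diagonal-minus-off-diagonal decomposition of $\mu_j$ controlled by Cauchy--Schwarz with $\sqrt{s}\,\|\beta\|_2\lesssim n^{(\nu+\tau)/2}$ (yielding the exponential term). One point of care: with $X=VDU^T$ the projection $UU^T$ equals the paper's $HH^T$, where $H=\Sigma^{1/2}\tilde U(\tilde U^T\Sigma\tilde U)^{-1/2}$ and $\tilde U$ comes from the SVD of $Z$; the paper makes this reparametrisation explicit because $H$ then follows the matrix angular central Gaussian distribution $\mathrm{MACG}(\Sigma)$ on $V_{n,p}$, and the off-diagonal concentration you flag as ``the most delicate part'' is obtained by decomposing $H=(T_1,H_2)$ on the Stiefel manifold (their Lemma~3/Propositions~3--4), showing $T_1\mid H_2$ is again angular-central-Gaussian, and reducing $e_i^THH^Te_j$ to a product $T_1^{(i)}T_1^{(j)}$ whose Gaussian representation gives the tail bound with a free parameter $\alpha$ traded off against the deviation probability; the choice $\alpha=\tfrac{5}{2}\tau+\kappa+\tfrac{\nu}{2}$ is what produces the exponent $1-2\kappa-5\tau-\nu$. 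Your sketch is right that condition-number tracking through $\Sigma$ is where the $\tau$'s come from, but the concrete mechanism the paper uses to get sharp off-diagonal control is this $\mathrm{MACG}$/Stiefel machinery rather than a direct perturbation of the $\Sigma=I$ case.
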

Note that we do not make any assumption on $p$ in Theorem \ref{thm:1} as long
as $p>c_0 n$, allowing $p$ to grow even faster than the exponential rate of the
sample size commonly seen in the literature. The result in Theorem \ref{thm:1} can be of independent interest. If we specialize the dimension to ultra-high dimensional problems,  we have the following
strong results. 
\begin{theorem}(Screening consistency)\label{thm:2}
  In addition to the assumptions in Theorem \ref{thm:1}, if $p$ further satisfies 
\begin{align}
\log p = o\bigg (\min\bigg\{\frac{n^{1-2\kappa-5\tau}}{2\log n}, q\bigg(\frac{\sqrt{C_1}n^{1/2-2\tau-\kappa}}{\sqrt{\log n}}\bigg)\bigg\}\bigg ),\label{eq:p}
\end{align}
then for the same $\gamma_n$ defined in Theorem \ref{thm:1} and the same $C_1$ specified in A1, we have
\begin{align*}
P\bigg(\min_{j\in S}|\hat\beta_j|>&\gamma_n>\max_{j\not\in S}|\hat\beta_j|\bigg)\\
& = 1 - O\bigg\{\exp\bigg(-C_1\frac{n^{1-2\kappa-5\tau-\nu}}{2\log n}\bigg) + \exp\bigg(1-\frac{1}{2}q\bigg(\frac{\sqrt{C_1}n^{1/2-2\tau-\kappa}}{\sqrt{\log n}}\bigg)\bigg)\bigg\}.
\end{align*}
Alternatively, we can choose a submodel $\mathcal{M}_d$ with $d \asymp
n^\iota$ for some $\iota \in (\nu, 1]$ such that
\begin{align*}
P\bigg(\mathcal{M}_S&\subset\mathcal{M}_d\bigg) = 1 - O\bigg\{\exp\bigg(-C_1\frac{n^{1-2\kappa-5\tau-\nu}}{2\log n}\bigg) + \exp\bigg(1-\frac{1}{2}q\bigg(\frac{\sqrt{C_1}n^{1/2-2\tau-\kappa}}{\sqrt{\log n}}\bigg)\bigg)\bigg\}.
\end{align*}
\end{theorem}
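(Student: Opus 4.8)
The plan is to build directly on Theorem \ref{thm:1}, whose proof already supplies the lower tail $\min_{j\in S}|\hat\beta_j|>\gamma_n$ with the stated probability. For the separation display the only genuinely new task is the matching upper bound $\max_{j\not\in S}|\hat\beta_j|<\gamma_n$, which must now hold uniformly over the up to $p$ null coordinates rather than the $s$ signal coordinates treated before. I would start from the exact identity $\hat\beta=UU^T\beta+X^T(XX^T)^{-1}\epsilon$ of Section 2 and, on a single high-probability event, control the signal piece $UU^T\beta$ and the noise piece $X^T(XX^T)^{-1}\epsilon$ separately.

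For the signal piece, fix $j\not\in S$; since $\beta_j=0$, the entry $(UU^T\beta)_j=\sum_{k\in S}(UU^T)_{jk}\beta_k$ is built entirely from off-diagonal entries of $UU^T$, so $|(UU^T\beta)_j|\le(\max_{j\ne k}|(UU^T)_{jk}|)\,\|\beta\|_1$. The diagonal-dominance analysis of Part C of the Supplementary Materials, already used for Theorem \ref{thm:1}, bounds these off-diagonal entries; the new point is that I must make this bound hold simultaneously over all the $S^c\times S$ blocks, of order $ps$ entries, rather than only the $S\times S$ block, which forces a union bound over $\sim p$ coordinates. The first half of condition (\ref{eq:p}), namely $\log p=o(n^{1-2\kappa-5\tau}/(2\log n))$, is precisely what absorbs this union, so that together with $\|\beta\|_1\le\sqrt s\,\|\beta\|_2$ and the bound on $\|\beta\|_2$ coming from $var(y)=O(1)$ and A3, the false-positive signal stays below $\gamma_n/2$ while the failure probability remains the unchanged term $\exp(-C_1 n^{1-2\kappa-5\tau-\nu}/(2\log n))$.

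For the noise piece, condition on $X$. The $j$-th coordinate is a fixed linear form $a_j^T\epsilon$, where $a_j$ is the $j$-th row of $X^T(XX^T)^{-1}$, and on the eigenvalue event of A1 its Euclidean norm $\|a_j\|_2$ is controlled uniformly in $j$, of order $n^{1/2}/p$ modulo a factor governed by $\mbox{cond}(\Sigma)\le c_4 n^\tau$. The $q$-exponential tail of A2 then gives $P(|a_j^T\epsilon|>\gamma_n)\le\exp\{1-q(\gamma_n/\|a_j\|_2)\}$, and for the chosen $\gamma_n$ the argument is at least $T:=\sqrt{C_1}\,n^{1/2-2\tau-\kappa}/\sqrt{\log n}$, the very value appearing in Theorem \ref{thm:1}. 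The essential new step is the union bound over the $\sim p$ null coordinates: writing $p\cdot\exp\{1-q(T)\}=\exp\{1+\log p-q(T)\}$ and invoking the second half of (\ref{eq:p}), which forces $\log p=o(q(T))$, the factor $\log p$ is absorbed so that $1+\log p-q(T)\le 1-\tfrac12 q(T)$ for all large $n$. This is exactly the mechanism that upgrades the $s\cdot\exp\{1-q(\cdot)\}$ of Theorem \ref{thm:1} to the cleaner $\exp\{1-\tfrac12 q(\cdot)\}$ recorded here.

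Intersecting the two events yields $\max_{j\not\in S}|\hat\beta_j|<\gamma_n<\min_{j\in S}|\hat\beta_j|$, which is the first display, with error probability the sum of the two contributions. The second conclusion is then immediate: on the same event every index in $S$ has $|\hat\beta_j|$ strictly larger than every index outside $S$, so the top-$d$ rule with any $s\le d\asymp n^\iota$, guaranteed by $\iota>\nu$ together with $s\le c_3 n^\nu$, retains all of $\mathcal{M}_S$. I expect the main obstacle to be making the control genuinely uniform over the $\sim p$ null coordinates: the per-coordinate off-diagonal and tail bounds are inherited from Theorem \ref{thm:1}, but verifying that the noise-coefficient norms $\|a_j\|_2$ and the off-diagonal entries $\max_{j\ne k}|(UU^T)_{jk}|$ stay small simultaneously over all coordinates on the A1 event, and that both $p$-fold accountings are absorbed by the two halves of (\ref{eq:p}) without degrading the rate, is where the bookkeeping of the $n^\tau$ conditioning loss from $\mbox{cond}(\Sigma)$ through the threshold window (\ref{eq:gamma1}) is most delicate.
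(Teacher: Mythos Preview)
Your proposal is correct and follows essentially the same route as the paper's proof: bound $\max_{j\notin S}|\xi_j|$ via the off-diagonal estimates of Lemma~\ref{Lemma:5} and $\max_j|\eta_j|$ via Lemma~\ref{Lemma:6}, each with a Bonferroni union over the $p$ coordinates that is absorbed by the two halves of condition~\eqref{eq:p}, and then combine with Theorem~\ref{thm:1}. One small correction to your bookkeeping: the order $n^{1/2+\tau}/p$ for $\|a_j\|_2$ is \emph{not} uniform in $j$ on the A1 event alone, since $\|a_j\|_2^2\le\lambda_{\max}((XX^T)^{-1})\,(HH^T)_{jj}$ and the diagonal factor still needs the per-$j$ concentration $(HH^T)_{jj}\le c_2' n^{1+\tau}/p$ from Lemma~\ref{Lemma:4}; the paper handles this with an additional $p\,e^{-C_1 n}$ union term, which is again negligible under~\eqref{eq:p}.
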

The first part of Theorem \ref{thm:2} states that if the number of predictors
satisfies the condition, the important and unimportant variables are separable
by simply thresholding the estimated coefficients in $\hat\beta$. The second part simply states that as long as we
choose a submodel with a dimension larger than that of the true model, we are guaranteed to choose a superset of the variables that contains the true model with probability close to one. If we choose $d=s$, then  
HOLP indeed selects the true model with an overwhelming probability.
This result seems surprising at first glance. It is, however, much weaker than the
consistency of the Lasso under the irrepresentable condition \citep{Zhao:Yu:2006}, as the latter
gives parameter estimation and variable selection at the same time, while our
screening procedure is only used for pre-selecting variables.
\par
When the error $\varepsilon$ follows a sub-Gaussian distribution, HOLP can achieve
screening consistency when the number of covariates increases exponentially
with the sample size.  
\begin{corollary}(Screening consistency for sub-Gaussian errors)
Assume A1--A3. If the standardized error follows a sub-Gaussian distribution, i.e., $q(t) = O(t^2/K^2)$ where $K$ is some constant depending on the distribution, then the condition on $p$ becomes
\begin{align*}
\log p = o\bigg(\frac{n^{1-2\kappa-5\tau}}{\log n}\bigg),
\end{align*}
and for the same $\gamma_n$ defined in Theorem \ref{thm:1} we have
\begin{align*}
P\bigg(\min_{i\in S}|\hat\beta_i| > \gamma_n > \max_{i\not\in S}|\hat\beta_i|\bigg)  = 1 - O\bigg\{\exp\bigg(-C_1\frac{n^{1-2\kappa-5\tau-\nu}}{2\log n}\bigg)\bigg\},
\end{align*}
and with $d \asymp n^\iota$ for some $\iota\in (\nu, 1]$, we have
\begin{align*}
  P\bigg(\mathcal{M}_S\subset\mathcal{M}_d\bigg) = 1 - O\bigg \{\exp\bigg(-C_1\frac{n^{1-2\kappa-5\tau-\nu}}{2\log n}\bigg)\bigg\}.
\end{align*}
\end{corollary}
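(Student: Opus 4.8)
The plan is to obtain the corollary purely as a specialization of Theorem \ref{thm:2}: I would substitute the sub-Gaussian form $q(t) = O(t^2/K^2)$ into the condition (\ref{eq:p}) on $p$ and into the error bound, and then carry out two elementary comparisons of powers of $n$ that decide (a) which term controls the admissible growth of $p$ and (b) which of the two error terms dominates. No new probabilistic estimate is needed — all the analytic content already resides in Theorem \ref{thm:2}, so the task reduces to bookkeeping.

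First I would evaluate $q$ at the recurring argument. Writing $t_n = \sqrt{C_1}\,n^{1/2-2\tau-\kappa}/\sqrt{\log n}$, the sub-Gaussian assumption yields
\[
q(t_n) \asymp \frac{t_n^2}{K^2} = \frac{C_1}{K^2}\cdot\frac{n^{1-4\tau-2\kappa}}{\log n},
\]
so $q(t_n)$ is, up to a constant, of order $n^{1-4\tau-2\kappa}/\log n$. Substituting this into (\ref{eq:p}), the minimum is taken over $n^{1-2\kappa-5\tau}/(2\log n)$ and a quantity of order $n^{1-4\tau-2\kappa}/\log n$. Comparing exponents,
\[
(1-2\kappa-5\tau)-(1-4\tau-2\kappa) = -\tau \leq 0,
\]
so the first quantity is the smaller for all large $n$, and the minimum is of order $n^{1-2\kappa-5\tau}/\log n$. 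Hence condition (\ref{eq:p}) collapses exactly to $\log p = o(n^{1-2\kappa-5\tau}/\log n)$, as claimed.

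Next I would simplify the error probability. Substituting $q(t_n)$ into the second term of the bound in Theorem \ref{thm:2} turns $\exp(1 - \tfrac{1}{2}q(t_n))$ into a term of order $\exp(-c\,n^{1-4\tau-2\kappa}/\log n)$ with $c$ depending on $C_1$ and $K$. Comparing its exponent against that of the first term $\exp(-C_1 n^{1-2\kappa-5\tau-\nu}/(2\log n))$,
\[
(1-2\kappa-5\tau-\nu)-(1-4\tau-2\kappa) = -\tau-\nu \leq 0,
\]
shows the first term decays more slowly and hence dominates. The sub-Gaussian error term is therefore absorbed, and the whole bound reduces to $1 - O\{\exp(-C_1 n^{1-2\kappa-5\tau-\nu}/(2\log n))\}$. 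Since the separation statement and the superset statement with $d \asymp n^\iota$ share a common error bound in Theorem \ref{thm:2}, both displayed conclusions follow simultaneously.

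The only delicate point — the nearest thing to an obstacle — is the boundary regime $\tau = \nu = 0$, where the two exponents in the last comparison coincide and dominance can no longer be read off from powers of $n$ alone. There the two terms are both of order $\exp(-\Theta(n^{1-2\kappa}/\log n))$, and one must compare the leading constants $C_1/2$ and $c$; their sum is still $O(\exp(-C' n^{1-2\kappa}/\log n))$ for some constant $C'$, so the big-$O$ conclusion holds after relabeling the constant in the exponent. Apart from this routine check, the corollary is an immediate consequence of Theorem \ref{thm:2}.
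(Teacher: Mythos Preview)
Your proposal is correct and follows essentially the same approach as the paper's own proof: substitute the sub-Gaussian tail $q(t)=C_0 t^2/K^2$ into Theorem~\ref{thm:2}, then compare the exponents $1-2\kappa-5\tau$ versus $1-2\kappa-4\tau$ to simplify both the condition on $p$ and the error bound. Your treatment is in fact slightly more careful than the paper's, which carries out the same substitution and comparison without commenting on the boundary case $\tau=\nu=0$.
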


\par
The next result is an extension of HOLP to the ridge
  regression. Recall the ridge regression estimate
\begin{align*}
\hat\beta(r) = (X^TX+rI_p)^{-1}X^TY = X^T(XX^T+rI_n)^{-1}Y.
\end{align*}
By controlling the diverging rate of $r$, a similar screening property as in Theorem \ref{thm:2} holds for the ridge regression estimate.
\begin{theorem}\label{thm:3}
(Screening consistency for ridge regression) Assume A1--A3 and that $p$
  satisfies \eqref{eq:p}. If the tuning parameter $r$ satisfies 
  $r = o(n^{1-(5/2)\tau-\kappa})$ and in addition to \eqref{eq:gamma1}, $\gamma_n$ further satisfies that $\gamma_n p/(rn^{(3/2)\tau})\rightarrow\infty$,
then for the same $C_1$ in A1, we have
\begin{align*}
P\bigg(\min_{i\in S}|\hat\beta_i(r)|>&\gamma_n>\max_{i\not\in S}|\hat\beta_i(r)|\bigg) \\
&= 1 - O\bigg\{\exp\bigg(-C_1\frac{n^{1-5\tau-2\kappa-\nu}}{2\log n}\bigg)+ \exp\bigg(1-\frac{1}{2}q\bigg(\frac{\sqrt C_1 n^{1/2-2\tau-\kappa}}{\sqrt{\log n}}\bigg)\bigg)\bigg\}.
\end{align*}
With $d \asymp n^\iota$ for some $\iota\in (\nu, 1]$ we have
\begin{align*}
  P\bigg(\mathcal{M}_S\subset\mathcal{M}_d\bigg) = 1 - O\bigg\{\exp\bigg(-C_1\frac{n^{1-2\kappa-5\tau-\nu}}{2\log n}\bigg) + \exp\bigg(-\frac{1}{2}q\bigg(\frac{\sqrt{C_1}n^{1/2-2\tau-\kappa}}{\sqrt{\log n}}\bigg)\bigg)\bigg\}.
\end{align*}
In particular, for any fixed positive constant $r$, the above results hold.
\end{theorem}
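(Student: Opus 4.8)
The plan is to realise the ridge estimator $\hat\beta(r)$ as a deterministic perturbation of the HOLP estimator $\hat\beta=\hat\beta(0)$ and then to borrow the separation already established in Theorem~\ref{thm:2}, controlling the size of the perturbation through the constraints on $r$ and $\gamma_n$. Writing the singular value decomposition $X=VDU^T$ of Section~2, with $U^TU=I_n$, and substituting $Y=X\beta+\epsilon$, a direct computation gives
\begin{align*}
\hat\beta(r)=U\,\mathrm{diag}\Big(\frac{d_i^2}{d_i^2+r}\Big)U^T\beta+U\,\mathrm{diag}\Big(\frac{d_i}{d_i^2+r}\Big)V^T\epsilon .
\end{align*}
Since the HOLP estimator is $\hat\beta(0)=UU^T\beta+UD^{-1}V^T\epsilon$, the ridge estimator decomposes into the HOLP signal $UU^T\beta$, a \emph{deterministic} signal bias $-rU\,\mathrm{diag}(1/(d_i^2+r))U^T\beta$, and a noise term whose coordinatewise coefficients $d_i/(d_i^2+r)$ are pointwise dominated by the HOLP coefficients $1/d_i$ for every $r\ge 0$. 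The strategy reduces to two claims: the signal bias is $o(\gamma_n)$ in sup-norm, and the ridge noise is stochastically no larger than the HOLP noise.

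The governing step is the sup-norm bound on the signal bias. Using $\|U^T\beta\|_2\le\|\beta\|_2$ and $\min_i(d_i^2+r)\ge\lambda_{\min}(XX^T)$,
\begin{align*}
\Big\| r\,U\,\mathrm{diag}\Big(\tfrac{1}{d_i^2+r}\Big)U^T\beta\Big\|_\infty\le\frac{r}{\lambda_{\min}(XX^T)}\,\|\beta\|_2 .
\end{align*}
On the event of Assumption~A1 we have $\lambda_{\min}(XX^T)=\lambda_{\min}(Z\Sigma Z^T)\ge c_1^{-1}p\,\lambda_{\min}(\Sigma)$; because $\Sigma$ is a correlation matrix $\lambda_{\max}(\Sigma)\ge 1$, so A3 forces $\lambda_{\min}(\Sigma)\ge(c_4n^\tau)^{-1}$; and $\mathrm{var}(y)=\beta^T\Sigma\beta+\sigma^2=O(1)$ gives $\|\beta\|_2^2\le\beta^T\Sigma\beta/\lambda_{\min}(\Sigma)=O(n^\tau)$. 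Combining these yields
\begin{align*}
\Big\| r\,U\,\mathrm{diag}\Big(\tfrac{1}{d_i^2+r}\Big)U^T\beta\Big\|_\infty=O\Big(\frac{r\,n^{(3/2)\tau}}{p}\Big)=o(\gamma_n),
\end{align*}
where the last equality is exactly the hypothesis $\gamma_n p/(rn^{(3/2)\tau})\to\infty$. The complementary requirement $r=o(n^{1-(5/2)\tau-\kappa})$ is what guarantees $rn^{(3/2)\tau}/p\ll n^{1-\tau-\kappa}/p$, the scale of $\min_{j\in S}|\hat\beta_j(0)|$, so that a threshold $\gamma_n$ obeying both \eqref{eq:gamma1} and the new constraint actually exists.

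For the stochastic term I would condition on $X$, so that by A2 each coordinate of the ridge noise is a fixed linear form $\sum_i U_{ji}\,\frac{d_i}{d_i^2+r}(V^T\epsilon)_i=\sum_k w^{(j)}_k\epsilon_k$, whose coefficient vector $w^{(j)}=V\,\mathrm{diag}(d_i/(d_i^2+r))U_{j\cdot}^T$ satisfies $\|w^{(j)}\|_2\le\|V\,D^{-1}U_{j\cdot}^T\|_2$, the coefficient norm of the corresponding HOLP coordinate, because $d_i/(d_i^2+r)\le 1/d_i$ and $V$ is orthogonal. Since the $q$-exponential tail bound is monotone in this $\ell_2$ norm, the noise bound established for HOLP in the proof of Theorem~\ref{thm:1} applies verbatim and is, if anything, tighter; hence the ridge noise contributes no probability term beyond those already present in Theorem~\ref{thm:2}.

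Finally I would assemble the pieces on the A1 event. Theorem~\ref{thm:2} supplies $\min_{j\in S}|\hat\beta_j(0)|>\gamma_n>\max_{j\notin S}|\hat\beta_j(0)|$ with the stated probability, and the two conditions in \eqref{eq:gamma1} (which differ by a factor $\sqrt{\log n}$) make both separation margins at least of order $\gamma_n$. Adding the deterministic $o(\gamma_n)$ bias and the no-larger ridge noise, a triangle inequality gives $\min_{j\in S}|\hat\beta_j(r)|>\gamma_n>\max_{j\notin S}|\hat\beta_j(r)|$, which is the asserted separation; the submodel statement for $d\asymp n^\iota$ follows identically, and a fixed constant $r$ trivially satisfies $r=o(n^{1-(5/2)\tau-\kappa})$ whenever the exponent is positive. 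The \emph{main obstacle} is the second paragraph: propagating $\mathrm{cond}(\Sigma)\le c_4n^\tau$ and the A1 eigenvalue bounds through the singular value decomposition to obtain the sharp exponent $n^{(3/2)\tau}$, and checking that the admissible window for $\gamma_n$ is nonempty. The remaining probabilistic content is inherited wholesale from Theorems~\ref{thm:1} and \ref{thm:2}.
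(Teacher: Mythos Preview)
Your argument is correct and in fact more direct than the paper's. The paper works in the SVD of $Z=X\Sigma^{-1/2}$, writes $X^T(XX^T+rI_n)^{-1}X=\Sigma^{1/2}UA^{-1}(I_n+rA^{-T}D^{-2}A^{-1})^{-1}A^{-T}U^T\Sigma^{1/2}$ with $A=(U^T\Sigma U)^{1/2}$, and then performs a Neumann--Taylor expansion in $r$ to obtain $HH^T+M$; it bounds $\lambda_{\max}(M)$ through the geometric series and derives $|e_i^TM\beta|=O(rn^{(3/2)\tau}/p)$. For the noise it bounds $\mathrm{var}(\eta_i(r)\mid X)$ by the HOLP variance plus an extra term involving $\lambda_{\max}(M)$ and shows the extra term is negligible. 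You instead take the SVD of $X$ itself, which makes the bias $rU(D^2+rI)^{-1}U^T\beta$ explicit; an operator-norm bound combined with $\lambda_{\min}(XX^T)\ge c_1^{-1}p\,\lambda_{\min}(\Sigma)$ and $\|\beta\|_2^2=O(n^\tau)$ gives the same $O(rn^{(3/2)\tau}/p)$ with no series expansion and no auxiliary convergence condition on $r$. Your noise argument---that $d_i/(d_i^2+r)\le 1/d_i$ forces $\|w^{(j)}(r)\|_2\le\|w^{(j)}(0)\|_2$ and hence the $q$-exponential tail bound is pointwise tighter---is likewise sharper than the paper's variance decomposition.

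One expository point: in your final paragraph you should not invoke the \emph{conclusion} of Theorem~\ref{thm:2} for $\hat\beta(0)$ and then ``add'' a perturbation, because $\hat\beta(r)$ and $\hat\beta(0)$ differ in the noise term, not just by the bias. What actually carries over are the \emph{ingredients} of that proof---the bounds on $\xi_j=(HH^T\beta)_j$ from Lemma~\ref{Lemma:5} and the tail bound on $\eta_j$ from Lemma~\ref{Lemma:6}---and you should reassemble them with $\eta_j(r)$ in place of $\eta_j(0)$ (legitimate by your domination argument) and the $o(\gamma_n)$ bias inserted. Since $UU^T=HH^T$ as projections, this is exactly what your decomposition $\hat\beta_j(r)=\xi_j+\text{bias}_j+\eta_j(r)$ sets up; just phrase the assembly that way.
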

Theorem \ref{thm:3} shows that ridge regression can also be used for screening
variables. We recommended to use ridge regression for screening when $XX^T$ is
close to degeneracy or when $n \approx p$. Otherwise, HOLP is suggested due to
its simplicity as it is tuning free. It is also easy to see that the ridge regression estimate has the
same computational complexity as the HOLP estimator. A ridge regression estimator also provides potential for extending the HOLP screening procedure to models other than in linear regression. 
\par
One practical issue for variable screening is how to determine the size of the
submodel. As shown in the theory, as long as the size of the submodel is
larger than the true model, HOLP preserves the non-zero predictors with an
overwhelming probability. Thus, if we can assume $s \asymp n^\nu$ for some $\nu<1$, we can choose a submodel with size $n$, $n-1$  or $n/\log n$ \citep{Fan:Lv:2008,LiG:etal:2012}, or using techniques such as extended BIC \citep{Chen:Chen:2008} to determine the submodel size \citep{Wang:2009}. For simplicity, we mainly use $n$ as the submodel size in numerical study, with some exploration on the extended BIC.

%=============================
\section{Numerical Studies}\label{simulation}
In this section, we provide extensive numerical experiments to evaluate the performance of HOLP. The structure of this section is organized as
follows. In Part 1, we compare the screening accuracy of HOLP to that
of (I)SIS in \cite{Fan:Lv:2008}, robust rank correlation based screening (RRCS, Li, et al. 2012), the forward regression (FR, Wang, 2009), and
the tilting \citep{Cho:Fryzlewicz:2012}. In Part 2, Theorem 2 and 3 are
numerically assessed under various setups. Because computational complexity is key to a successful screening, in Part 3, we document the
computational time of various methods. Finally, we evaluate the impact
of screening by comparing two-stage procedures where penalized
likelihood methods are employed after screening in Part 4. For implementation, we make use
of the existing R package ``SIS''  and ``tilting'', and write our own code in
R for forward regression.

Although not presented, we have evaluated two additional screeners. The
first is the Ridge-HOLP by setting $r=10$. We
found that the performance is similar to HOLP and therefore report its result only for Part 2. Motivated by the iterative SIS of \cite{Fan:Lv:2008}, we
also investigated an iterative version of HOLP by adding the variable
corresponding to the largest entry in HOLP, one at a time, to the chosen model. In
most cases studied, the screening accuracy of Iterative-HOLP is
similar to or slightly better than HOLP but the computational cost is
much higher. As computation efficiency is one crucial consideration and
also due to the space limit, we
decide not to include the results. 

\subsection{Simulation study I: Screening accuracy}
For simulation study, we set $(p,n) = (1000,100)$ or $(p,n) =
(10000,200)$ and let the random error follow $N(0,\sigma^2)$ with $\sigma^2$
adjusted to achieve different theoretical $R^2$ values defined as
$R^2=var(x^T\beta)/var(y)$ \citep{Wang:2009}.  We use either $R^2=50\%$ for
low or $R^2=90\%$ for high signal-to-noise ratio. 
We simulate covariates from multivariate normal distributions with mean zero
and specify the covariance matrix as the following six models. For each
simulation setup, $200$ datasets are used for $p=1000$ and $100$ datasets are for
$p=10000$. We report the probability of including
the true model 
by selecting a sub-model of size $n$. 
No results
are reported for tilting when $(p,n)=(10000, 200)$ due to its immense computational cost. 

\noindent \textbf{(i) \emph{Independent predictors}}.
This example is from \cite{Fan:Lv:2008} and \cite{Wang:2009} with
$S=\{1,2,3,4,5\}$. We generate $X_i$ from a standard multivariate
normal distribution with independent components. The coefficients are specified as
\[
\beta_i = (-1)^{u_i}(|N(0,1)|+4\log n/\sqrt n), ~\text{where}~u_i\sim Ber(0.4)~\text{for}~i\in S\mbox{ and } \beta_i=0~ \text{for}~i\not\in S.
\]
\par
\noindent \textbf{(ii) \emph{Compound symmetry}}. 
This example is from Example I in \cite{Fan:Lv:2008} and Example 3 in
\cite{Wang:2009}, where all predictors are equally correlated with correlation
$\rho$, and we set $\rho=0.3,~0.6$ or
$0.9$. The coefficients are set to be $\beta_i = 5$ for $i=1,...,5$ and $\beta_i
= 0$ otherwise.

\noindent \textbf{(iii) \emph{Autoregressive correlation}}. 
This correlation structure arises when the predictors are naturally ordered,
for example in time series. The example used here is Example 2 in
\cite{Wang:2009}, modified from the original example in \cite{Tibs:1996}. More
specifically, each $X_i$ follows a multivariate normal distribution, with
$cov(x_i,x_j) = \rho^{|i-j|}$, where $\rho = 0.3, ~0.6$, or $0.9$. The coefficients are specified as
\[
\beta_1 = 3, ~\beta_4 = 1.5, ~\beta_7 = 2,~\text{and}~ \beta_i = 0 \mbox{ otherwise}.
\]

\noindent \textbf{(iv) \emph{Factor models}}. 
Factor models are useful for dimension reduction. Our example is taken from
\cite{Meinshausen:Buhlmann:2010} and 
\cite{Cho:Fryzlewicz:2012}. Let $\phi_j,
j=1,2,\cdots,k$ be independent standard normal random variables. We set
predictors as $x_i = \sum_{j=1}^k \phi_j f_{ij}+\eta_i$, where $f_{ij}$ and
$\eta_i$ are generated from independent standard normal distributions. The
number of the factors is chosen as $k=2, 10$ or $20$ in the simulation while
the coefficients are specified the same as in Example (ii).

\noindent \textbf{(v) \emph{Group structure}}. Group structures depict a
special correlation pattern. 
This example is similar to Example 4 of \cite{Zou:Hastie:2005}, for which we
allocate the 15 true variables into three groups. Specifically, the predictors are generated as
\[
x_{1+3m} = z_1+N(0,\delta^2),~x_{2+3m}= z_2+N(0,\delta^2),~
x_{3+3m} = z_3+N(0,\delta^2),
\]
where $m=0,1,2,3,4$ and $z_i\sim N(0,1)$ are independent. The parameter $\delta^2$ controlling
the strength of the group structure is fixed at 0.01 as in
\cite{Zou:Hastie:2005},  0.05 or 0.1 for a more comprehensive evaluation. The coefficients are set as
\[
\beta_i = 3, ~i=1,2,\cdots,15;~ \beta_i = 0, ~i=16,\cdots,p.
\]

\noindent \textbf{(vi) \emph{Extreme correlation}}. We generate this example
to illustrate the performance of HOLP in extreme cases motivated by the 
challenging Example 4 in \cite{Wang:2009}. As in \cite{Wang:2009}, assuming $z_i\sim N(0,1)$ and $w_i\sim N(0,1)$, we generate the important $x_i$'s as $x_i =
(z_i+w_i)/\sqrt{2}, i=1,2,\cdots,5$ and $x_i = (z_i+\sum_{j=1}^5 w_j)/2,
i=16,\cdots,p$. Setting the coefficients the same as in Example (ii), one can
show that the correlation between the response and the true
predictors is no larger than {\color{black}two thirds} of that between the response
and the false predictors. Thus, the response variable is more
correlated to a large number of unimportant variables. To make the example
even more difficult, we assign another two unimportant predictors to be highly
correlated with each true predictor. Specifically, we let $x_{i+s}, x_{i+2s}
= x_i+ N(0,0.01), ~i=1,2,\cdots,5$. As a result, it will be extremely difficult
to identify any important predictor.

\begin{table}[!htbp]
\footnotesize
  \centering
  \caption{Probability to include the true model when $(p, n) = (10000, 200)$}
  \begin{tabular}{l|l|l|cccccc}
\hline
 &Example &  & HOLP & SIS & RRCS & ISIS & FR & Tilting \\
\hline\hline
\multirow{14}{*}{$\mathcal{R}^2 = 50\%$}&
(i) Independent predictors&  & 0.900 &0.940& 0.890 & 0.620 & 0.570 & \\
\cline{2-9}
&\multirow{3}{*}{(ii) Compound symmetry}
                   & $\rho = 0.3$ & 0.310 & 0.310 & 0.250 & 0.060 & 0.020 &  --- \\
&                  & $\rho = 0.6$ &  0.020 & 0.020 &0.010 & 0.000 & 0.000 &  --- \\
&                  & $\rho = 0.9$ &  0.000 & 0.000 &0.000 & 0.000 & 0.000 &  --- \\
\cline{2-9}
&\multirow{3}{*}{(iii) Autoregressive}& $\rho = 0.3$ & 0.810 & 0.860 & 0.760 & 0.740 & 0.740 &--- \\
&                   & $\rho = 0.6$ & 1.000 & 1.000 & 1.000 & 0.580 & 0.680 &--- \\
&                  & $\rho = 0.9$ & 1.000 & 1.000 & 1.000 & 0.480 & 0.390 &--- \\
\cline{2-9}
&\multirow{3}{*}{(iv) Factor models}& $k = 2$ & 0.450 & 0.010 & 0.010& 0.020 & 0.240 & ---\\
&                      & $k = 10$&  0.050 & 0.000 &0.000& 0.000 & 0.010 & ---\\
&                       & $k = 20$& 0.030 & 0.000 &0.000& 0.000 & 0.000 & ---\\
\cline{2-9}
&\multirow{3}{*}{(v) Group structure} & $\delta^2 = 0.1$ & 1.000 & 1.000 &1.000& 0.000 & 0.000 & ---\\
&                      & $\delta^2 = 0.05$&  1.000 & 1.000 &1.000 & 0.000 & 0.000 & ---\\
&                      & $\delta^2 = 0.01$&  1.000 & 1.000 &1.000 & 0.000 & 0.000 & ---\\
\cline{2-9}
&(vi) Extreme correlation  &                  &  0.580 & 0.000 &0.000 & 0.000 & 0.040 &---\\
\hline
\multirow{14}{*}{$\mathcal{R}^2 = 90\%$}&
(i) Independent predictors &  & 1.000 & 1.000 &1.000 &1.000 & 1.000\\
\cline{2-9}
&\multirow{3}{*}{(ii) Compound symmetry}
                   & $\rho = 0.3$ & 1.000 & 0.820 & 0.710 & 1.000 & 1.000 & ---\\
&                  & $\rho = 0.6$ &  0.960 & 0.550 & 0.320 & 0.420 & 0.960 & ---\\
&                  & $\rho = 0.9$ &  0.100 & 0.030 & 0.000 & 0.000 & 0.000 & --- \\
\cline{2-9}
&\multirow{3}{*}{(iii) Autoregressive}& $\rho = 0.3$ & 0.990 & 0.990&0.980 & 1.000 & 1.000 & ---\\
&                   & $\rho = 0.6$ & 1.000 & 1.000&1.000 & 1.000 & 1.000 &  ---\\
&                  & $\rho = 0.9$ & 1.000 & 1.000 &1.000& 1.000 & 1.000 & ---\\
\cline{2-9}
&\multirow{3}{*}{(iv) Factor model}& $k = 2$ & 0.990 & 0.010 & 0.020 & 0.350 & 0.990 & ---\\
&                      & $k = 10$&  0.850 & 0.000 & 0.000 & 0.060 & 0.700 & ---\\
&                       & $k = 20$& 0.540 & 0.000 & 0.000 & 0.010 & 0.230 & ---\\
\cline{2-9}
&\multirow{3}{*}{(v) Group structure} & $\delta^2 = 0.1$ & 1.000 & 1.000 &1.000 & 0.000 & 0.000 & ---\\
&                      & $\delta^2 = 0.05$& 1.000 & 1.000 &1.000& 0.000 & 0.000 &  ---\\
&                      & $\delta^2 = 0.01$& 1.000 & 1.000 &1.000& 0.000 & 0.000 &  ---\\
\cline{2-9}
&(vi) Extreme correlation  &                  & 1.000 & 0.000 &0.000 &0.000 & 0.210 &--- \\
\hline
  \end{tabular}\label{tab:2}
\end{table}

\noindent \textbf{\emph{Brief summary of the simulation results}}
\par
The results for $(p,n)=(1000,100)$ are shown in Table \ref{tab:1} in the Supplementary Materials and
those for $(p,n)=(10000,200)$ are in Table \ref{tab:2}. We summarize the results in following three points. First, when the signal-to-noise ratio
is low, HOLP, RRCS and SIS outperform ISIS, FR and Tilting in Example (i), (ii), (iii), and (v). For the factor model (iv), neither SIS nor RRCS works while HOLP gives the best performance. In addition, HOLP seems to be the only effective screening method for the extreme correlation model (vi). The poor performance of ISIS, forward regression and tilting in selected scenarios of Example (ii), (iii), and (v) might be caused by the low signal-to-noise ratio, as these methods all depend on the marginal residual deviance that is unreliable when the signal is weak. In particular, they require each true predictor to give the smallest
marginal deviance at some step in order to be selected, {\color{black} imposing a strong condition for achieving satisfactory screening results}. By contrast, SIS, RRCS  and HOLP
select the sub-model in one step and thus eliminate this strong
requirement. The poor performance of SIS and RRCS in Example (iv) and (vi) might be caused by the violation of marginal correlation assumption \eqref{eq:strong} as discussed before. 

Second, when the signal-to-noise ratio increases to $90\%$, significant improvements are seen for all methods. Remarkably, HOLP remains competitive and
achieves an overall good performance. There are occasions where forward
regression and tilting perform slightly better than HOLP, most of which, however, involve only relatively simple structures. The superior
performance of forward regression and tilting under simple structures mainly {\color{black} benefit from their one-at-a-step screening strategy and the high signal-to-noise ratio}. In the simulation study that is not presented here, we also
implemented an iterative version of HOLP, which achieves a similar performance
as forward regression and HOLP in most cases. Yet this strategy fails to a large extent for the group-structured correlation in Example (v). 

Another important feature of HOLP, RRCS and (I)SIS is the flexibility in adjusting the sub-model size. Unlike forward regression and tilting, no limitation is imposed on the sub-model size for HOLP, RRCS and (I)SIS. {\color{black} There might be an advantage to choose a
sub-model of size greater than $n$, so that a better estimation or prediction accuracy can be achieved.}
For example, in Example (ii) when $(p,n,\rho,R^2) = (10000, 200, 0.9,
90\%)$, by selecting $200$ covariates, HOLP preserves the true model
with probability $10\%$. This probability is improved to around
$50\%$ if the sub-model size increases to $1000$, a ten-fold reduction in
dimensionality still. In contrast to HOLP, it is impossible for forward regression and tilting to select a 
sub-model of size larger than $n$ due to {\color{black} the} lack of degrees of freedom. 
\par
{\color{black}
As shown in Section 3,
HOLP relaxes the marginal correlation condition
\eqref{eq:strong} required by SIS. We verify this statement by comparing HOLP and SIS
in a scenario where some important predictors are jointly correlated but
marginally uncorrelated with the response. We take the setup in Example
(ii) with the following model specification
\begin{align*}
  y = 5x_1+5x_2+5x_3+5x_4-20\rho x_5 +\varepsilon.
\end{align*}
It is easy to verify that $cov(x_5, y) = 0$, i.e., $x_5$ is marginally
uncorrelated with $y$. We simulate 200 data sets with $(p, n) = (1000, 100)$
or $(p, n) = (10000, 200)$ with different values of $\rho$. The probability
of including the true model is plotted in Fig \ref{fig:margcomp}. We see
that HOLP performs universally better than SIS for any $\rho$.  
\begin{figure}[!htbp]
  \centering
  \includegraphics[height = 7.6cm]{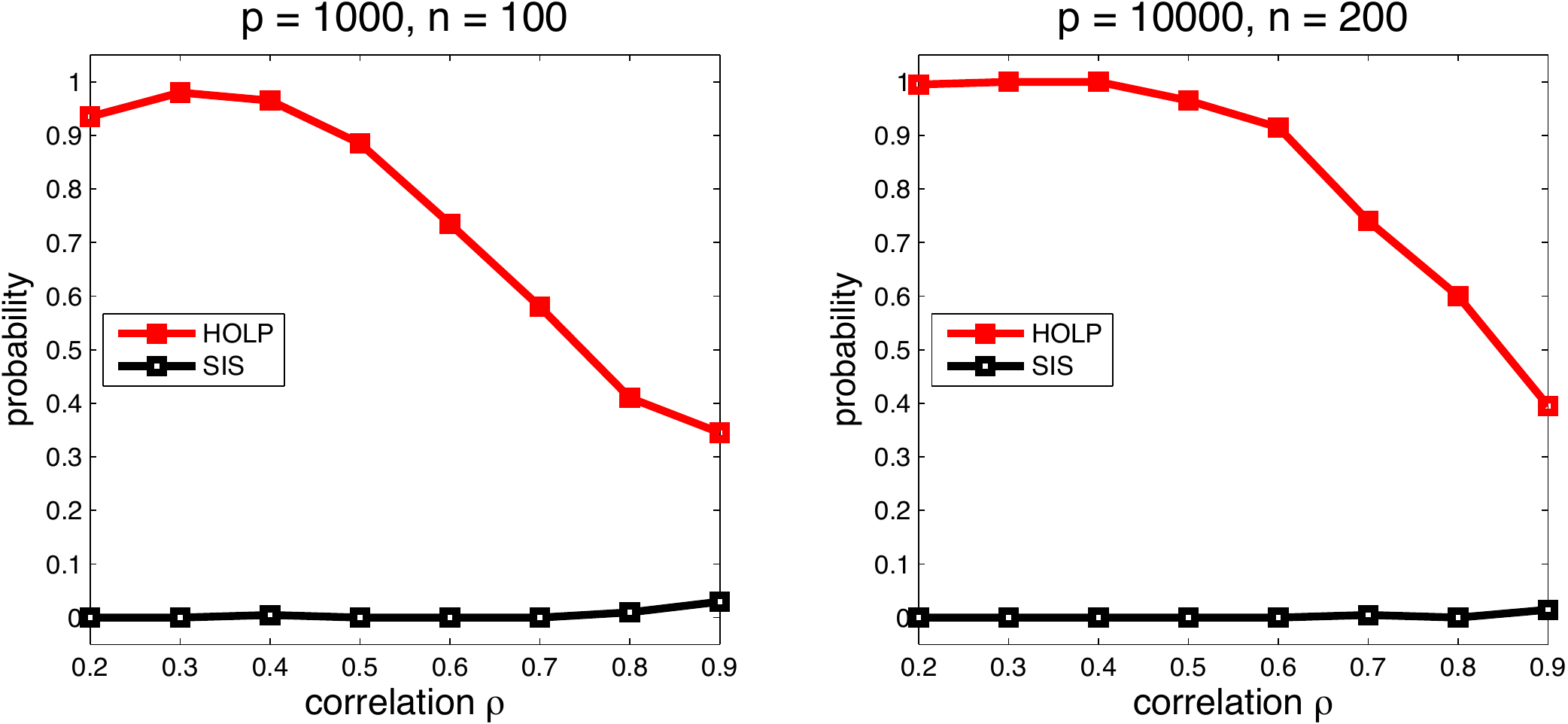}
  \caption{Probability of including the true model for the example where $x_5$ is
    marginally uncorrelated but jointly correlated with $y$.}
  \label{fig:margcomp}
\end{figure}
}
%==============================================================
\subsection{Simulation study II: Verification of Theorem 2 and 3}
Theorem 2 and 3 state that HOLP and its ridge regression counterpart are able
to separate the important variables from those unimportant ones with a large
probability, and thus {\color{black} guarantee the effectiveness of variable screening}. In
particular, the two theorems indicate that by choosing a sub-model of size $s$,
we are guaranteed to exactly select the true model. In
this study, we revisit the examples in Simulation I by varying $n, p, s$ to
provide numerical evidences for this claim. Since there are multiple setups, for convenience
we only look at Example (ii),
(iii), (iv) and (v) by fixing the parameters at $\rho = 0.5$, $k = 5$,
$\delta^2 = 0.01$ for $R^2 = 90\%$ and $\rho = 0.3, k = 2, \delta^2 = 0.01$
for $R^2 = 50\%$ respectively. Because Example (vi) is difficult, in
order to demonstrate the two theorems for moderate sample sizes, we relax the
correlation between the {\color{black} important and unimportant} predictors from 0.99 to 0.90 and use a
different growing speed for the number of parameters for this case. To be precise, we set
\begin{align*}
  p =\left\{\begin{aligned}
&4\times\lfloor\exp(n^{1/3})\rfloor & \mbox{ for examples except Example (vi)}\\
&20\times\lfloor\exp(n^{1/4})\rfloor&  \mbox{ for Example (vi)}
\end{aligned}\right.
\end{align*}
and
\begin{align*}
  s = \left\{\begin{aligned}
&1.5\times \lfloor n^{1/4}\rfloor & \mbox{ for } R^2 = 90\%\\
&\lfloor n^{1/4}\rfloor & \mbox{ for } R^2 = 50\%
\end{aligned}\right.,
\end{align*}
where $\lfloor \cdot\rfloor$ is the floor function.
We vary the sample size from 50 to 500 with an increment of 50 and simulate
50 data sets for each example. The probability that $\min_{i\in
  S}|\hat\beta_i|>\max_{i\not\in S}|\hat\beta_i|$ is plotted
in Figure \ref{fig:holp} for HOLP and in Figure \ref{fig:rholp} in Part D of
the Supplementary Materials for the ridge HOLP with $r=10$.

\begin{figure}[!htbp]
  \centering
  \includegraphics[width = 16cm]{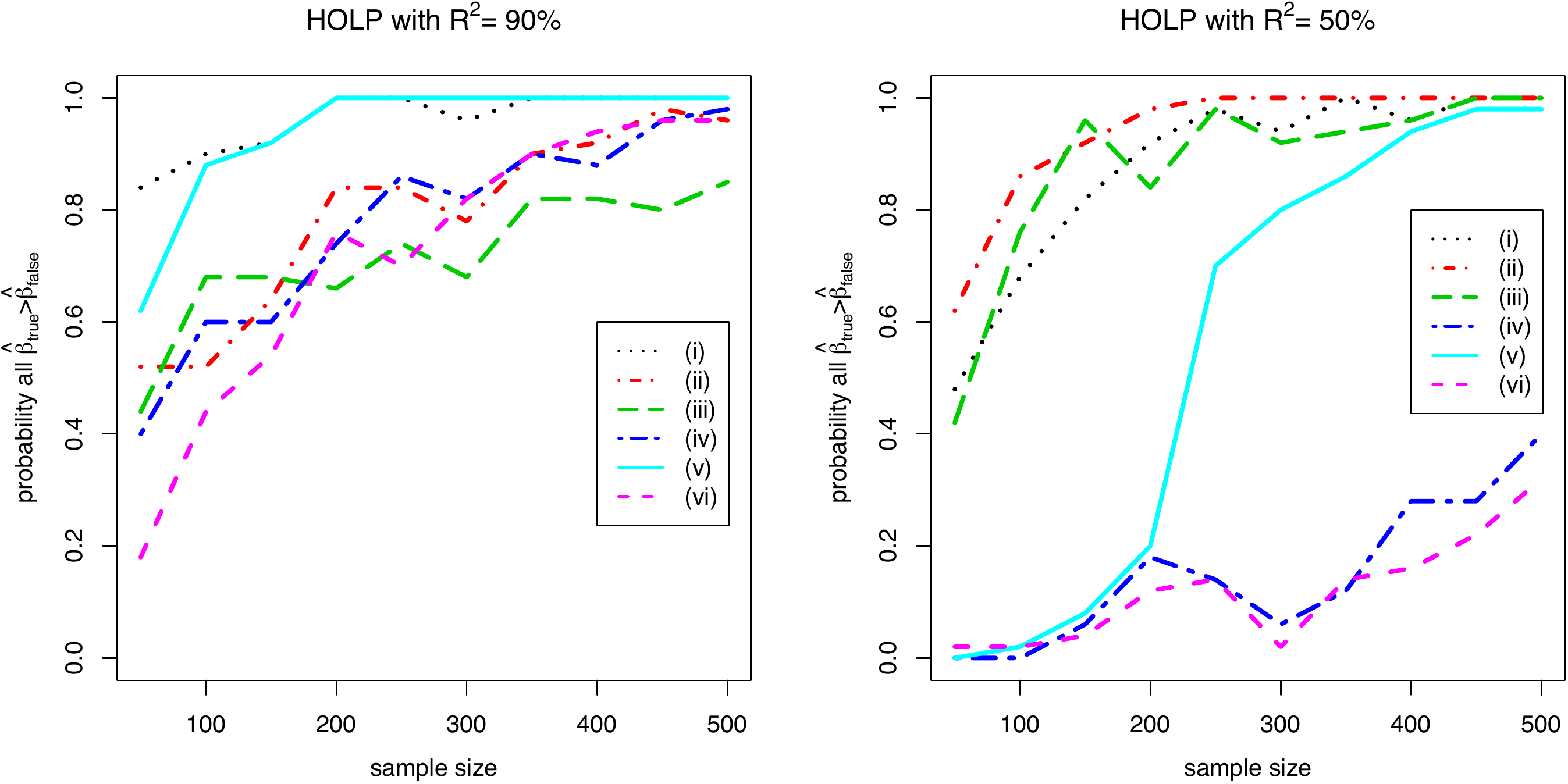}
  \caption{HOLP: $P(\min_{i\in S}|\hat\beta_i|>\max_{i\not\in
      S}|\hat\beta_i|)$ versus the sample size  $n$.}
  \label{fig:holp}
\end{figure}
\par
The increasing trend of the selection probability is explicitly illustrated in Fig \ref{fig:holp}.
Although not plotted,  the
probability for example (vi) when $R^2=50\%$ also tends to one if
the sample size is further increased. 
Thus, we conclude that the probability of correctly identifying the importance rank tends to
one as the sample size increases. A rough exponential pattern can be
recognized from the curves, corresponding to the
rate specified in {\color{black} Corollary 1}. In addition, the probability of identifying the true model is quite similar between HOLP and Ridge-HOLP, echoing the statement
we made at the beginning of Section 4.

\subsection{Simulation study III: Computation efficiency}\label{sect:time}
Computation efficiency is a vital concern for variable screening algorithms, as the primary motivation of screening is to assist variable selection methods, so that they are scalable to large data sets. In this section, we use Example (ii) in Simulation I with
$\rho = 0.9$, $n=100$ and $R^2=90\%$ to illustrate the computation efficiency of
HOLP as compared to SIS, ISIS, forward regression, and tilting. In Figure \ref{fig:1}, we
fix the data dimension at $p=1000$, vary the select sub-model size from 1 to 100, and record the runtime for each
method, while in Figure \ref{fig:3}, we fix the sub-model size at $d=50$ and
vary the data dimension $p$ from 50 to 2500.  Note that the R package 'SIS' computes
$X^TY$ in an inefficient way. For a fair comparison, we write our own code for
computing $X^TY$. Because the computation complexity of tilting is
significantly higher than all other methods, a separate plot excluding tilting is provided for each situation.
\begin{figure}[!htbp]
\begin{center}
\includegraphics[width=16cm]{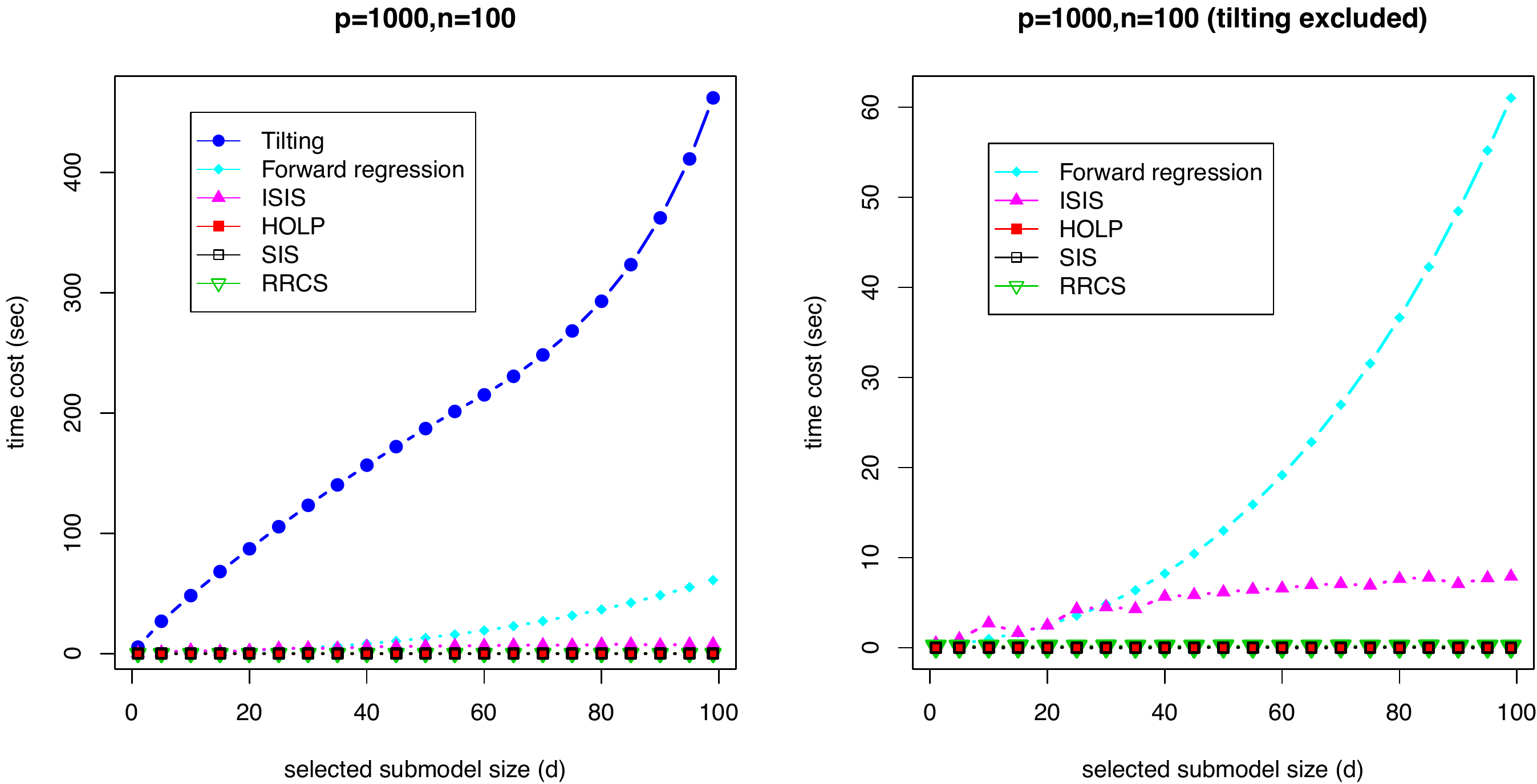}
\caption{Computational time against the submodel size when $(p,n)=(1000,100)$.}
\label{fig:1}
\end{center}
\end{figure}

\begin{figure}[!htbp]
\begin{center}
\includegraphics[width=16cm]{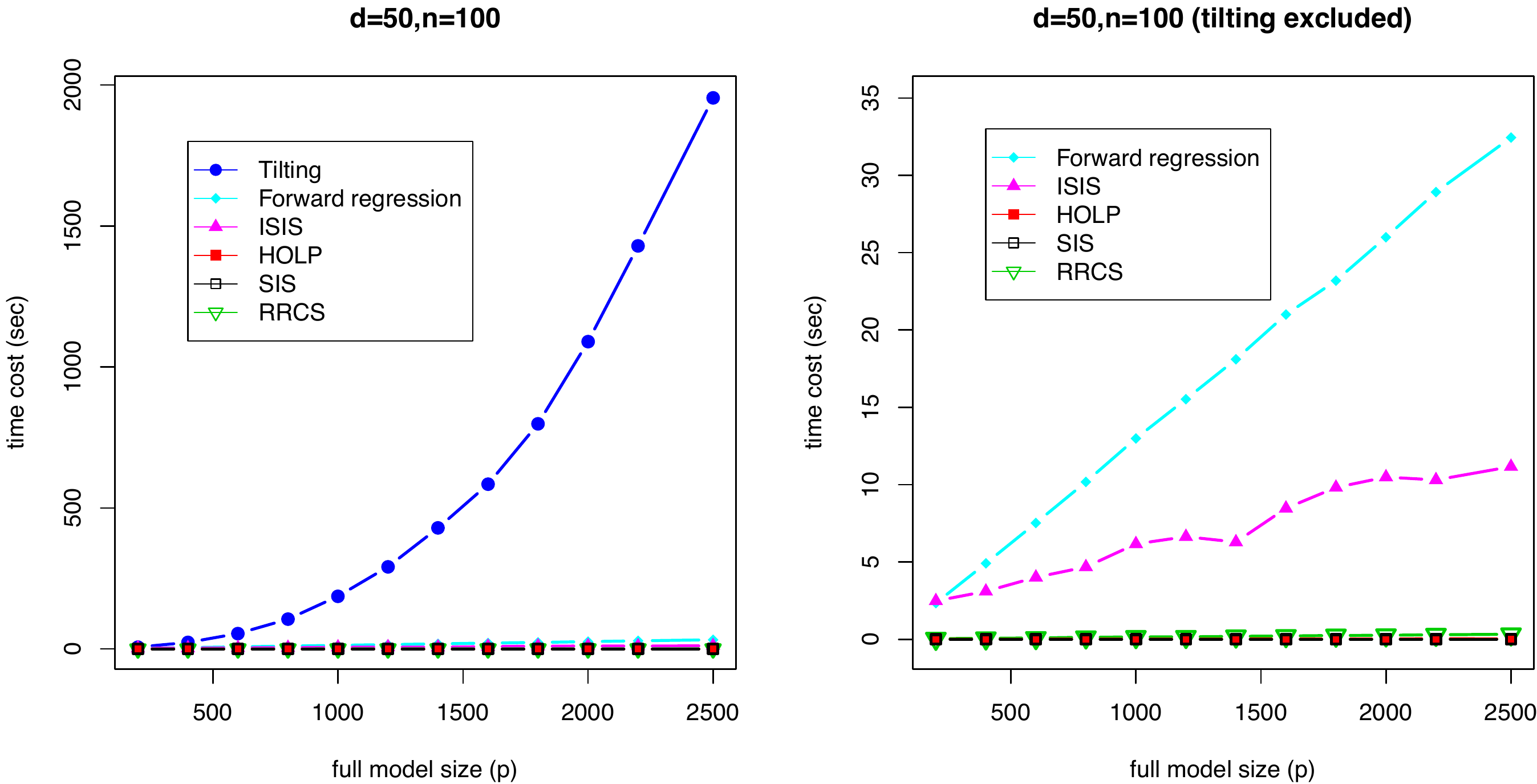}
\caption{Computational time against the total number of the covariates when $(d,n)=(50,100)$. }
\label{fig:3}
\end{center}
\end{figure}

\par
{\color{black}As can be seen from the figures, HOLP, RRCS and SIS are the three most efficient
algorithms. RRCS is actually slightly slower than HOLP and SIS, but not significantly. On the other hand,
tilting demands the heaviest computational cost, followed by
forward regression and ISIS. This result can be interpreted as
follows. When $p$ is fixed as in Figure \ref{fig:1}, HOLP, RRCS and SIS only
incurs a linear complexity on sub-model size $d$,
whereas the complexity of forward regression is approximately quadratic and tilting is $O(k^2d^2 + k^3d)$ where $k$ is the size of active set \citep{Cho:Fryzlewicz:2012}. When $d$ is fixed as in Figure \ref{fig:3}, the
computational time for all methods other than tilting is linearly
increasing on the total number of predictors $p$, while the time for tilting
increasing quadratically with $p$. 
{\color{black}We thus conclude that SIS, RRCS and HOLP are the three preferred
methods in terms of computational complexity. }

%==============================================================
\subsection{Simulation study IV: Performance comparison after screening}
Screening as a preselection step aims at assisting the second stage refined analysis on parameter estimation and variable selection. To fully investigate the impact of screening on the second stage analysis, we evaluate and compare different two-stage procedures where screening is followed by variable selection methods such as Lasso or SCAD, as well as these one-stage
variable selection methods themselves. In this section, we look at the six examples in Simulation study I, where the parameters are fixed at $\rho = 0.6, k=10,
\delta^2 = 0.01$ and $\mathcal{R}^2 = 90\%$. {\color{black} To choose the tuning parameter in Lasso or SCAD, we
make use of the extended BIC \citep{Chen:Chen:2008, Wang:2009} to determine a final model that minimizes
\begin{equation*}
EBIC = \log \frac{RSS}{n} + \frac{d}{n}(\log n+2\log p),
\end{equation*}
where $d$ is the number of the predictors in the full model or selected sub-model. 
For all two-stage methods, we first choose a sub-model of size $n$, or use extended BIC to determine the sub-model size (only for HOLP-EBICS), and then apply either Lasso or SCAD to the sub-model to output the final result.
We compare HOLP-Lasso, HOLP-SCAD, HOLP-EBICS (abbreviation for HOLP-EBIC-SCAD) to SIS-SCAD, RRCS-SCAD,
ISIS-SCAD, Tilting, FR-Lasso, FR-SCAD, as well as Lasso and SCAD. The reason
we only apply SCAD to SIS and ISIS is that SCAD is shown to achieve
the best performance in the original paper \citep{Fan:Lv:2008}.
}
\par
Finally, the performance is evaluated for each method in terms of the following
measurements: the number of false negatives (\#FNs, i.e., wrong zeros), the
number of false positives (\#FPs, i.e., wrong predictors), the probability
that the selected model contains the true model (Coverage), the probability
that the selected model is exactly the true model (Exact, i.e., no false
positives or negatives), the estimation error (denoted as
$\|\hat\beta-\beta\|_2$),  the average size of the selected model (Size), and
the algorithm's running time (in seconds per data set).

\par
As in Simulation study I, we simulate 200 data sets for $(p, n) = (1000,
100)$ and 100 data sets for $(p, n) = (10000, 200)$. There will be no results
for tilting in the latter case because of the immense computational cost. The
results for SIS is provided by the package 'SIS', except for the computing
time, which is recorded separately by calculating $X^TY$ directly as discussed
before. All the simulations are run in single thread on PC with an I7-3770
CPU, where we use the package ``glmnet'' for the Lasso and ``ncvreg'' for the SCAD.

\begin{savenotes}
\begin{table}[!htbp]
\scriptsize
  \centering
  \caption{Model selection results for $(p, n) = (10000, 200)$}\label{tab:iv2}
  \begin{tabular}{l|l|ccccccl}
\hline
    example & &\#FNs& \#FPs & Coverage(\%) & Exact(\%) & Size & $||\hat\beta - \beta||_2$& time (sec) \\
\hline\hline
\multirow{9}{*}{\specialcell{(i) Independent\\predictors\\ ~\\ $s = 5, ||\beta||_2 = 3.8$}} 
                             & Lasso      & 0.00  & 0.20 &100.0 &78.0 & 5.20 & 1.21 & 1.15 \\
                             & SCAD       & 0.00  & 0.00 &100.0 &100.0& 5.00 & 0.26 & 18.79\\
                             & ISIS-SCAD  & 0.00  & 0.00 &100.0 &100.0& 5.00 & 0.26 & 211.6\\
                             & SIS-SCAD   & 0.04  & 0.00 &96.0  &96.0 & 4.96 & 0.42 & 0.88\\
                             & RRCS-SCAD  & 0.07  & 0.00 &93.0  &93.0 & 4.93 & 0.53 & 18.12\\
                             & FR-Lasso   &  0.00 & 0.32 &100.0 &78.0 & 5.32 & 1.04 & 5246.6\\
                             & FR-SCAD    &  0.00 & 0.00 &100.0 &100.0& 5.00 & 0.26 & 5247.2\\
                             & HOLP-Lasso & 0.02  & 0.20 &98.0  &78.0 & 5.18 &1.21 &  0.45\\
                             & HOLP-SCAD  &  0.02 & 0.00 &98.0  &98.0 & 4.98 & 0.29 & 0.97\\
                             & HOLP-EBICS  & 0.19  & 0.00 & 82.0 &82.0 &4.81  &0.55  &1.19\\
                             & Tilting &---\\
\hline
\multirow{9}{*}{\specialcell{(ii) Compound\\symmetry\\~\\ $s = 5, ||\beta||_2 = 8.6$ }}     
                             & Lasso      & 1.56  & 2.41  &34.0 & 0.0 &5.85 &9.00 & 1.51\\
                             & SCAD       & 0.01  & 3.65  &99.0 & 6.0 &8.64 & 4.10 &251.5\\
                             & ISIS-SCAD  &1.20   & 5.25  &38.0 & 15.0&9.05 & 7.24 &  465.1\\
                             & SIS-SCAD   &1.51   & 6.19  &26.0 & 19.0&9.68 & 7.97 & 3.84\\
                             & RRCS-SCAD  &1.72   & 6.27  &22.0 & 17.0&9.55 & 8.23 & 25.26\\
                             & FR-Lasso   & 0.14  & 6.89  &86.0 & 0.0 &11.95 & 7.61 & 6904.2\\
                             & FR-SCAD    & 0.20  & 3.35  &85.0 & 6.0 &8.15 & 4.80 & 6909.3\\
                             & HOLP-Lasso &  1.24 & 2.65  &45.0 & 4.0 & 6.41 &8.55 & 0.60\\
                             & HOLP-SCAD  &0.04   & 3.61  &96.0 & 10.0&8.57 & 2.79 & 4.30\\
                             & HOLP-EBICS  & 0.25  & 1.22  &77.0 & 45.0& 4.97 &3.72 & 1.43\\
                             & Tilting&---\\
\hline
\multirow{9}{*}{\specialcell{(iii) Autoregressive\\correlation\\~\\ $s = 3, ||\beta||_2 = 3.9$}}        
                             & Lasso      & 0.00 & 1.06 &100.0 &0.0  & 4.06 & 0.62 & 2.41\\
                             & SCAD       & 0.00 & 0.00 &100.0 &100.0& 3.00 & 0.16 & 34.53\\
                             & ISIS-SCAD  & 0.00 & 0.00 &100.0 &100.0& 3.00 & 0.16 & 342.8\\
                             & SIS-SCAD   & 0.00 & 0.00 &100.0 &100.0& 3.00 & 0.16 & 1.44\\
                             & RRCS-SCAD  & 0.00 & 0.00 &100.0 &100.0& 3.00 & 0.16 & 23.13\\
                             & FR-Lasso   & 0.00 & 1.13 &100.0 &0.0  &4.13 & 0.56 & 10251.2\\
                             & FR-SCAD    & 0.00 & 0.00 &100.0 &100.0&3.00 & 0.16 & 10252.1\\
                             & HOLP-Lasso & 0.00 & 1.12 &100.0 &0.0  & 4.12 & 0.60 & 1.10\\
                             & HOLP-SCAD  & 0.00 & 0.00 &100.0 &100.0& 3.00 & 0.16 & 1.78\\
                             & HOLP-EBICS  & 0.00 & 0.00 &100.0 &100.0& 3.00 & 0.16 & 2.21\\
                             & Tilting&---\\
\hline
\multirow{9}{*}{\specialcell{(iv) Factor Models\\~\\$s = 5, ||\beta||_2 = 8.6$}}     
                             & Lasso      & 4.79 & 6.17  &0.0  &0.0 & 6.38 &11.32 & 1.46\\
                             & SCAD       & 0.11 & 21.08 &91.0 &4.0 &25.97 & 9.41 & 76.30\\
                             & ISIS-SCAD  & 3.09 & 18.06 &3.0  &3.0 &19.97 & 14.27 &409.8\\
                             & SIS-SCAD   & 4.49 & 7.95 &0.0  &0.0 &8.46 & 12.45 &3.34\\
                             & RRCS-SCAD  & 4.47 & 8.16 &0.0  &0.0 &8.69 & 12.50 &25.80\\
                             & FR-Lasso   & 3.54 & 4.45  &13.0 &0.0 &5.91 & 19.40 & 7340.1\\
                             & FR-SCAD    & 1.12 & 21.89 &58.0 &6.0 & 25.77 & 17.18 &7341.8\\
                             & HOLP-Lasso & 3.91 & 6.00  &1.0  &0.0 &7.09 & 11.36 &  0.58\\
                             & HOLP-SCAD  & 0.54 & 14.02 &68.0 &7.0 & 18.48 & 8.83 & 3.00 \\
                             & HOLP-EBICS  & 1.70 & 9.30  &25.0 &10.0& 22.60 & 10.56& 1.69\\ 
                             & Tilting&---\\
\hline
\multirow{9}{*}{\specialcell{(v) Group\\structure\\~\\ $s = 5, ||\beta||_2 = 19.4$}} 
                             & Lasso          & 7.82   & 0.10   &0.0 &0.0 &7.27 & 13.14 &1.51 \\
                             & SCAD           & 11.99  & 115.40&0.0 &0.0 & 118.44 & 25.22&65.67 \\
                             & ISIS-SCAD      & 12.00  & 26.06  &0.0 &0.0 & 29.06 &22.70 &490.4\\
                             & SIS-SCAD       & 11.98  & 21.73  &0.0 &0.0 & 24.75 & 22.68 &2.19\\
                             & RRCS-SCAD      & 11.98  & 21.13  &0.0 &0.0 & 24.15 & 22.77 &20.13\\
                             & FR-Lasso       & 11.75  & 0.89   &0.0 &0.0 & 4.14 & 19.43 & 6916.9\\
                             & FR-SCAD        & 11.96  & 21.50 &0.0 &0.0 & 24.54 & 25.40 &6918.0\\
                             & HOLP-Lasso & 7.75   & 0.11   &0.0 &0.0 & 7.36 & 13.14 &0.62\\
                             & HOLP-SCAD      &  11.98 & 21.95  &0.0 &0.0 & 24.97 & 22.48 & 2.46 \\
                             & HOLP-EBICS      &  11.98 & 0.92   &0.0 &0.0 & 3.94 & 23.23&1.43\\
                             & Tilting&---\\
\hline
\multirow{9}{*}{\specialcell{(vi) Extreme\\correlation\\~\\ $s = 5, ||\beta||_2 = 8.6$}} 
                             & Lasso      & 1.06  & 11.46 & 0.0  &0.0  & 15.40 & 8.60 & 1.34\\
                             & SCAD       & 0.00  & 0.00  & 100.0&100.0& 5.00 & 0.54 & 105.2\\
                             & ISIS-SCAD  & 4.97  & 3.81  & 0.0  &0.0  & 3.85 & 13.18& 507.4\\
                             & SIS-SCAD   & 4.93  & 2.67  & 0.0  &0.0  & 2.74 & 12.10 &3.55\\
                             & RRCS-SCAD  & 5.00  & 2.70  & 0.0  &0.0  & 2.70 & 12.10 &27.75\\
                             & FR-Lasso   & 2.41  & 6.32  & 3.0  &0.0  & 8.89 &10.30 & 7317.6\\
                             & FR-SCAD    &  2.54 & 2.54  & 3.0  &3.0  & 5.00 & 11.21 &7319.2\\
                             & HOLP-Lasso & 0.89  & 10.72 & 42.0 &0.0  & 14.83 & 7.82 &0.43\\
                             & HOLP-SCAD  & 0.00  & 0.00  & 100.0&100.0& 5.00 & 0.54 & 2.70\\
                             & HOLP-EBICS  & 0.70  & 0.70  & 40.0 &40.0 & 5.00 & 2.17 &1.51\\
                             & Tilting&---\\
\hline
  \end{tabular}
\end{table}
\end{savenotes}

%%%%%%%%%%%%

%%%%%%%%%%%%
Results of the nine methods are shown in Table \ref{tab:iv1} in the Supplementary Materials and Table \ref{tab:iv2}. As can be seen, most methods work well for data sets with relatively simple structures, for example, the independent and
autoregressive correlation structure; likewise, most of them fail for complicated ones, for example, the factor model with 10 factors. The results can be summarized in four main points. First,
HOLP-SCAD achieves the smallest or close to the smallest estimation error for
most cases. Second, SCAD has the overall best coverage probability and the
smallest number of false negatives, followed closely by HOLP-SCAD and FR-SCAD. One potential caveat is, however, the high false positives for SCAD in
many cases. {\color{black}Third, using extended BIC to determine the sub-model size can significantly reduce the false positive rate, although such gain is achieved at the expense of a higher false negative rate and a lower coverage probability. It is also worth noting that using extended BIC can further speed up two-stage methods. Finally, Lasso, HOLP-Lasso, HOLP-SCAD, RRCS-SCAD and SIS-SCAD are the most efficient algorithms in terms of computation.} 

The simulation results suggest that HOLP can not only speed up Lasso and SCAD, but also maintain or even improve their performance in model selection and estimation. In particular, HOLP-SCAD achieves an overal attractive performance. 
We thus conclude that HOLP is an efficient and effective variable
screening algorithm in helping down-stream analysis for parameter estimation and variable selection. 
%%%%%%%%%%%%%
\subsection{A real data application}
This data set was used to study the mammalian eye diseases by
\cite{Scheetz:etal:2006} where gene expressions on the eye tissues from 120
twelve-week-old male F2 rats were recorded. Among the genes under study, of
particular interest is a gene coded as TRIM32 responsible for causing Bardet-Biedl syndrome \citep{Chiang:etal:2006}.

\par
Following \cite{Scheetz:etal:2006}, we choose 18976 probe sets as they
exhibited sufficient signal for reliable analysis and at least 2-fold
variation in expressions. The intensity values of these genes are evaluated in
the logarithm scale and normalized using the method in
\cite{Irizarry:etal:2003}. Because TRIM32 is believed to be only linked to a
small number of genes, we confine our attention to the top 5000 genes with the
highest sample variance. For comparison, the nine methods in simulation study
IV are examined via 10-fold cross validation {\color{black} and the selected models are refitted via ordinary least squares for prediction purposes}. We report the means and the
standard errors of the mean square errors for prediction and the final
chosen model size in Table \ref{tab:6}. As a reference, we also report these
values for the null model.
\begin{table}[!ht]
\begin{center}
\caption{The 10-fold cross validation error for nine different methods}\label{tab:6}
  \begin{tabular}{l| ccc}
  \hline\hline
  Methods & Mean errors & Standard errors & Final size (median)\\
  \hline
  Lasso& 0.011 & 0.009 & 5\\
  SCAD& 0.015 & 0.011  & 4\\
  \hline
  ISIS-SCAD& 0.012 & 0.006 & 4\\
  SIS-SCAD& 0.010 & 0.004 & 3\\
  RRCS-SCAD& 0.010 & 0.006 & 2\\
  \hline
  FR-Lasso& 0.016 & 0.019 & 4\\
  FR-SCAD & 0.014 & 0.014 & 3\\
  \hline
  HOLP-Lasso&0.012 & 0.006 & 5\\
  HOLP-SCAD& 0.010 & 0.006 & 5\\
  HOLP-EBICS& 0.010 & 0.006 & 5\\
  \hline
  tilting& 0.017 & 0.021 &6\\
  \hline
  NULL   & 0.021 & 0.025 & 0\\
  \hline
  \end{tabular}
\end{center}
\end{table}
\par
{\color{black} From Table \ref{tab:6},  
it can be seen that models selected by HOLP-SCAD, SIS-SCAD and RRCS-SCAD achieve the smallest cross-validation error. It might also be interesting to compare the selected genes by using the full data set, of which a detailed
discussion is provided in Part E and Table \ref{tab:7} in the Supplementary Materials. In particular, gene BE107075 is chosen by all methods other than
tilting. As reported in \cite{Breheny:Huang:2013}, this gene is also selected
via group Lasso and group SCAD.

%%%%%%%%%%%%%
\section{Conclusion}
In this article, we propose a simple, efficient, easy-to-implement, and flexible method HOLP for screening variables in high dimensional feature space. Compared to other
one-stage screening methods such as SIS, HOLP does not require the strong marginal correlation assumption. Compared to iterative screening methods such as forward regression and tilting, HOLP can be more efficiently computed. Thus, it seems that HOLP holds the two keys at the same time for successful screening: flexible conditions and attractive computation efficiency. Extensive simulation studies show that the performance of HOLP is very competitive, often among the best approaches for screening variables under diverse circumstances with small demand on computational resources. Finally, HOLP is naturally connected to the familiar least-squares estimate for low dimensional data analysis and can be understood as the ridge regression estimate when the ridge parameter goes to zero. 
\par
When $n \approx p$, concerns are raised for the HOLP as $XX^T$ is close to degeneracy. While the screening matrix $X^T(XX^T)^{-1}X = UU^T$ remains diagonally dominant, the noise term $X^T(XX^T)^{-1}\epsilon =UD^{-1}V^T\epsilon$ explodes in magnitude and may dominate the signal, affecting the performance of HOLP. We illustrate this phenomenon via Example (ii) in Section 4.1 with $p$ fixed at 1000 and $\mathcal{R}^2 = 90\%$ for various sample sizes. The probability of including the true model by retaining a sub-model with size $\min\{n, 100\}$ is plotted in Fig \ref{fig:p=n}  (left). 
\begin{figure}[!htbp]
  \centering
  \includegraphics[height=5.3cm]{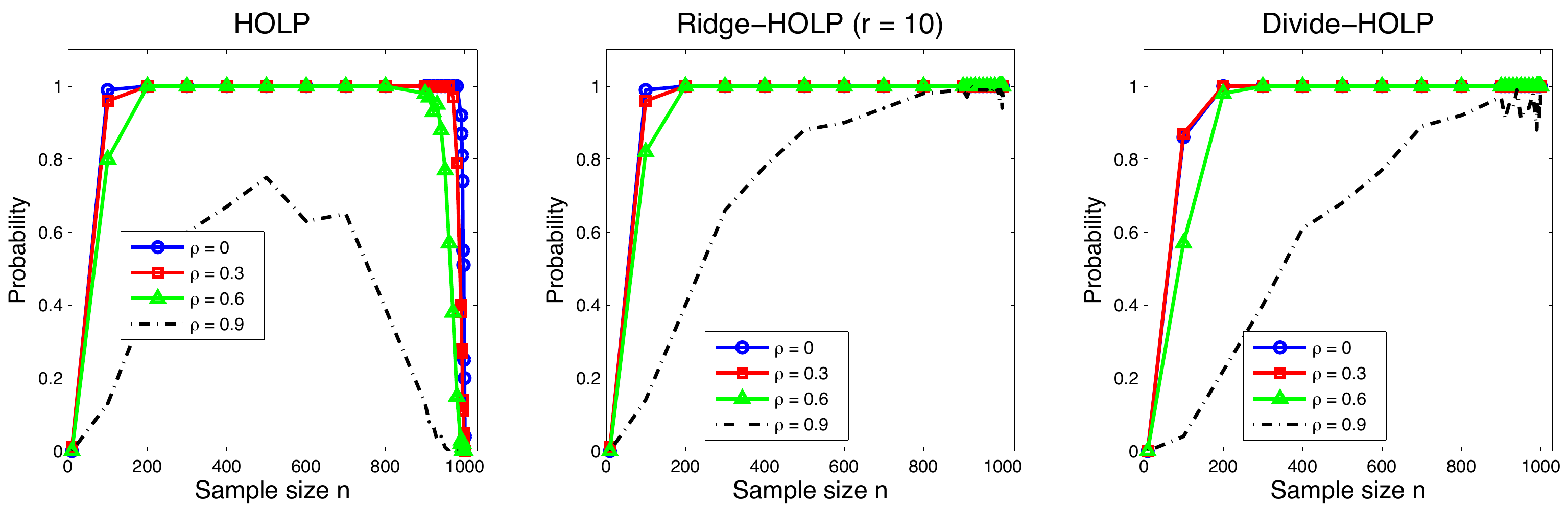}
  \caption{Performance of HOLP, Ridge-HOLP and Divide-HOLP for $p = 1000$.}
  \label{fig:p=n}
\end{figure}
It can be seen that the screening accuracy of HOLP deteriorates whenever $n$ becomes close to $p$. We propose two methods to overcome this issue.
\begin{itemize}
\item 
\textbf{Ridge-HOLP}: As presented in Theorem 3, one approach is to use
Ridge-HOLP by introducing the ridge parameter $r$ to control the explosion of
the noise term. In fact, one can show that $\sigma_{max}(X^T(XX^T +
rI_n)^{-1})\leq r^{-1}\sigma_{max}(X)$, where $\sigma_{max}(X)\approx
O(\sqrt{p} + \sqrt{n}) \approx O(\sqrt{n})$ with large probability. See
\cite{Vershynin:2010}. We verify the performance of Ridge-HOLP via the same
example and plot the result with $r=10$ in Fig \ref{fig:p=n}  (middle).
\item
\textbf{Divide-HOLP}: A second approach is to employ the
``divide-conquer-combine'' strategy, where we randomly partition the data into
$m$ subsets, apply HOLP on each to obtain $m$ reduced models (with a size of
$\min\{n/m, 100/m\}$), and combine the results. This approach ensures
Assumption A1 is satisfied on each subset and can be shown to achieve the same
convergence rate as if the data set were not partitioned. In addition, it
reduces the computational complexity from $O(n^2p)$ to $O(n^2p/m)$. The result
on the same example is shown in Fig \ref{fig:p=n} (right) with $m = 2$. The
performance of Divide-HOLP is on par with Ridge-HOLP when $n$ is close to $p$.
\end{itemize}

\par
There are several directions to further the study on HOLP. First, it is of great interest to extend HOLP to deal with a larger class of models
such as generalized linear models. To address this problem, we may make use of
a ridge regression version of HOLP and study extensions of the results
presented in this paper. Second, we may want to study the screening problem for generalized additive models where nonlinearity is present. 
Third, HOLP may
be used in compressed sensing \citep{Donoho:2006} as in \cite{Xue:Zou:2011} for exactly recovering the
important variables if the sensing matrix satisfies some 
properties. Fourth, we are currently applying the proposed framework for
screening variables in Gaussian graphical models. 
The results will be reported elsewhere.

\section{Acknowledgement}
We thank the three referees, the Associate Editor and the Joint Editor for their constructive comments. Wang’s research was partly supported by grant NIH R01-ES017436 from the National Institute of Environmental Health Sciences.
%%%%%%%%%%%%%

%%%%%%%%%%%%%
\renewcommand{\baselinestretch}{1}
\normalsize
\bibliographystyle{apalike}

\begin{thebibliography}{}


\bibitem[Bai(1999)]{Bai:1999}
Bai, Z. D. (1999). Methodologies in spectral analysis of large dimensional random matrices, A
review. {\em Statistica Sinica}, 9, 611--677.


\bibitem[Barut et al.(2012)]{Barut:etal:2012}
Barut, E., Fan, J., and Verhasselt, A. (2012). Conditional sure independence screening. Technical report. Princeton University, Princeton,  New Jersey, USA.

{\color{black}
\bibitem[Breheny and Huang(2013)]{Breheny:Huang:2013}
Breheny, P. and Huang, J. (2013). Group descent algorithms for nonconvex penalized linear and
logistic regression models with grouped predictors. {\em Technical report.}
}

\bibitem[Candes and Tao(2007)]{Candes:Tao:2007}
Candes, E. and Tao, T. (2007). The Dantzig selector: Statistical estimation when $p$ is much larger than $n$ (with discussion).
\textit{Annals of  Statistics}, 35, 2313--2351.

{\color{black}
\bibitem[Chen and Chen(2008)]{Chen:Chen:2008}
Chen, J. and Chen, Z. (2008). Extended Bayesian information criteria for model selection with large model spaces.
{\em Biometrika}, 95, 759--771.
}


{\color{black}\bibitem[Chiang et al.(2006)]{Chiang:etal:2006}
Chiang, A., Beck, J., Yen, H., Tayeh, M., Scheetz, T., Swiderski, R., Nishimura, D., Braun, T., Kim, K., Huang, J., Elbedour, K., Carmi, R., Slusarski, D., Casavant, T.,  Stone, E., and Sheffield, V. (2006). Homozygosity mapping with SNP arrays identifies TRIM32, an E3 ubiquitin ligase, as a Bardet–Biedl syndrome gene (BBS11). Proceedings of the National Academy of Sciences, 103, 6287-6292.
}


\bibitem[Chikuse(2003)]{Chikuse:2003}
Chikuse, Y. (2003). Statistics on Special Manifolds. Lecture Notes in Statistics. Springer-Verlag,
Berlin.

\bibitem[Cho and Fryzlewicz(2012)]{Cho:Fryzlewicz:2012}
Cho, H. and Fryzlewicz, P. (2012). High-dimensional variable selection via tilting.
{\em Journal of the Royal Statistical Society Series B}, 74, 593--622.

\bibitem[Donoho(2006)]{Donoho:2006}	
Donoho, D. L. (2006). Compressed sensing.
{\em IEEE Transactions on Information Theory}, 52, 1289--1306.

\bibitem[Fan and Fan(2008)]{Fan:Fan:2008}
Fan, J. and Fan, Y. (2008).
High-dimensional classification using features annealed independence rules.
{\em The Annals of Statistics}, 36, 2605--2637.

\bibitem[Fan, et al.(2011)]{Fan:etal:2011}
Fan, J., Feng, Y. and Song, R. (2011).
Nonparametric independence screening in sparse ultra-high dimensional additive models. 
{\em Journal of American Statistical Association}, 116, 544-557. 


\bibitem[Fan and Li(2001)]{Fan:Li:2001}
  Fan, J. and Li, R. (2001).
  Variable selection via nonconcave penalized likelihood and its
  oracle properties.
  {\em Journal of the American Statistical Association}, 96, 1348--1360.

\bibitem[Fan and Lv(2008)]{Fan:Lv:2008}
Fan, J. and Lv, J. (2008). Sure independence screening for ultrahigh dimensional feature space (with discussion).
{\em Journal of the Royal Statistical Society B}, 70, 849--911.


\bibitem[Fan, et al.(2009)]{Fan:etal:2009}
Fan, J., Samworth, R. J., and Wu, Y. (2009). Ultrahigh dimensional feature selection: Beyond the linear model.
{\em Journal of Machine Learning Research}, 10, 1829--1853.

\bibitem[Fan and Song(2010)]{Fan:Song:2010}
Fan, J. and Song, R. (2010). Sure independence screening in generalized linear models with NP-dimensionality.
{\em Annals of Statistics}, 6, 3567--3604.

\bibitem[Gorst-Rasmussen and Scheike(2013)]{Gorst:Scheike:2013}	
Gorst-Rasmussen, A. and Scheike, T. (2013). Independent screening for single-index hazard rate models with ultrahigh dimensional features.
{\em Journal of the Royal Statistical Society B}, 75, 217--245.

\bibitem[Hall and Miller(2009)]{Hall:Miller:2009}
Hall, P. and Miller, H. (2009). Using generalized correlation to effect variable selection in very high dimensional problems.
{\em Journal of Computational and Graphical Statistics}, 18, 533--550.

\bibitem[Hall, et al.(2009)]{Hall:etal:2009}
Hall, P., Titterington, D. M., and Xue, J. H. (2009). Tilting methods for assessing the influence of components in a classifier.
{\em Journal of the Royal Statistical Society B}, 71, 783--803.

\bibitem[Huang, et al.(2008)]{Huang:etal:2008}	
Huang, J., Horowitz, J. L., and Ma, S. (2008). Asymptotic properties of bridge estimators in sparse high-dimensional regression models.
{\em The Annals of Statistics}, 36, 587--613.

{\color{black}
\bibitem[Irizarry, et al.(2003)]{Irizarry:etal:2003}
Irizarry, R. A., Hobbs, B., Collin, F., Beazer-barclay, Y. D., Antonellis, K. J.,
Scherf, U. and Speed, T. P. (2003). Exploration, normalization, and summaries of high density
oligonucleotide array probe level data.
{\em Biostatistics}, 4, 249--264.
}

{\color{black}
\bibitem[Li, Peng, et al.(2012)]{LiG:etal:2012}
Li, G., Peng, H., Zhang, J., and Zhu, L. (2012). Robust rank correlation based screening. {\em The Annals of Statistics}, 40, 1846--1877.
}

\bibitem[Li, Zhong, et al.(2012)]{Li:etal:2012}
Li, R., Zhong, W., and Zhu, L. (2012). Feature screening via distance correlation learning.
{\em Journal of American Statistical Association}, 107, 1129--1139.

\bibitem[Meinshausen and B\"uhlmann(2008)]{Meinshausen:Buhlmann:2008}
Meinhausen, N. and B\"uhlmann, P. (2008). High dimensional graphs and variable selection with the Lasso.
{\em The Annals of Statistics}, 34, 1436 -- 1462.

\bibitem[Meinshausen and B\"uhlmann(2010)]{Meinshausen:Buhlmann:2010}
Meinhausen, N. and B\"uhlmann, P. (2010). Stability selection (with discussion).
{\em Journal of the Royal Statistical Society B}, 72, 417 -- 473.

\bibitem[Scheetz et al.(2006)]{Scheetz:etal:2006}
Scheetz, T., Kim, K., Swiderski, R., Philp, A., Braun, T., Knudtson, K., Dorrance, A., DiBona, G., Huang, J., Casavant, T., Sheffield, V., and Stone, E. (2006). Regulation of gene expression in the mammalian eye and its relevance to eye disease. Proceedings of the National Academy of Sciences, 103, 14429-14434.


\bibitem[Shah and Samworth(2013)]{Shah:Samworth:2013}
Shah, R. D. and Samworth, R. J. (2013), Variable selection with error control:
Another look at stability selection. 
{\em Journal of the Royal Statistical Society B}, 75, 55--80.

\bibitem[Tibshirani(1996)]{Tibs:1996}
  Tibshirani, R. (1996).
  Regression shrinkage and selection via the lasso.
  {\em Journal of the Royal Statistical Society B}, 58, 267--288.

{\color{black}
\bibitem[Vershynin(2010)]{Vershynin:2010}
Vershynin, R. (2010). Introduction to the non-asymptotic analysis of random matrices. {\em arXiv preprint arXiv:1011.3027}. University of Michigan, Ann Arbor, Michigan, USA.
}

\bibitem[Wang(2009)]{Wang:2009}
Wang, H. (2009). Forward regression for ultra-high dimensional variable screening.
{\em Journal of the American Statistical Association}, 104, 1512--1524.

\bibitem[Wang(2012)]{Wang:2012}
Wang, H. (2012). Factor profiled sure independence screening.
{\em Biometrika}, 99, 15--28.

\bibitem[Wang and Leng(2007)]{Wang:Leng:2007}
Wang, H. and Leng, C. (2007). Unified lasso estimation via least square approximation.
{\em Journal of American Statistical Association}, 102, 1039--1048.

\bibitem[Wang, et al.(2007)]{Wang:etal:2007}
Wang, H., Li, G., and Tsai, C. L. (2007). Regression coefficients and autoregressive order shrinkage and selection via the lasso.
{\em Journal of Royal Statistical Society B}, 69, 63--78.

\bibitem[Watson(1983)]{Watson:1983}
Watson, G. S. (1983). Statistics on Spheres. Wiley, New York.

\bibitem[Xue and Zou(2011)]{Xue:Zou:2011}
Xue, L. and Zou, H. (2011).
Sure independence screening and compressed random sensing.
{\em Biometrika}, 98, 371--380.

\bibitem[Yuan and Lin(2006)]{Yuan:Lin:2006}
Yuan, M. and Lin, Y. (2006).
Model selection and estimation in regression with grouped variables. {\em Journal of the Royal Statistical Society B}, 68, 49--67.

\bibitem[Zhang and Lu(2007)]{Zhang:Lu:2007}
Zhang, H. H. and Lu, W. (2007)
Adaptive-lasso for Cox's proportional hazard model. {\em Biometrika}, 93, 1--13.


\bibitem[Zhao and Li(2012)]{Zhao:Li:2012}
Zhao, D. and Li, Y. (2012) Principled sure independence screening for Cox models with ultra-high-dimensional covariate.
{\em Journal of Multivariate Analysis}, 105, 397--411.

\bibitem[{Zhao and Yu(2006)}]{Zhao:Yu:2006}
Zhao, P. and Yu, B. (2006). On model selection consistency of lasso.
{\em Journal of Machine Learning Research}, 7, 2541--2567.

\bibitem[Zhu, et al.(2011)]{Zhu:etal:2011}
Zhu, L. P., Li, L., Li, R., and Zhu, L. X. (2011).
Model-free feature screening for ultrahigh-dimensional data.
{\em Journal of the American Statistical Association},  696, 1464--1475.

\bibitem[Zou(2006)]{Zou:2006}
Zou, H. (2006). The adaptive lasso and its oracle properties.
{\em Journal of the American Statistical Association}, 101, 1418--1429.

\bibitem[Zou and Hastie(2005)]{Zou:Hastie:2005}
Zou, H. and Hastie, T. (2005).  Regularization and variable selection via the elastic net.
{\em Journal of Royal Statistical Society B}, 67, 301-320.

\bibitem[Zou and Zhang(2009)]{Zou:Zhang:2009}
Zou, H. and Zhang, H. H. (2009). On the adaptive elastic-net with a diverging number of parameters.
{\em The Annals of Statistics}, 37, 1733--1751.
\end{thebibliography}

%%%%%%%%%%%%%%%%%%%%%%%%%%%%%%%%%%%%%%%%%%%
\newpage 
\begin{center}
{\LARGE Supplementary Materials to "High-dimensional Ordinary Least-squares Projection for Screening Variables"}
\end{center}

\section*{A: Additional results}
%\par
\subsection*{The Moore-Penrose inverse}
\begin{Def}
For $A\in R^{m \times n}$, a Moore-Penrose pseudo-inverse of $A$ is defined as a matrix $A^+\in R^{n \times m}$ such that
\[ AA^+A = A, ~A^+AA^+ = A^+, ~(AA^+)^* = AA^+,~(A^+A)^* = A^+A,
\]
where $A^*$ is the conjugate of $A$.
\end{Def}
Using this definition, we can verify $X^+ = X^T(XX^T)^{-1}$ for $p\ge n$ and
$X^+=(X^TX)^{-1}X^T$ for $p\le n$ are both the Moore-Penrose inverse of $X$.

\subsection*{The ridge regression estimator when $r\rightarrow 0$}
Applying the Sherman-Morrison-Woodbury formula
\begin{equation*}
(A+UDV)^{-1} = A^{-1} - A^{-1}U(D^{-1}+VA^{-1}U)^{-1}VA^{-1},
\end{equation*}
we have
\begin{align*}
r(r I_p+X^TX)^{-1} = I_p - X^T(I_n+\frac{1}{r}XX^T)^{-1}X\frac{1}{r} = I_p-X^T(r I_n+XX^T)^{-1}X.
\end{align*}
Multiplying $X^TY$ on both sides, we get
\begin{equation*}
r(r I_p+X^TX)^{-1}X^TY = X^TY-X^T(r I_n+XX^T)^{-1}XX^TY. 
\end{equation*}
The right hand side can be further simplified as
\begin{align*}
\notag X^TY&-X^T(r I_n+XX^T)^{-1}XX^TY \\
\notag &= X^TY-X^T(r I_n+XX^T)^{-1}(r I_n +XX^T - r I_n)Y\\
 &= X^TY-X^TY+r(r I_n+XX^T)^{-1}Y = r X^T(r I_n+XX^T)^{-1}Y. 
\end{align*}
Therefore, we have
%\begin{equation*}
\[ (r I_p+X^TX)^{-1}X^TY = X^T(r I_n+XX^T)^{-1}Y.\]
%\end{equation*}

%============================
\section*{B: A brief review of the Stiefel manifold}
\par
Let $P\in \mathcal{O}(p)$ be a $p\times p$ orthogonal matrix from the orthogonal group $\mathcal{O}(p)$. Let $H$ denote the first $n$ columns of $P$. Then $H$ is in the Stiefel manifold \citep{Chikuse:2003}. In general, the Stiefel manifold $V_{n,p}$ is the space whose points are $n$-frames in $\mathcal{R}^p$ represented as the set of $p\times n$ matrices $X$ such that $X^TX = I_n$. Mathematically,  we can write
\[
V_{n,p} = \{X \in R^{p\times n}: X^TX = I_n\}.
\]
There is a natural measure $(dX)$ called Haar measure  on the Stiefel manifold, invariant under both right orthogonal and left orthogonal transformations. We standardize it to obtain a probability measure as $[dX] = (dX)/V(n,p)$, where $V(n,p) = {2^n\pi^{np/2}}/{\Gamma_n(1/2p)}$. Let $R_{p,n}$ be the space formed by all $p\times n$ nonsingular matrices. There are several useful results for the distributions on $R_{p,n}$ and $V_{n,p}$, which will be utilized in the following sections.
\begin{lemma} \label{Lemma:1} \citep{Fan:Lv:2008}
An $n\times p$ matrix $Z$ can be decomposed as $Z = VDU^T$ via
the singular value decomposition, where $V\in \mathcal{O}(n), U\in V_{n,p}$
and $D$ is an $n\times n$ diagonal matrix. {\color{black} Let $z_i^T$ denote the
  $i$th row of $Z$, $i=1,2,\cdots,n$. If we assume that $z_i$s are independent and their distribution is invariant under right orthogonal transformation, then the distribution of $Z$ is also invariant under $\mathcal{O}(p)$, i.e,
\[
ZT \stackrel{(d)}{=} Z,~\mbox{for}~ T\in\mathcal{O}(p).
\]}
As a result, we have
\[
U^T \stackrel{(d)}{=} (I_n, 0_{p-n})\times \tilde U,
\]
where $\tilde U$ is uniformly distributed on $\mathcal{O}(p)$. That is, $U$ is uniformly distributed on $V_{n,p}$.
\end{lemma}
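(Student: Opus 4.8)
The plan is to prove the stated invariance $ZT \stackrel{(d)}{=} Z$ directly from the hypotheses on the rows, and then to push this invariance through the singular value decomposition so that the right factor $U$ inherits a left-$\mathcal{O}(p)$-invariant law on $V_{n,p}$; since such a law is unique, it must be the uniform (Haar) one, whose canonical representation is the first $n$ rows of a Haar-distributed orthogonal matrix.

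\textbf{Invariance of $Z$ and equivariance of the SVD.} For the first assertion, fix $T\in\mathcal{O}(p)$ and note that the $i$th row of $ZT$ equals $z_i^T T$. By the right-orthogonal invariance of each row, $z_i^T T \stackrel{(d)}{=} z_i^T$; because $T$ is deterministic and the rows are independent, the rows of $ZT$ remain independent with unchanged marginals, so the joint law of the rows---hence the law of the matrix---is preserved, giving $ZT \stackrel{(d)}{=} Z$. To transfer this to $U$, write $Z = VDU^T$ and observe that $ZT = VDU^T T = VD(T^T U)^T$, where $(T^T U)^T(T^T U) = U^T T T^T U = I_n$, so $T^T U\in V_{n,p}$ and $VD(T^T U)^T$ is again a thin SVD of $ZT$ with the same $V$ and $D$. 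Thus the right action $Z\mapsto ZT$ corresponds to $U\mapsto T^T U$, and combining with the first step yields $T^T U \stackrel{(d)}{=} U$ for every $T\in\mathcal{O}(p)$, i.e. the law of $U$ is invariant under the left action of $\mathcal{O}(p)$ on $V_{n,p}$.

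\textbf{Identification of the invariant law.} Since $\mathcal{O}(p)$ acts transitively on $V_{n,p}$, there is a unique probability measure invariant under its left action, namely the normalized Haar measure $[dU]$; hence $U$ is uniform on $V_{n,p}$. The claimed representation follows because the first $n$ rows of a Haar-distributed $\tilde U\in\mathcal{O}(p)$, that is $(I_n, 0_{p-n})\tilde U$, are themselves left-$\mathcal{O}(p)$-invariant (immediate from the left invariance of Haar measure on $\mathcal{O}(p)$); by uniqueness, $U^T \stackrel{(d)}{=} (I_n, 0_{p-n})\tilde U$, which is the usual description of the uniform law on the Stiefel manifold \citep{Chikuse:2003}.

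\textbf{Main obstacle.} The delicate point is the equivariant SVD selection invoked in the second step. The decomposition is not pointwise unique---repeated singular values leave orthogonal blocks of $U$ free, and even simple singular values admit paired sign flips---so a fixed deterministic convention for choosing $U$ may fail to commute with $Z\mapsto ZT$, differing from $T^T U$ by a data-dependent sign matrix. I would handle this by restricting to the almost-sure event that $ZZ^T$ has full rank with simple spectrum (ensured under Assumption A1 for the continuous laws in play) and randomizing the residual sign ambiguity independently, or, more cleanly, by first arguing at the level of the ambiguity-free projection $UU^T = Z^T(ZZ^T)^{-1}Z$: the previous steps give $T^T(UU^T)T \stackrel{(d)}{=} UU^T$, so the row space of $Z$ is a uniformly random $n$-dimensional subspace, and the frame-level invariance $T^T U \stackrel{(d)}{=} U$ then upgrades this to uniformity of $U$ itself.
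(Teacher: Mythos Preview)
The paper does not prove this lemma; it is simply quoted from \cite{Fan:Lv:2008}, so there is no in-paper argument to compare against. Your proof is the standard one and is correct in outline: row-wise right-orthogonal invariance plus independence gives $ZT\stackrel{(d)}{=}Z$, and pushing this through the SVD forces the law of $U$ to be left-$\mathcal{O}(p)$-invariant on $V_{n,p}$, hence Haar by uniqueness.

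You are right that the only genuine subtlety is the SVD selection. Your first fix (distinct singular values a.s.\ plus independent sign randomization) is the clean and complete resolution. Your second fix, however, ends with a circular clause: you invoke ``the frame-level invariance $T^TU\stackrel{(d)}{=}U$'' to upgrade the Grassmannian statement, but that invariance is precisely what the SVD ambiguity puts in doubt and what you are trying to establish. The honest version of the second route is either (i) to note that the paper only ever uses $UU^T=Z^T(ZZ^T)^{-1}Z$, which \emph{is} a well-defined function of $Z$ and whose $\mathcal{O}(p)$-conjugation invariance is exactly uniformity of the row space on the Grassmannian, so nothing more is needed; or (ii) to lift the uniform subspace to a uniform frame by right-multiplying any measurable choice of basis by an independent Haar element of $\mathcal{O}(n)$, which yields a bona fide uniform $U\in V_{n,p}$ without appealing to the disputed equivariance.
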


Consider a different matrix decomposition. For a $p\times n$ matrix $Z$, define $H_z$ and $T_z$ as
\[
H_z = Z(Z^TZ)^{-1/2},\qquad T_z = Z^TZ.
\]
Then $H_z\in V_{n,p}$ and $Z = H_zT_z^{1/2}$. This is called matrix polar decomposition, where $H_z$ is the orientation of the matrix $Z$. We cite the following result for the polar decomposition.
\begin{lemma} \label{Lemma:2} \citep[Page 41-44]{Chikuse:2003}
Supposed that a $p\times n$ random matrix $Z$ has the density function of the form
\[
f_Z(Z) = |\Sigma|^{-n/2}g(Z^T\Sigma^{-1}Z),
\]
which is invariant under the right-orthogonal transformation of $Z$, where $\Sigma$ is a $p\times p$ positive definite matrix. Then its orientation $H_z$ has the matrix angular central Gaussian distribution (MACG) with a probability density function
\[
MACG(\Sigma) = |\Sigma|^{-n/2}|H_z^T\Sigma^{-1}H_z|^{-p/2}.
\]
In particular, if $Z$ is a $p\times n$ matrix whose distribution is invariant under both the left- and right-orthogonal transformations, then $H_Y$, with $Y = BZ$ for $BB^T = \Sigma$, has the $MACG(\Sigma)$ distribution.
\end{lemma}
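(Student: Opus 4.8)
The plan is to obtain the law of the orientation $H_z$ by changing variables from $Z$ to its polar coordinates and then integrating out the radial part. Write the polar decomposition $Z = H_z T_z^{1/2}$ with $H_z = Z(Z^TZ)^{-1/2}\in V_{n,p}$ and $T_z = Z^TZ$ ranging over the cone of $n\times n$ positive definite matrices. The starting point is the classical Jacobian of this decomposition (see \cite{Chikuse:2003}),
\[
(dZ) = 2^{-n}\,|T_z|^{(p-n-1)/2}\,(dT_z)\,(H_z^T dH_z),
\]
where $(H_z^T dH_z)$ is the invariant measure on $V_{n,p}$ and $(dT_z)$ is Lebesgue measure on the positive definite cone. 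First I would substitute $Z^T\Sigma^{-1}Z = T_z^{1/2} H_z^T\Sigma^{-1} H_z T_z^{1/2}$ into $f_Z$, so that the joint density of $(H_z, T_z)$ with respect to $(dT_z)(H_z^T dH_z)$ becomes $2^{-n}|\Sigma|^{-n/2}\,g(T_z^{1/2} A T_z^{1/2})\,|T_z|^{(p-n-1)/2}$, where I abbreviate $A = H_z^T\Sigma^{-1} H_z$.

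The heart of the argument is to integrate $T_z$ out and show that the $H_z$-dependence collapses to $|A|^{-p/2}$. The assumed invariance of $f_Z$ under right-orthogonal transformations forces $g(Q^TMQ)=g(M)$ for every $Q\in\mathcal{O}(n)$, so $g$ depends on its argument only through eigenvalues; in particular $g(T_z^{1/2}AT_z^{1/2}) = g(A^{1/2}T_zA^{1/2})$ because the two matrices are both similar to $AT_z$ and hence share its eigenvalues. I would then apply the congruence substitution $W = A^{1/2}T_zA^{1/2}$, whose Jacobian is $(dW) = |A|^{(n+1)/2}(dT_z)$ and which gives $|T_z|^{(p-n-1)/2} = |A|^{-(p-n-1)/2}|W|^{(p-n-1)/2}$. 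Collecting the powers of $|A|$ produces the exponent $-(p-n-1)/2 - (n+1)/2 = -p/2$, so that
\[
\int g(T_z^{1/2}AT_z^{1/2})\,|T_z|^{(p-n-1)/2}\,(dT_z) = |A|^{-p/2}\int g(W)\,|W|^{(p-n-1)/2}\,(dW),
\]
and the remaining integral is a finite constant free of $H_z$. Hence the marginal density of $H_z$ is proportional to $|\Sigma|^{-n/2}|H_z^T\Sigma^{-1}H_z|^{-p/2}$.

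It remains to pin down the normalizing constant and to dispatch the special case. Rather than evaluate the constant directly, I would argue that the displayed expression is a genuine marginal of a probability density and therefore integrates to one over $V_{n,p}$; since it is proportional to the claimed MACG density, which also integrates to one, the two coincide, simultaneously verifying the identity $\int_{V_{n,p}}|H_z^T\Sigma^{-1}H_z|^{-p/2}[dH_z] = |\Sigma|^{n/2}$. For the ``in particular'' claim, take $B$ with $BB^T=\Sigma$ and set $Y = BZ$. When $Z$ is invariant under both left- and right-orthogonal transformations its density is a function of $Z^TZ$ alone, say $g(Z^TZ)$ with $g$ eigenvalue-invariant; the map $Z\mapsto BZ$ on $p\times n$ matrices has Jacobian $|B|^{n}=|\Sigma|^{n/2}$, and substituting $Z = B^{-1}Y$ together with $(B^{-1})^TB^{-1}=(BB^T)^{-1}=\Sigma^{-1}$ yields $f_Y(Y) = |\Sigma|^{-n/2}\,g(Y^T\Sigma^{-1}Y)$, exactly the form treated above, so $H_Y\sim \mathrm{MACG}(\Sigma)$.

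The main obstacle is the measure-theoretic bookkeeping on the manifolds rather than any single inequality: establishing (or correctly invoking) the polar-decomposition Jacobian and the congruence-transformation Jacobian, and checking that the radial integral converges so that the marginalization is legitimate. Once these exterior-calculus identities are in hand, the algebra of the determinant exponents and the eigenvalue-invariance of $g$ make the remaining steps routine.
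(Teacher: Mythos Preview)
Your proof is correct and follows the standard derivation. Note, however, that the paper does not prove this lemma at all: it is simply quoted as a known result from \cite{Chikuse:2003} (pages 41--44) in the review section on the Stiefel manifold, with no argument supplied. Your polar-decomposition-plus-congruence-substitution argument is precisely the route taken in that reference, so there is nothing to compare against in the paper itself.
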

When $n=1$, the MACG distribution becomes the angular central Gaussian distribution, a description of the multivariate Gaussian distribution on the unite sphere \citep{Watson:1983}.
\begin{lemma} \label{Lemma:3}
\citep[Page 70, Decomposition of the Stiefel manifold]{Chikuse:2003} 
Let $H$ be a $p\times n$ random matrix on $V_{n,p}$, and write
\begin{align*}
  H = (H_1~ H_2),
\end{align*}
with $H_1$ being a $p\times q$ matrix where $0<q<n$. Then we can write
\begin{align*}
  H_2 = G(H_1)U_1,
\end{align*}
where $G(H_1)$ is any matrix chosen so that $(H_1~ G(H_1))\in \mathcal{O}(p)$; as $H_2$ runs over $V_{n-q,p}$, $U_1$ runs over $V_{n-q, p-q}$ and the relationship is one to one. The differential form $[dH]$ for the normalized invariant measure on $V_{n, p}$ is decomposed as the product
\begin{align*}
  [dH] = [dH_1][dU_1]
\end{align*}
of those $[dH_1]$ and $[dU_1]$ on $V_{q,p}$ and $V_{n-q, p-q}$, respectively.
\end{lemma}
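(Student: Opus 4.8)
The plan is to split the statement into two parts: first producing the claimed one-to-one correspondence, and then showing that the invariant volume forms factorize. For the bijection, I would read off from $H^TH=I_n$ applied to the partition $H=(H_1~H_2)$ that $H_1^TH_1=I_q$, $H_2^TH_2=I_{n-q}$ and $H_1^TH_2=0$. Thus $H_1\in V_{q,p}$ and the columns of $H_2$ form an orthonormal set inside the $(p-q)$-dimensional space $(\mathrm{col}\,H_1)^\perp$. Choosing any $G(H_1)$ with $(H_1~G(H_1))\in\mathcal{O}(p)$, its columns are an orthonormal basis of that complement, so $H_2=G(H_1)U_1$ for the unique $U_1=G(H_1)^TH_2$, and $U_1^TU_1=I_{n-q}$ gives $U_1\in V_{n-q,p-q}$. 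Conversely every such $U_1$ returns an admissible $H_2$, so $(H_1,U_1)\leftrightarrow H$ is one to one (a different admissible $G$ replaces $U_1$ by $RU_1$ for some $R\in\mathcal{O}(p-q)$, an ambiguity the invariant measure on $V_{n-q,p-q}$ will absorb).

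For the measure I would write the unnormalized invariant form through a full orthonormal frame. Completing $H$ to $(H_1~H_2~H_\perp)\in\mathcal{O}(p)$ with columns $h_1,\dots,h_p$, the invariant volume form is $(dH)=\bigwedge_{i=1}^{n}\bigwedge_{j=i+1}^{p}h_j^Tdh_i$, of the correct dimension $np-n(n+1)/2$. I would take the compatible completion $H_\perp=G(H_1)U_\perp$ with $(U_1~U_\perp)\in\mathcal{O}(p-q)$, so that $(h_{q+1},\dots,h_p)=G(H_1)(U_1~U_\perp)$. The factors with $i\le q$ then collect into $(dH_1)$: the forms with $j\le q$ are exactly the within-$H_1$ forms, while for fixed $i\le q$ the $(p-q)$ forms $\{h_j^Tdh_i\}_{j>q}$ and $\{g_m^Tdh_i\}_m$ differ only by the orthogonal change of basis $(U_1~U_\perp)$, hence wedge to the same thing up to sign, reproducing the invariant form $(dH_1)$ on $V_{q,p}$.

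The remaining factors, those with $q<i\le n$, should yield $(dU_1)$. Writing $i=q+a$ and $dh_{q+a}=\big(dG(H_1)\big)u_a+G(H_1)\,du_a$, the first summand is linear in the differentials of $H_1$ alone; since the $(dH_1)$ part already saturates every $dh_i$-direction with $i\le q$, these terms are wedge-redundant and drop out. Modulo $(dH_1)$ we are left with $h_j^Tdh_{q+a}\equiv h_j^TG(H_1)\,du_a$, and for $j>q$ we have $h_j^TG(H_1)=d_{j-q}^T$ because $G(H_1)^TG(H_1)=I_{p-q}$, where $d_b$ is the $b$-th column of $(U_1~U_\perp)$. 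Reindexing $b=j-q>a$ turns the product into $\bigwedge_{a=1}^{n-q}\bigwedge_{b=a+1}^{p-q}d_b^Tdu_a=(dU_1)$. Hence $(dH)=\pm(dH_1)\wedge(dU_1)$; integrating over the manifolds gives $V(n,p)=V(q,p)V(n-q,p-q)$, so after normalization $[dH]=[dH_1][dU_1]$.

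I expect the delicate step to be the third paragraph: justifying rigorously that the $dG(H_1)$ contributions vanish after wedging against $(dH_1)$, and that the orthogonal reindexings introduce only signs rather than nontrivial Jacobian factors. A clean way to underpin the whole computation — and a useful cross-check — is the observation that $\mathcal{O}(p)$ acts transitively on $V_{n,p}$, so its invariant probability measure is unique; since the equivariant choice $G(QH_1)=QG(H_1)$ leaves $U_1$ unchanged under $H\mapsto QH$, the pushforward of the product $[dH_1][dU_1]$ is left $\mathcal{O}(p)$-invariant and must therefore coincide with $[dH]$, its well-definedness in $U_1$ being guaranteed by the left $\mathcal{O}(p-q)$-invariance of $[dU_1]$.
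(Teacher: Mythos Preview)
The paper does not prove this lemma at all: it is quoted verbatim from Chikuse (2003, p.~70) and used as a black box. So there is no ``paper's proof'' to compare against; you have supplied an argument where the authors simply cite the literature.

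Your proof is correct and is essentially the standard one found in Chikuse's monograph. The bijection in your first paragraph is unimpeachable. The differential-forms computation is the canonical route: complete to a full frame, write $(dH)=\bigwedge_{i\le n,\,j>i}h_j^Tdh_i$, and peel off the $i\le q$ block as $(dH_1)$ after an orthogonal change of basis in the $j>q$ slots. Your identification of the delicate point is exactly right. The claim that the $dG(H_1)\,u_a$ terms are wedge-redundant follows because $(dH_1)$ is already a top-degree form in the $dH_1$-differentials (its degree equals $\dim V_{q,p}=qp-q(q+1)/2$), and every entry of $dG(H_1)$ is a linear combination of those same differentials; hence any extra such $1$-form wedges to zero against $(dH_1)$. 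The orthogonal reindexings contribute only $\pm 1$ because wedging a block of $1$-forms against an orthogonal change of basis multiplies by the determinant of that change.

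Your final paragraph's invariance argument is a clean, independent proof and arguably preferable: the map $(H_1,U_1)\mapsto H$ is equivariant for the left $\mathcal{O}(p)$-action (with the convention $G(QH_1)=QG(H_1)$), the pushforward of $[dH_1]\times[dU_1]$ is therefore $\mathcal{O}(p)$-invariant on $V_{n,p}$, and uniqueness of the invariant probability measure forces it to equal $[dH]$. This bypasses the differential-forms bookkeeping entirely. Either route is more than the paper offers.
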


%============================
\section*{C: Proofs of the main theory}
\par
The framework of the proof follows \cite{Fan:Lv:2008}, but with many modifications in details. Recall the proposed HOLP screening estimator
\begin{equation*}
\hat\beta = X^T(XX^T)^{-1}Y = X^T(XX^T)^{-1}X\beta+X^T(XX^T)^{-1}\epsilon := \xi+\eta,
\end{equation*}
where $\xi$ can be seen as the signal part and $\eta$ the noise part.
\par
Consider the singular value decomposition of $Z$ as
$Z = VDU^T$, 
where $V\in \mathcal{O}(n), U\in V_{n,p}$ and $D$ is an $n$ by $n$ diagonal
matrix. This gives 
$X = Z\Sigma^{1/2} = VDU^T\Sigma^{1/2}.$ 
Hence, the projection matrix can be written as
\begin{align*}
\notag X^T(XX^T)^{-1}X &= \Sigma^{1/2}UDV^T(VDU^T\Sigma UDV^T)^{-1}VDU^T\Sigma^{1/2} \\
&= \Sigma^{1/2}U(U^T\Sigma U)^{-1}U^T\Sigma^{1/2} := HH^T, \label{eq:defH}
\end{align*}
where
$H = \Sigma^{1/2}U(U^T\Sigma U)^{-1/2}$ 
satisfying $H^TH=I$. In fact, $H$ is the orientation of the
matrix $\Sigma^{1/2}U$. Because $Z$ is sphere symmetrically distributed and
thus invariant under right orthogonal transformation, by Lemma 1, $U$ is
{\color{black} then} uniformly distributed on the Stiefel manifold $V_{n,p}$, 
{\color{black} meaning that} it is invariant under both left- and
right-orthogonal transformation. {\color{black}Therefore,} by Lemma 2, the
matrix $H$ has the MACG($\Sigma$) distribution with regard to the Haar measure
on $V_{n,p}$ as
\[
H\sim |\Sigma|^{-n/2}|H^T\Sigma^{-1}H|^{-p/2},
\]
{\color{black}and} we can write $\xi$ in terms of $H$ as
\[
\xi = HH^T\beta.
\]
The whole proof depends on the properties of $\xi$ and $\eta$, where $\xi$
requires more elaborate analysis. Throughout the whole proof section,
$\|\cdot\|$ denotes the $l_2$ norm of a vector. The following preliminary results are the foundation of the whole theory.
%====
\subsection*{Property of $HH^T\beta$}
In this part, we aim to evaluate the magnitude of $HH^T\beta$. Let $e_i = (0,\cdots,1,0,\cdots,0)^T$ denote the $i$th natural base in the $p$ dimension space {\color{black} and $\tilde e_1$ denote the $n$-dimensional column vector $(1,0,\cdots,0)^T$}. We have the following two lemmas.

\begin{lemma}\label{Lemma:4}
If assumption A1 and A3 hold, for $C>0$ and for any fixed vector $v$ with $\|v\|=1$, there exist constants $c_1',c_2'$ with $0<c_1'<1<c_2'$ such that
\[
P\bigg(v^THH^Tv<c_1'\frac{n^{1-\tau}}{p}\quad\mbox{or}\quad v^THH^Tv>c_2'\frac{n^{1+\tau}}{p}\bigg)<4e^{-Cn}.
\]
In particular for $v=\beta$, whose norm is not 1 though, a
  similar inequality holds for one side with a new $c_2'$ as
\[
P\bigg( \beta^THH^T\beta>c_2'\frac{n^{1+\tau}}{p}\bigg)<2e^{-Cn}.
\]
\end{lemma}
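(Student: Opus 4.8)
The plan is to reduce the quadratic form $v^THH^Tv$ to a ratio of $\chi^2$ variables whose concentration is classical, exploiting the representation $HH^T=\Sigma^{1/2}U(U^T\Sigma U)^{-1}U^T\Sigma^{1/2}$ with $U$ uniform on $V_{n,p}$, which was established just above the lemma via Lemmas 1--2 (this is where the spherical symmetry in A1 enters). First I would set $a=U^T\Sigma^{1/2}v$, so that $v^THH^Tv=a^T(U^T\Sigma U)^{-1}a$. Since $U^TU=I_n$, for every unit $y\in R^n$ one has $y^TU^T\Sigma Uy=(Uy)^T\Sigma(Uy)$ with $\|Uy\|=1$, so all eigenvalues of $U^T\Sigma U$ lie in $[\lambda_{\min}(\Sigma),\lambda_{\max}(\Sigma)]$ \emph{deterministically}. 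This sandwiches the form as
\[
\frac{\|a\|^2}{\lambda_{\max}(\Sigma)}\le v^THH^Tv\le\frac{\|a\|^2}{\lambda_{\min}(\Sigma)}.
\]

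Next, writing $w=\Sigma^{1/2}v$ and $\hat w=w/\|w\|$, I would factor $\|a\|^2=\|w\|^2\,\|U^T\hat w\|^2$, and note that for $\|v\|=1$ the factor $\|w\|^2=v^T\Sigma v$ again lies in $[\lambda_{\min}(\Sigma),\lambda_{\max}(\Sigma)]$. Combining the two sandwiches gives $\mathrm{cond}(\Sigma)^{-1}\|U^T\hat w\|^2\le v^THH^Tv\le \mathrm{cond}(\Sigma)\,\|U^T\hat w\|^2$, and A3 supplies $\mathrm{cond}(\Sigma)\le c_4n^\tau$. The whole problem thus collapses to showing that the scalar $\|U^T\hat w\|^2$ concentrates around $n/p$.

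The single genuinely probabilistic step is that concentration. Because $U$ is uniform on $V_{n,p}$, its column space is a uniformly random $n$-dimensional subspace, and $\|U^T\hat w\|^2$ is the squared length of the projection of the fixed unit vector $\hat w$ onto that subspace; by rotational invariance this has the distribution of $A/(A+B)$ with $A\sim\chi^2_n$, $B\sim\chi^2_{p-n}$ independent, whose mean is $n/p$. I would then invoke the standard exponential $\chi^2$ tail bounds on $A$ (near $n$) and on $A+B\sim\chi^2_p$ (near $p$); since A1 gives $p>c_0n$, each deviation event has probability at most $e^{-Cn}$ once the cutoffs are pushed far enough into the tails. On the good event $A/(A+B)$ is pinned between $cn/p$ and $c'n/p$; the lower (resp.\ upper) bound fails only through two of the four $\chi^2$ events, yielding $4e^{-Cn}$ overall. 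Multiplying by the $c_4n^\tau$ factor converts the $n/p$ scale into $c_1'\,n^{1-\tau}/p$ and $c_2'\,n^{1+\tau}/p$; one may always shrink $c_1'$ below $1$ and enlarge $c_2'$ above $1$ since that only shrinks the bad events.

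For the one-sided statement with $v=\beta$, only the upper bound is needed, so I would retain just the two events enforcing $\|U^T\hat w\|^2\le c'n/p$ (namely $A$ not too large and $A+B$ not too small), producing the factor $2e^{-Cn}$. The norm factor is controlled differently: $\|w\|^2=\beta^T\Sigma\beta=\mathrm{var}(x^T\beta)\le\mathrm{var}(y)=O(1)$ by A3, while $1/\lambda_{\min}(\Sigma)\le\mathrm{cond}(\Sigma)\le c_4n^\tau$ because $\lambda_{\max}(\Sigma)\ge1$ for a correlation matrix. Hence $\beta^THH^T\beta\le\|w\|^2\|U^T\hat w\|^2/\lambda_{\min}(\Sigma)=O(n^{1+\tau}/p)$, giving the claim. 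Everything is routine once the reduction is in place; the main obstacle is the bookkeeping in the concentration step, namely verifying that the $\chi^2$ cutoffs can be chosen to deliver an \emph{arbitrary} prescribed exponent $C$ while respecting $0<c_1'<1<c_2'$, and tracking how the $\mathrm{cond}(\Sigma)\le c_4n^\tau$ factor transforms the $n/p$ scale into the $n^{1\mp\tau}/p$ scales.
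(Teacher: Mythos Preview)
Your proposal is correct and follows essentially the same route as the paper: the paper also rewrites $v^THH^Tv$ in terms of $\|\Sigma^{1/2}v\|^2$ and the projection $\|U^T\hat w\|^2$ (via a rotation sending $\hat w$ to $e_1$, so that $\|U^T\hat w\|^2=e_1^T\tilde U\tilde U^Te_1$), sandwiches using the deterministic eigenvalue bounds on $U^T\Sigma U$, and then invokes the $\chi^2$ concentration (their Propositions~1--2) to get the $4e^{-Cn}$ and $2e^{-Cn}$ bounds; the $v=\beta$ case is handled exactly as you describe, using $\beta^T\Sigma\beta\le c'$ from $\mathrm{var}(y)=O(1)$ together with $\lambda_{\min}(\Sigma)\ge c_4^{-1}n^{-\tau}$.
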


\begin{lemma}\label{Lemma:5}
If assumption A1 and A3 hold, then for any $C>0$, there exists some $c,\tilde c>0$ such that for any $i\in S$,
\begin{equation*}
P\bigg(|e_iHH^T\beta|<c\frac{n^{1-\tau-\kappa}}{p}\bigg)\leq O\bigg\{\exp\bigg(\frac{-Cn^{1-5\tau-2\kappa-\nu}}{2\log n}\bigg)\bigg\},
\end{equation*}
and for any $i\not\in S$,
\begin{equation*}
P\bigg(|e_iHH^T\beta|> \frac{\tilde c}{\sqrt{\log n}}\frac{n^{1-\tau-\kappa}}{p}\bigg)\leq O\bigg\{\exp\bigg(\frac{-Cn^{1-5\tau-2\kappa-\nu}}{2\log n}\bigg)\bigg\},
\end{equation*}
where $\tau,\kappa,\nu$ are the parameters defined in A3.
\end{lemma}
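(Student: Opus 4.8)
The plan is to control each coordinate of the signal $\xi = HH^T\beta$ through the decomposition
\[
\xi_i = e_i^T HH^T\beta = \beta_i\,\bigl(e_i^T HH^T e_i\bigr) + e_i^T HH^T\beta_{-i},
\]
where $\beta_{-i}$ denotes $\beta$ with its $i$-th entry set to $0$. I read the first summand as the diagonal ``signal'' and the second as the off-diagonal ``cross-talk''. For $i\in S$ the strategy is to bound the signal from below and the cross-talk from above; for $i\notin S$ the signal vanishes, so $\xi_i$ is pure cross-talk and only an upper bound is needed. For the diagonal term I would apply Lemma \ref{Lemma:4} with the unit vector $v=e_i$, placing $e_i^THH^Te_i$ inside $[\,c_1' n^{1-\tau}/p,\; c_2' n^{1+\tau}/p\,]$ off an event of probability at most $4e^{-Cn}$. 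For $i\in S$, multiplying the lower bound by $|\beta_i|\ge c_2 n^{-\kappa}$ (A3) gives $\beta_i\,(e_i^THH^Te_i)\ge c_1'c_2\, n^{1-\tau-\kappa}/p$, which is exactly the target order.

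The cross-talk is the crux. Writing $e_i^THH^T\beta_{-i}=\langle H^Te_i,\,H^T\beta_{-i}\rangle=\|H^Te_i\|\,\|H^T\beta_{-i}\|\cos\Theta$, Lemma \ref{Lemma:4} already bounds both norms by $O\bigl(\sqrt{n^{1+\tau}/p}\bigr)$ (through its $v=e_i$ and $v=\beta$ forms, with $\beta_{-i}$ differing from $\beta$ only negligibly), so everything hinges on showing that the angle $\Theta$ between the two projected vectors satisfies $|\cos\Theta|\lesssim n^{-1/2}$ with high probability. A plain Cauchy--Schwarz bound is useless here: it yields $O(n^{1+\tau}/p)$, dwarfing the signal, and in effect reinstates the very marginal-correlation contamination that HOLP is designed to remove. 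To expose the near-orthogonality I would use the right-orthogonal invariance of the MACG$(\Sigma)$ law of $H$, under which $H^Tv$ is spherically symmetric in $\mathbb{R}^n$ for any fixed $v$: rotating $H^T\beta_{-i}$ onto the reference axis $\tilde e_1$ turns the inner product into $\|H^T\beta_{-i}\|$ times a single coordinate of $H^Te_i$, which is conditionally (nearly) Gaussian with standard deviation of order $\|H^Te_i\|/\sqrt n$. Equivalently, through the Gaussian orientation $HH^T = W(W^TW)^{-1}W^T$ with $W=\Sigma^{1/2}G$ and $G$ having i.i.d.\ standard normal entries (Lemma \ref{Lemma:2}), the relevant conditional variance is of order $\bigl(\beta_{-i}^T\Sigma\beta_{-i}\bigr)\cdot e_i^THH^Te_i$; here the bound $\beta_{-i}^T\Sigma\beta_{-i}\le\beta^T\Sigma\beta\le\mathrm{var}(y)=O(1)$ from A3 is essential and yields a cross-talk scale of order $\sqrt n/p$, up to powers of $\mbox{cond}(\Sigma)\le c_4 n^\tau$.

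The hard part is making this conditioning rigorous for general $\Sigma$: one cannot replace $(U^T\Sigma U)^{-1}$ by a scalar without destroying exactly the cancellation that produces diagonal dominance, so I expect to lean on the Stiefel-manifold decomposition (Lemma \ref{Lemma:3}) to isolate the $i$-th direction, to propagate the condition-number factors through each estimate, and to control the small bias $\cos\Theta$ may carry when $\Sigma\neq I$. Once the scale is in hand, I would apply a sub-Gaussian tail bound to the cross-talk at the threshold $n^{1-\tau-\kappa}/(p\sqrt{\log n})$, which sits many standard deviations from its centre; squaring the ratio of threshold to scale and tracking the $\mbox{cond}(\Sigma)\le c_4 n^\tau$ and sparsity $s\le c_3 n^\nu$ factors produces the stated failure probability $\exp\{-Cn^{1-5\tau-2\kappa-\nu}/(2\log n)\}$. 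For $i\in S$ the signal $c_1'c_2\, n^{1-\tau-\kappa}/p$ then dominates the cross-talk, giving the first inequality; for $i\notin S$ the cross-talk bound is itself the second inequality. The $4e^{-Cn}$ from the diagonal event is of smaller order and is absorbed into the stated $O(\cdot)$ rate.
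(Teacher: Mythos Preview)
Your overall architecture---splitting $\xi_i$ into the diagonal contribution $\beta_i\,e_i^THH^Te_i$ and the cross-talk $e_i^THH^T\beta_{-i}$, bounding the former via Lemma~\ref{Lemma:4}, and arguing that the latter is an order smaller---is exactly the paper's. The divergence is in how the cross-talk is controlled, and your rotation heuristic has a gap. Right-orthogonal invariance of $H$ does make each $H^Tv$ spherically symmetric, but the rotation that sends $H^T\beta_{-i}$ onto $\tilde e_1$ is \emph{random} (it depends on $H$), so you cannot invoke the marginal law of $H^Te_i$ after applying it; nothing in the argument rules out that $H^Te_i$ and $H^T\beta_{-i}$ are strongly aligned. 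The ``equivalently'' Gaussian-orientation claim about the conditional variance being $(\beta_{-i}^T\Sigma\beta_{-i})\cdot e_i^THH^Te_i$ is plausible but not established by what you wrote, and the inequality $\beta_{-i}^T\Sigma\beta_{-i}\le\beta^T\Sigma\beta$ need not hold for $i\in S$ when $\Sigma$ is non-diagonal.

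The paper takes a different route to the cross-talk: it bounds each \emph{individual} off-diagonal entry $e_i^THH^Te_j$ and then aggregates over $j\in S$ by Cauchy--Schwarz. The entrywise bound rests on two new results. The first (Proposition~\ref{prop:offdiag}) uses right-invariance not to rotate the image vectors but to rotate $H$ itself: since $e_1^THH^Te_2$ is unchanged under $H\mapsto HQ$, one may pick $Q$ (random, but harmlessly so because the functional is invariant) so that $e_1^TH$ has a single nonzero entry, yielding the exact identity $e_1^THH^Te_2\stackrel{(d)}{=}T_1^{(1)}T_1^{(2)}$ conditional on $T_1^{(1)2}=e_1^THH^Te_1$. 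The second (Proposition~\ref{prop:decomp}) identifies the conditional law of the first column $T_1$ given the remaining columns as another angular central Gaussian, which after a Gaussian representation gives a clean tail bound on $T_1^{(2)}$. Combining these gives $|e_i^THH^Te_j|\lesssim n^{1+\tau-\alpha}/(p\sqrt{\log n})$ with failure probability $\exp\{-Cn^{1-2\alpha}/(2\log n)\}$; the Cauchy--Schwarz step then introduces a factor $\sqrt{s}\,\|\beta\|_2\lesssim n^{\nu/2+\tau/2}$, and choosing $\alpha=(5/2)\tau+\kappa+\nu/2$ produces the stated exponent. In your aggregate approach it is not clear where the sparsity parameter $\nu$ would enter at all, which is a sign that the variance calculation is not yet right.
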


\begin{lemma}\label{Lemma:6}
  Assume A1--A3 hold, we have for any $i\in \{1,2,\cdots,n\}$,
  \begin{align*}
   P\bigg(|\eta_i|>\frac{\sqrt{C_1c_1c_2'c_4}}{\sqrt{\log n}}\frac{n^{1-\kappa-\tau}}{p}\bigg)<\exp\bigg\{1-q\bigg(\frac{\sqrt{C_1}n^{1/2-2\tau-\kappa}}{\sqrt{\log n}}\bigg)\bigg\}+ 3\exp\big(-C_1n\big)
  \end{align*}
where $C_1, c_1, c_4$ are defined in the assumption, and $c_2'$ is defined in Lemma \ref{Lemma:4}.
\end{lemma}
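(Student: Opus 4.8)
The plan is to condition on the design matrix $X$ and exploit the independence of $\epsilon$ from $X$ (Assumption A2), so that $\eta_i = e_i^T X^T(XX^T)^{-1}\epsilon$ becomes a fixed linear functional of the error vector. Writing $w_i = (XX^T)^{-1}Xe_i \in \mathbb{R}^n$, so that $\eta_i = w_i^T\epsilon$, I would normalize by $\|w_i\|$ and apply the $q$-exponential tail condition of A2 to the unit vector $a = w_i/\|w_i\|$. Since $\epsilon/\sigma$ has $q$-exponential tails and $\sigma = O(1)$ (which follows from $\mathrm{var}(y) = O(1)$ in A3), this gives, conditionally on $X$,
\[
P(|\eta_i| > t \mid X) \le \exp\!\Big(1 - q\big(t/(\sigma\|w_i\|)\big)\Big).
\]
The whole problem is thereby reduced to producing a high-probability upper bound on $\|w_i\|$.

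The key algebraic step is to observe that $\|w_i\|^2 = e_i^T X^T(XX^T)^{-2}Xe_i$ carries one more inverse power of $XX^T$ than the projection-matrix diagonal $(HH^T)_{ii} = e_i^T X^T(XX^T)^{-1}Xe_i$. Using the positive-semidefinite ordering $M^2 \preceq \lambda_{\max}(M)M$ with $M = (XX^T)^{-1}$, I would bound
\[
\|w_i\|^2 \le (HH^T)_{ii}\big/\lambda_{\min}(XX^T).
\]
The numerator is handled by Lemma \ref{Lemma:4} with the unit vector $v = e_i$: its upper one-sided tail gives $(HH^T)_{ii} \le c_2' n^{1+\tau}/p$ except on an event of probability at most $2e^{-Cn}$. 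For the denominator, writing $XX^T = Z\Sigma Z^T$ I would use $\lambda_{\min}(XX^T) \ge \lambda_{\min}(\Sigma)\,\lambda_{\min}(ZZ^T)$; Assumption A1 controls $\lambda_{\min}(ZZ^T) \ge p/c_1$ off an event of probability $e^{-C_1 n}$, while the correlation-matrix normalization ($\mathrm{tr}\,\Sigma = p$ forces $\lambda_{\max}(\Sigma) \ge 1$) combined with $\mathrm{cond}(\Sigma)\le c_4 n^\tau$ from A3 yields $\lambda_{\min}(\Sigma) \ge 1/(c_4 n^\tau)$. Together these give $\lambda_{\min}(XX^T) \ge p/(c_1 c_4 n^\tau)$, hence $\|w_i\| \le \sqrt{c_1 c_2' c_4}\,n^{1/2+\tau}/p$ on the intersection of the two good events.

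It then remains to substitute the target threshold $t = \sqrt{C_1 c_1 c_2' c_4}\,n^{1-\kappa-\tau}/(p\sqrt{\log n})$, which makes $t/(\sigma\|w_i\|) \gtrsim \sqrt{C_1}\,n^{1/2-2\tau-\kappa}/\sqrt{\log n}$ once the $\sigma = O(1)$ factor is absorbed into the constant, and to tally the failure probabilities. Choosing the free constant $C$ in Lemma \ref{Lemma:4} equal to $C_1$, the two excluded events contribute $2e^{-C_1 n} + e^{-C_1 n} = 3e^{-C_1 n}$, and the conditional tail contributes the $\exp\{1 - q(\cdot)\}$ term, giving exactly the stated bound.

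I expect the main obstacle to be the control of $\|w_i\|$: recognizing that the extra inverse factor must be absorbed through $\lambda_{\min}(XX^T)$ rather than through the projection diagonal directly, and then pinning down the lower bound on $\lambda_{\min}(\Sigma)$ from the condition-number assumption together with the fact that $\Sigma$ is a correlation matrix. The remaining delicacy is purely bookkeeping — matching the precise constant $3e^{-C_1 n}$ requires using only the one-sided version of Lemma \ref{Lemma:4} and setting $C = C_1$.
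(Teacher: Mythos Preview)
Your proposal is correct and follows essentially the same route as the paper: condition on $X$, write $\eta_i$ as a normalized linear form in $\epsilon$ to invoke the $q$-exponential tail, and bound $\|w_i\|^2$ by splitting off one factor of $(XX^T)^{-1}$ so that it becomes $\lambda_{\min}(XX^T)^{-1}\,e_i^THH^Te_i$, then control the two pieces via Assumption~A1 (with the correlation-matrix lower bound on $\lambda_{\min}(\Sigma)$) and the one-sided upper bound from Lemma~\ref{Lemma:4} with $C=C_1$. The paper's own proof keeps the factor $\sigma$ in the threshold (its displayed inequality \eqref{eq:epsilon1} actually reads $\sigma\sqrt{C_1c_1c_2'c_4}$), so your absorption of $\sigma=O(1)$ into the constant is simply patching a cosmetic discrepancy between the lemma statement and its proof.
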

\subsection*{Proof of the three lemmas}
\par
To prove Lemma \ref{Lemma:4}, we need the following two propositions,
{\color{black} first of which is Lemma 3 and the second of which is  similar
  to Lemma 4} in \cite{Fan:Lv:2008}. {\color{black} For completeness, we provide the proof for the second proposition right after the statement.}

\begin{Prop}[Lemma 3 in \cite{Fan:Lv:2008}]
\label{Prop:1}
Let $\xi_i, i=1,2,\cdots,n$ be i.i.d $\chi^2_1$-distributed random variables. Then,
\begin{enumerate}
\item[(i)] for any $\epsilon>0$, we have
\begin{equation*}
  P\bigg(n^{-1}\sum_{i=1}^n\xi_i>1+\epsilon\bigg)\leq e^{-A_\epsilon n}, \label{eq:concentration1}
\end{equation*}
where $A_\epsilon = [\epsilon-\log(1+\epsilon)]/2>0$.
\item[(ii)] for any $\epsilon>0$, we have
\begin{equation*}
  P\bigg(n^{-1}\sum_{i=1}^n\xi_i<1-\epsilon\bigg)\leq e^{-B_\epsilon n},
\end{equation*}
where $B_\epsilon = [-\epsilon-\log(1-\epsilon)]/2>0$.
\end{enumerate}
\par
In other words, for any $C>0$, there exists some $0<c_3'<1<c_4'$ such that
\begin{align*}
  P\bigg(n^{-1}\sum_{i=1}^n\xi_i>c_4'\bigg)\leq e^{-Cn},
\end{align*}
and
\begin{align*}
  P\bigg(n^{-1}\sum_{i=1}^n\xi_i<c_3'\bigg)\leq e^{-Cn},
\end{align*}
\end{Prop}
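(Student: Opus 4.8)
The plan is to prove this by the Cram\'er--Chernoff (exponential Markov) method, exploiting the fact that the moment generating function of a $\chi^2_1$ variable is known in closed form. Recall that $E[e^{t\xi_1}] = (1-2t)^{-1/2}$ for $t<1/2$. Since the $\xi_i$ are i.i.d., the joint moment generating function factorizes, which is precisely what makes the optimized exponent scale linearly in $n$ and thus yields an exponential rate.

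For part (i), I would fix $t\in(0,1/2)$ and apply Markov's inequality to the exponential to obtain
\[
P\bigg(\sum_{i=1}^n\xi_i>n(1+\epsilon)\bigg)\leq e^{-tn(1+\epsilon)}\big(E[e^{t\xi_1}]\big)^n=\exp\bigg(-\tfrac{n}{2}\log(1-2t)-tn(1+\epsilon)\bigg).
\]
I would then minimize the per-observation exponent over $t$; differentiating gives the optimizer $t^\star=\epsilon/[2(1+\epsilon)]\in(0,1/2)$, at which $1-2t^\star=(1+\epsilon)^{-1}$. Substituting collapses the exponent to exactly $-\tfrac{1}{2}[\epsilon-\log(1+\epsilon)]=-A_\epsilon$, giving the stated bound, while positivity of $A_\epsilon$ follows from the elementary inequality $\log(1+x)<x$ for $x>0$.

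Part (ii) is handled symmetrically via the left tail: I would apply the exponential Markov inequality to $-\sum\xi_i$ with parameter $t>0$, using $E[e^{-t\xi_1}]=(1+2t)^{-1/2}$, which is valid for every $t>0$ (no upper constraint this time). Optimizing over $t$ produces $t^\star=\epsilon/[2(1-\epsilon)]$ for $0<\epsilon<1$, with $1+2t^\star=(1-\epsilon)^{-1}$, and the exponent collapses to $-\tfrac{1}{2}[-\epsilon-\log(1-\epsilon)]=-B_\epsilon$; positivity of $B_\epsilon$ follows from $1-x<e^{-x}$ for $x>0$. The case $\epsilon\geq 1$ is trivial since $\xi_i\geq 0$ makes the event empty.

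Finally, for the restated ``in other words'' form, I would note that $A_\epsilon\to\infty$ as $\epsilon\to\infty$ and $B_\epsilon\to\infty$ as $\epsilon\to 1^-$, so given any $C>0$ one simply picks $\epsilon$ large enough that $A_\epsilon\geq C$ and sets $c_4'=1+\epsilon$, and picks $\epsilon$ close enough to $1$ that $B_\epsilon\geq C$ and sets $c_3'=1-\epsilon$. I do not expect a genuine obstacle here: the argument is entirely routine once the moment generating function is in hand. The only points requiring mild care are tracking the admissible range of $t$ (keeping $t<1/2$ in part (i) so the transform is finite) and verifying the two elementary logarithmic inequalities that guarantee the rate constants $A_\epsilon,B_\epsilon$ are strictly positive.
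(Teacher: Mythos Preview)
Your proposal is correct; the Cram\'er--Chernoff argument you outline is the standard route and the computations (optimal $t^\star$, resulting exponents $A_\epsilon$ and $B_\epsilon$, positivity via $\log(1+x)<x$ and $1-x<e^{-x}$) all check out. Note that the paper does not actually supply its own proof of this proposition: it simply cites it as Lemma~3 of Fan and Lv (2008) and moves on, so there is nothing to compare against beyond observing that your argument is exactly the kind of moment-generating-function calculation one expects behind such a citation.
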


\begin{Prop}\label{Prop:2}
Let $U$ be uniformly distributed on the Stiefel manifold $V_{n,p}$. Then for any $C>0$, there exist $c_1', c_2'$ with $0<c_1'<1<c_2'$, such that
\begin{equation*}
P\bigg(e_1^TUU^Te_1<c_1'\frac{n}{p}\quad\mbox{or}\quad e_1^TUU^Te_1>c_2'\frac{n}{p}\bigg)\leq 4e^{-Cn}.
\end{equation*}
\end{Prop}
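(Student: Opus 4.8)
The plan is to reduce the quadratic form $e_1^T U U^T e_1$ to a ratio of two independent chi-squared sums, to which Proposition~\ref{Prop:1} applies directly. First I would note that $e_1^T U U^T e_1 = \|U^T e_1\|^2$ is simply the squared Euclidean norm of the first row of $U$, viewed as a vector in $\mathcal{R}^n$. Since $U$ is uniform on $V_{n,p}$, I would realize it as the first $n$ columns of a Haar-distributed orthogonal matrix $Q \in \mathcal{O}(p)$ (the same representation underlying Lemma~\ref{Lemma:1}). The first row of $U$ is then the first $n$ coordinates of the first row of $Q$, and the first row of $Q$ is uniformly distributed on the unit sphere $S^{p-1}$.

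A uniform point on $S^{p-1}$ may be written as $g/\|g\|$ with $g=(g_1,\dots,g_p)^T \sim N(0, I_p)$. Setting $R_n = \sum_{j=1}^n g_j^2$ and $R_{p-n} = \sum_{j=n+1}^p g_j^2$, which are independent with $R_n \sim \chi^2_n$ and $R_{p-n} \sim \chi^2_{p-n}$, this representation yields the distributional identity
\[
e_1^T U U^T e_1 \stackrel{(d)}{=} \frac{R_n}{R_n + R_{p-n}},
\]
equivalently a $\mathrm{Beta}(n/2,(p-n)/2)$ variable with mean exactly $n/p$. This identity is the crux of the argument, and everything downstream is bookkeeping.

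Next I would apply Proposition~\ref{Prop:1} separately to the two sums. For the target rate $C$ it supplies constants $0 < c_3' < 1 < c_4'$ with $c_3' n \le R_n \le c_4' n$ off an event of probability at most $2e^{-Cn}$; applying it to $R_{p-n}$ at the inflated rate $C/(c_0-1)$ and invoking $p-n > (c_0-1)n$ from Assumption A1 gives $c_3'(p-n) \le R_{p-n} \le c_4'(p-n)$ off an event of probability at most $2e^{-(C/(c_0-1))(p-n)} \le 2e^{-Cn}$, after enlarging the window so that a single pair $c_3',c_4'$ serves both sums. On the intersection of these events, monotonicity of the map $(x,y)\mapsto x/(x+y)$ in each argument yields
\[
\frac{c_3'}{c_4'}\cdot\frac{n}{p} \;\le\; \frac{R_n}{R_n + R_{p-n}} \;\le\; \frac{c_4'}{c_3'}\cdot\frac{n}{p},
\]
so taking $c_1' = c_3'/c_4'$ and $c_2' = c_4'/c_3'$ proves the claim, with exceptional probability at most $4e^{-Cn}$; the required ordering $0 < c_1' < 1 < c_2'$ holds automatically since $c_3' < 1 < c_4'$.

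The genuinely substantive step is the distributional identity in the second paragraph. The only point requiring care is matching the two chi-squared rates so that the combined exceptional probability is $4e^{-Cn}$ rather than the weaker $e^{-C(p-n)}$ one would naively obtain; this is resolved by running Proposition~\ref{Prop:1} on $R_{p-n}$ at rate $C/(c_0-1)$ and using the separation $p > c_0 n$. Notably, the argument never needs the eigenvalue concentration of Assumption A1 beyond the dimension gap $p > c_0 n$, since the sphere representation bypasses the Wishart matrix entirely.
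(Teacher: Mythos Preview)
Your proof is correct and follows essentially the same route as the paper: represent $e_1^TUU^Te_1$ as a ratio of Gaussian sums via the sphere realization of the first row of a Haar matrix, then invoke Proposition~\ref{Prop:1} on each piece. The only cosmetic difference is that the paper bounds the numerator $\sum_{i=1}^n x_i^2$ and the \emph{full} denominator $\sum_{i=1}^p x_i^2$ directly (so $e^{-Cp}\le e^{-Cn}$ falls out without invoking $p>c_0 n$), whereas you split off the complement $R_{p-n}$ and appeal to the dimension gap; both give the same $4e^{-Cn}$ bound.
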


\begin{proof}  First, $U^T$ can be written as $ (I_n~0_{n,p-n})\tilde U$, where $\tilde U$ is uniformly distributed
  on $\mathcal{O}(p)$. Apparently, $\tilde Ue_1$ is uniformly distributed on the
  unite sphere $S^{p-1}$. Thus, letting $\{x_i,i=1,2,\cdots,p\}$ be i.i.d {\color{black}random variables following }$N(0,1)$, we have
\[
\tilde Ue_1\stackrel{(d)}{=} \bigg(\frac{x_1}{\sqrt{\sum_{j=1}^p x_j^2}},\frac{x_2}{\sqrt{\sum_{j=1}^p x_j^2}},\cdots,\frac{x_p}{\sqrt{\sum_{j=1}^p x_j^2}}\bigg)^T.
\]
Hence {\color{black}$U^Te_1$ is the first $n$ coordinates of $\tilde Ue_1$}. It
follows 
\begin{equation*}
e_1^TUU^Te_1\stackrel{(d)}{=}\frac{x_1^2+\cdots+x_n^2}{x_1^2+x_2^2+\cdots+x_p^2}.
\end{equation*}
From Proposition \ref{Prop:1}, we know that for any $C>0$, there exist some $\tilde c_{1}$ and $\tilde c_{2}$ such that
\[
P\bigg(\frac{\sum_{i=1}^n x_i^2}{n}>\tilde c_1\bigg)<e^{-Cn},\qquad P\bigg(\frac{\sum_{i=1}^n x_i^2}{n}<\tilde c_2\bigg)<e^{-Cn},
\]
and
\[
P\bigg(\frac{\sum_{i=1}^p x_i^2}{p}>\tilde c_1\bigg)<e^{-Cp},\qquad P\bigg(\frac{\sum_{i=1}^p x_i^2}{p}<\tilde c_2\bigg)<e^{-Cp}.
\]
Letting $c_1' = \tilde c_2/\tilde c_1, c_2' = \tilde c_1/\tilde c_2$ and by Bonferroni's inequality, we have
\begin{equation*}
P\bigg(e_1^TUU^Te_1<c_1'\frac{n}{p}\quad\mbox{or}\quad e_1^TUU^Te_1>c_2'\frac{n}{p}\bigg)\leq 4e^{-Cn}.
\end{equation*}
The proof is completed.
\end{proof}

\par\noindent
\begin{proof}[\textbf{Proof of Lemma \ref{Lemma:4}}]
Recall the definition of $H$ and
\[
v^THH^Tv = v^T\Sigma^{\frac{1}{2}} U(U^T\Sigma U)^{-1}U^T\Sigma^{\frac{1}{2}} v.
\]
There always exists some orthogonal matrix $Q$ that rotates the vector $\Sigma^{\frac{1}{2}}v$ to the direction of $e_1$, i.e,
\[
\Sigma^{\frac{1}{2}}v = \|\Sigma^{\frac{1}{2}}v\|Qe_1.
\]
Then we have
\[
v^THH^Tv = \|\Sigma^{\frac{1}{2}}v\|^2e_1^TQ^TU(U^T\Sigma U)^{-1}U^TQe_1 = \|\Sigma^{\frac{1}{2}}v\|^2 e_1^T\tilde U(U^T\Sigma U)^{-1}\tilde Ue_1,
\]
where $\tilde U = Q^TU$ is uniformly distributed on $V_{n,p}$, since $U$ is uniformly distributed on $V_{n,p}$ (see discussion in the beginning) and Haar measure is invariant under orthogonal transformation. {\color{black}Now the magnitude of $v^THH^Tv$ can be evaluated in two parts.} For the norm of the vector $\Sigma^{\frac{1}{2}}v$, we have
\begin{equation}
\lambda_{min}(\Sigma)\leq v^T\Sigma v = \|\Sigma^{\frac{1}{2}}v\|^2\leq \lambda_{max}(\Sigma) \label{eq:lemma4},
\end{equation}
and for the rest part,
\[
e_1^T\tilde U(U^T\Sigma  U)^{-1}\tilde Ue_1 \leq \lambda_{max}((U^T\Sigma U)^{-1})\|\tilde Ue_1\|^2\leq \lambda_{min}(\Sigma)^{-1}\|\tilde Ue_1\|^2,
\]
and
\[
e_1^T\tilde U(U^T\Sigma U)^{-1}\tilde Ue_1 \geq \lambda_{min}((U^T\Sigma U)^{-1})\|\tilde Ue_1\|^2\geq \lambda_{max}(\Sigma)^{-1}\|\tilde Ue_1\|^2.
\]
Consequently, we have
\begin{equation}
v^THH^Tv\leq \frac{\lambda_{max}(\Sigma)}{\lambda_{min}(\Sigma)}e_1^TUU^Te_1,\qquad v^THH^Tv\geq \frac{\lambda_{min}(\Sigma)}{\lambda_{max}(\Sigma)}e_1^TUU^Te_1.\label{eq:lemma4.1}
\end{equation}
Therefore, following Proposition \ref{Prop:2} and A3, for any $C>0$ we have
\[
P\bigg(v^THH^Tv< c_1'c_4\frac{n^{1-\tau}}{p}\quad\mbox{or}\quad v^THH^Tv> c_2'c_4^{-1}\frac{n^{1+\tau}}{p}\bigg)\leq 4e^{-Cn}.
\]
Denoting $c_1'c_4$ by $c_1'$ and $c_2'c_4^{-1}$ by $c_2'$, we obtain the equation in the lemma.
\par
{\color{black}Next for $v=\beta$, it follows from Assumption A3 that
\begin{equation}
var(Y) =\beta^T\Sigma\beta+\sigma^2 = O(1). \label{eq:sigma1}
\end{equation}
{\color{black}Equation \eqref{eq:lemma4} then can be updated as
\[
\beta^T \Sigma \beta \leq c'
\]
for some constant $c'$, and \eqref{eq:lemma4.1} now becomes
\begin{equation*}
\beta^THH^T\beta\leq \frac{c'}{\lambda_{min}(\Sigma)}e_1^TUU^Te_1.
\end{equation*}
Since the trace of the covariance matrix $\Sigma$ is $p$, which entails that $\lambda_{max}(\Sigma)\geq 1$ and $\lambda_{min}(\Sigma)\leq 1$. Now with assumption A3, we have
\begin{equation}
 \lambda_{min}(\Sigma)\geq \frac{\lambda_{min}(\Sigma)}{\lambda_{max}(\Sigma)}> c_4^{-1}n^{-\tau}.\label{eq:lambdamin}
\end{equation}
Combining the above two equations, we have that for some new $c_2'>0$, it
holds 
\begin{equation*}
P\left(\beta^THH^T\beta>c_2'\frac{n^{1+\tau}}{p}\right)<2e^{-Cn}.
\end{equation*}}}
\end{proof}
\par

The proof of Lemma \ref{Lemma:5} relies on the results from Stiefel manifold. We first prove following propositions, which can assist the proof of Lemma \ref{Lemma:5}.

\begin{Prop}\label{prop:decomp}
  Assume a $p\times n$ matrix $H\in V_{n,p}$ follows the Matrix Angular Central Gaussian distribution with covariance matrix $\Sigma$. From Lemma \ref{Lemma:3} we can decompose $H = (T_1, H_2)$ with $T_1 = G(H_2)H_1$, where $H_2$ is a $p\times (n-q)$ matrix, $H_1$ is a $(p-n + q)\times q$ matrix and $G(H_2)$ is a matrix such that $(G(H_2), H_2) \in \mathcal{O}(p)$. We have following result
  \begin{align}
    H_1|H_2 \sim MACG(G(H_2)^T\Sigma G(H_2))
  \end{align}
with regard to the invariant measure $[H_1]$ on $V_{q, p-n+q}$.
\end{Prop}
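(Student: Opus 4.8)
The plan is to work directly with the MACG density of $H$ furnished by Lemma \ref{Lemma:2} and push it through the measure decomposition of Lemma \ref{Lemma:3}. Since $H\sim MACG(\Sigma)$, its density against the normalized invariant measure $[dH]$ on $V_{n,p}$ is $|\Sigma|^{-n/2}|H^T\Sigma^{-1}H|^{-p/2}$. I would apply Lemma \ref{Lemma:3} with the \emph{first} block taken to be $H_2$ (of $n-q$ columns), so that the residual frame $H_1$ ranges over $V_{q,p-n+q}$ and the measure factorizes cleanly as $[dH]=[dH_2][dH_1]$ with no extra Jacobian. Consequently the conditional law of $H_1$ given $H_2$ has density against $[dH_1]$ that is proportional, as a function of $H_1$ only, to $|H^T\Sigma^{-1}H|^{-p/2}$ evaluated at $H=(G(H_2)H_1,\ H_2)$. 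The whole task therefore reduces to re-expressing this single determinant in terms of $H_1$ and the fixed matrix $G=G(H_2)$.

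Substituting $T_1=GH_1$, I would expand $H^T\Sigma^{-1}H$ as the $n\times n$ block matrix with blocks $H_1^TG^T\Sigma^{-1}GH_1$, $H_1^TG^T\Sigma^{-1}H_2$, and $H_2^T\Sigma^{-1}H_2$, and then apply the Schur-complement determinant formula, factoring out $\det(H_2^T\Sigma^{-1}H_2)$, which depends on $H_2$ alone. What survives is $\det\bigl(H_1^TG^TWGH_1\bigr)$ with $W:=\Sigma^{-1}-\Sigma^{-1}H_2(H_2^T\Sigma^{-1}H_2)^{-1}H_2^T\Sigma^{-1}$, so that
\[
|H^T\Sigma^{-1}H|^{-p/2}=|H_2^T\Sigma^{-1}H_2|^{-p/2}\,\bigl|H_1^TG^TWGH_1\bigr|^{-p/2}.
\]
The first factor is absorbed into the $H_2$-dependent normalizing constant of the conditional law, leaving $\bigl|H_1^TG^TWGH_1\bigr|^{-p/2}$ to be identified.

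The crux, and the step I expect to be the main obstacle, is to recognize $G^TWG$ as $(G^T\Sigma G)^{-1}$. The clean route is to note that $P:=(G,\ H_2)$ is a genuine $p\times p$ orthogonal matrix, since $G^TG=I$, $H_2^TH_2=I$ and $G^TH_2=0$ by the defining property $(G,H_2)\in\mathcal{O}(p)$; hence $P^T\Sigma^{-1}P=(P^T\Sigma P)^{-1}$. The quantity $G^TWG$ is precisely the Schur complement of the $(H_2,H_2)$ block inside $P^T\Sigma^{-1}P$, and by the standard block-inverse identity the inverse of that Schur complement is the top-left block of $(P^T\Sigma^{-1}P)^{-1}=P^T\Sigma P$, namely $G^T\Sigma G$. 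Thus $G^TWG=(G^T\Sigma G)^{-1}$, and the conditional density is proportional to $\bigl|H_1^T(G^T\Sigma G)^{-1}H_1\bigr|^{-p/2}$, which is the $MACG(G^T\Sigma G)$ form on $V_{q,p-n+q}$, proving the claim. Conceptually this is the exact analogue of Gaussian conditioning: the effective covariance seen by the residual frame $H_1$ is the marginal covariance $G^T\Sigma G$ of the orthogonal complement of $H_2$.

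Two points deserve care in the write-up. First, the block-inverse identity presumes the relevant compressions are invertible, which is automatic because $\Sigma\succ 0$ forces both $H_2^T\Sigma^{-1}H_2$ and $G^T\Sigma G$ to be positive definite on the respective subspaces. Second, I would track the determinant exponent honestly: the ambient power that emerges is $p/2$, inherited from the original manifold, so the identification with $MACG(G^T\Sigma G)$ should be read through the density's functional form in $H_1$, with the normalizing constant confirmed to depend on $H_2$ only through $|G^T\Sigma G|$ and $|H_2^T\Sigma^{-1}H_2|$. Verifying this last integration is routine once the Schur-complement identity is in hand.
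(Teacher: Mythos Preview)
Your proposal is correct and follows essentially the same route as the paper: write the MACG density, factor $|H^T\Sigma^{-1}H|$ via the Schur complement with respect to the $H_2$ block, and then identify $G^TWG$ with $(G^T\Sigma G)^{-1}$. The only minor difference is in establishing that last identity---the paper verifies it by checking directly that $[\Sigma^{1/2}G(G^T\Sigma G)^{-1/2},\ \Sigma^{-1/2}H_2(H_2^T\Sigma^{-1}H_2)^{-1/2}]\in\mathcal{O}(p)$ and reading off $I-T_2$, whereas your block-inverse argument via the orthogonality of $P=(G,H_2)$ is a slightly slicker way to the same conclusion.
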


\begin{proof}
  Recall that $H$ follows a $MACG(\Sigma)$ on $V_{n,p}$,which possesses a density as
\begin{align*}
  p(H) \propto |H^T\Sigma^{-1}H|^{-p/2}[dH].
\end{align*}
Using the identity for matrix determinant
\begin{align*}
  \begin{vmatrix}
    A & B\\
    C & D
  \end{vmatrix}
 = |A||D - CA^{-1}B| = |D||A - BD^{-1}C|,
\end{align*}
we have
\begin{align*}
  P(H_1,H_2)& \propto |H_2^T\Sigma^{-1}H_2|^{-p/2}(T_1^T\Sigma^{-1}T_1 - T_1^T\Sigma^{-1}H_2(H_2^T\Sigma^{-1}H_2)^{-1}H_2^T\Sigma^{-1}T_1)^{-p/2}\\
 &= |H_2^T\Sigma^{-1}H_2|^{-p/2}(H_1^TG(H_2)^T(\Sigma^{-1} - \Sigma^{-1}H_2(H_2^T\Sigma^{-1}H_2)^{-1}H_2^T\Sigma^{-1})G(H_2)H_1)^{-p/2}\\
& = |H_2^T\Sigma^{-1}H_2|^{-p/2}(H_1^TG(H_2)^T\Sigma^{-1/2}(I - T_2)\Sigma^{-1/2}G(H_2) H_1)^{-p/2},
\end{align*}
where $T_2 = \Sigma^{-1/2}H_2(H_2^T\Sigma^{-1} H_2)^{-1}H_2^T\Sigma^{-1/2}$ is an orthogonal projection onto the linear space spanned by the columns of $\Sigma^{-1/2}H_2$. It is easy to verify the following result by using the definition of $G(H_2)$,
\begin{align*}
  [\Sigma^{1/2}G(H_2)(G(H_2)^T\Sigma G(H_2))^{-1/2}, ~\Sigma^{-1/2}H_2(H_2^T\Sigma^{-1} H_2)^{-1/2}] \in \mathcal{O}(p),
\end{align*}
and therefore we have
\begin{align*}
  I - T_2 = \Sigma^{1/2}G(H_2)(G(H_2)^T\Sigma G(H_2))^{-1}G(H_2)^T\Sigma^{1/2},
\end{align*}
which simplifies the density function as
\begin{align*}
  P(H_1,H_2)\propto |H_2^T\Sigma^{-1}H_2|^{-p/2}(H_1^T (G(H_2)^T\Sigma G(H_2))^{-1} H_1)^{-p/2}.
\end{align*}
Now it becomes clear that $H_1|H_2$ follows the Matrix Angular Central Gaussian distribution $ACG(\Sigma')$, where
\begin{align*}
  \Sigma' = G(H_2)^T\Sigma G(H_2).
\end{align*}
This completes the proof.
\end{proof}

\begin{Prop}\label{prop:offdiag}
  Assume $H\in V_{n,p}$. Write $H = (T_1, H_2)$ where $T_1 = (T_1^{(1)},T_1^{(2)}, \cdots, T_1^{(p)})^T$ is the first column of $H$, then we have
  \begin{align*}
     e_1^THH^Te_2 \stackrel{(d)}{=} \quad T_1^{(1)}T_1^{(2)}~ \bigg| ~T_1^{(1)2} = e_1^THH^Te_1.
  \end{align*}
\end{Prop}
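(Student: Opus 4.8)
The plan is to translate both sides of the asserted distributional identity into statements about the first two \emph{rows} of $H$ and then exploit the right-orthogonal invariance of the law of $H$ on $V_{n,p}$. Let $r_1,r_2\in\mathcal{R}^n$ denote the first and second rows of $H$. Since the $(i,j)$ entry of $HH^T$ is the inner product of rows $i$ and $j$, I would first record the elementary identities
\[
e_1^THH^Te_1 = \|r_1\|^2,\qquad e_1^THH^Te_2 = \langle r_1,r_2\rangle,
\]
together with $T_1^{(1)} = (r_1)_1$ and $T_1^{(2)} = (r_2)_1$. Consequently the conditioning event $\{(T_1^{(1)})^2 = e_1^THH^Te_1\}$ is exactly $\{(r_1)_1^2 = \|r_1\|^2\}$, i.e.\ the event $\{r_1 = \pm\|r_1\|\tilde e_1\}$ that $r_1$ is aligned with the first coordinate axis, and on that event $\langle r_1,r_2\rangle = (r_1)_1(r_2)_1 = T_1^{(1)}T_1^{(2)}$ identically. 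Thus the proposition amounts to showing that the conditional law of the off-diagonal term $\langle r_1,r_2\rangle$, given the value of the diagonal term $\|r_1\|^2$, coincides with the law of $(r_1)_1(r_2)_1$ under this alignment.

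Next I would invoke the invariance of the law of $H$ under $H\mapsto HO$ for every fixed $O\in\mathcal{O}(n)$ (immediate for the uniform law on $V_{n,p}$ of Lemma \ref{Lemma:1}, and for $\mathrm{MACG}(\Sigma)$ from the invariance of its density $|H^T\Sigma^{-1}H|^{-p/2}$ under $H\mapsto HO$), which gives $(r_1,r_2)\stackrel{(d)}{=}(r_1O,r_2O)$. From this I would extract the equivariance of the regular conditional laws: writing $\mathcal{L}(\cdot)$ for ``the law of,'' one has for a.e.\ $v$ and every $O$
\[
\mathcal{L}(r_2O\mid r_1=v) = \mathcal{L}(r_2\mid r_1=vO).
\]
Fixing a value $a$ of the diagonal term and choosing, measurably in $v$, a rotation $O(v)\in\mathcal{O}(n)$ with $vO(v)=\|v\|\tilde e_1$ (a Householder reflection sending $v/\|v\|$ to $\tilde e_1$), the equivariance with $O=O(v)$ and the fact that $O(v)$ preserves inner products, so that $(r_2O(v))_1=\langle v,r_2\rangle/\|v\|$, yield
\[
\mathcal{L}\bigl((r_2)_1\,\big|\,r_1=\sqrt a\,\tilde e_1\bigr) = \mathcal{L}\Bigl(\tfrac{\langle v,r_2\rangle}{\|v\|}\,\Big|\,r_1=v\Bigr).
\]
Multiplying both random variables by $\sqrt a=\|v\|$ identifies $\mathcal{L}(\langle r_1,r_2\rangle\mid r_1=v)$, which by the equivariance depends on $v$ only through $\|v\|$, with $\mathcal{L}(\sqrt a\,(r_2)_1\mid r_1=\sqrt a\,\tilde e_1)=\mathcal{L}(T_1^{(1)}T_1^{(2)}\mid (T_1^{(1)})^2=a,\ (r_1)_1>0)$. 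Averaging the left side over $v$ on the sphere of radius $\sqrt a$ therefore gives the conditional law of $e_1^THH^Te_2$ given $e_1^THH^Te_1=a$, which is the claim; invariance under $O=\mathrm{diag}(-1,1,\dots,1)$ shows the product $(r_1)_1(r_2)_1$ has the same conditional law for $r_1=+\sqrt a\,\tilde e_1$ and $r_1=-\sqrt a\,\tilde e_1$, so the sign of $(r_1)_1$ is immaterial.

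The main obstacle I anticipate is making the alignment step fully rigorous, because the aligning rotation $O(v)$ depends on the random vector $r_1$, whereas the bare invariance statement applies only to \emph{deterministic} $O$. The remedy is precisely the equivariance-of-conditional-laws identity above, which transfers deterministic right-invariance into a statement about regular conditional distributions; the technical care lies in selecting $O(v)$ measurably and in verifying that the conditional law of $\langle r_1,r_2\rangle$ given $r_1=v$ genuinely depends on $v$ only through $\|v\|$. Everything else—the row identities and the inner-product bookkeeping—is routine, and I note the argument uses only right-$\mathcal{O}(n)$ invariance, so it is not tied to a particular $\Sigma$.
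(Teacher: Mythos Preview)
Your approach is essentially the same as the paper's: both proofs rotate by a right-orthogonal matrix $Q\in\mathcal{O}(n)$ chosen so that the first row of $HQ$ lies along $\tilde e_1$, observe that $HH^T=(HQ)(HQ)^T$ so the off-diagonal entry $e_1^THH^Te_2$ is unchanged, and read off that after alignment this entry equals the product of the first two entries of the first column. The paper simply writes the deterministic identity $e_1^THH^Te_2=T_1'^{(1)}T_1'^{(2)}$ for $H'=HQ$ with $e_1^TH_2'=0$ and then asserts the distributional statement, handling the passage from ``data-dependent $Q$'' to a conditional law informally; you carry out the same step through the equivariance of regular conditional distributions, which is exactly the rigorous justification the paper's one-line assertion needs. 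So your proposal is correct and matches the paper's route, with the added measure-theoretic care you flagged as the main obstacle.
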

\begin{proof}
  Notice that for any orthogonal matrix $Q\in\mathcal{O}(n)$, we have
\begin{align*}
  e_1^THH^Te_2 = e_1^THQQ^TH^Te_2 = e_1^TH'H^{'T}e_2.
\end{align*}
Write $H' = HQ = (T_1', H_2')$, where $T_1' = [T_1^{'(1)}, T_1^{'(2)},\cdots, T_1^{'(p)}], ~H_2' = [H_2^{'(i,j)}]$. If we choose $Q$ such that the first row of $H_2'$ are all zero (this is possible as we can choose the first column of $Q$ being the first row of $H$ upon normalizing), i.e.,
\begin{align*}
  e_1^TH' = [T_1^{'(1)},~0,\cdots,0]\qquad e_2^TH' = [T_1^{'(2)},~ H_2^{'(2,1)},\cdots,~H_2^{'(2,n-1)}],
\end{align*}
then immediately we have $e_1^THH^Te_2 = e_1^TH'H^{'T}e_2 = T_1^{'(1)}T_1^{'(2)}$. This indicates that
\begin{align*}
  e_1^THH^Te_2 \stackrel{(d)}{=} \quad T_1^{(1)}T_1^{(2)}~ \bigg| ~e_1^TH_2 = 0.
\end{align*}
\par
Next, we transform the condition $e_1^TH_2 = 0$ to the constraint on the distribution of $T_1^{(i)}$. Letting $t_1^2 = e_1^THH^Te_1$, then $e_1^TH_2 = 0$ is equivalent to $T_1^{(1)2} = e_1^THH^Te_1 = t_1^2$, which implies that
\begin{align*}
  e_1^THH^Te_2 \stackrel{(d)}{=} \quad T_1^{(1)}T_1^{(2)}~ \bigg|~ T_1^{(1)2} = e_1^THH^Te_1.
\end{align*}
\end{proof}

\begin{Prop}\label{prop:condinum}
  Assume the conditional number of $\Sigma$ is $cond(\Sigma)$ and $\Sigma_{ii} = 1$ for $i = 1,2, \cdots, p$, then we have
  \begin{align*}
    \lambda_{min}(\Sigma) \geq \frac{1}{cond(\Sigma)}\qquad\mbox{and}\qquad \lambda_{max}(\Sigma) \leq cond(\Sigma).
  \end{align*}
\end{Prop}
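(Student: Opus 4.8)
The plan is to exploit the single piece of structure we are handed---that $\Sigma$ is a correlation matrix, so $\Sigma_{ii}=1$ for every $i$---through its trace. First I would note that $\mbox{tr}(\Sigma)=\sum_{i=1}^p\Sigma_{ii}=p$, and since the trace equals the sum of the eigenvalues, $\sum_{i=1}^p\lambda_i(\Sigma)=p$. The eigenvalues therefore have average value exactly $1$. Because the largest eigenvalue is at least the average and the smallest is at most the average, this immediately yields the two bracketing inequalities
\[
\lambda_{max}(\Sigma)\geq 1\qquad\mbox{and}\qquad \lambda_{min}(\Sigma)\leq 1.
\]
These two facts are the whole engine of the argument; they are exactly the inequalities already invoked in the proof of Lemma~\ref{Lemma:4}, so there is nothing new to establish about the spectrum itself.

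The second step is purely to combine these with the definition $\mbox{cond}(\Sigma)=\lambda_{max}(\Sigma)/\lambda_{min}(\Sigma)$. From $\lambda_{max}(\Sigma)\geq 1$ I would write
\[
\mbox{cond}(\Sigma)=\frac{\lambda_{max}(\Sigma)}{\lambda_{min}(\Sigma)}\geq\frac{1}{\lambda_{min}(\Sigma)},
\]
which rearranges to $\lambda_{min}(\Sigma)\geq 1/\mbox{cond}(\Sigma)$, the first claimed bound. Symmetrically, from $\lambda_{min}(\Sigma)\leq 1$ I would write
\[
\mbox{cond}(\Sigma)=\frac{\lambda_{max}(\Sigma)}{\lambda_{min}(\Sigma)}\geq \lambda_{max}(\Sigma),
\]
which is exactly the second claimed bound $\lambda_{max}(\Sigma)\leq\mbox{cond}(\Sigma)$. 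Both inequalities are tight in the limit of an identity-like spectrum, which is a reassuring sanity check.

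Honestly, there is no real obstacle here: the statement is an elementary consequence of the trace constraint, and the only thing to be careful about is the direction of each inequality when dividing by the (positive) eigenvalues. The positivity of $\lambda_{min}(\Sigma)$ is guaranteed since $\Sigma$, being a covariance/correlation matrix, is positive definite, so every division and the condition number itself are well defined. I would present the whole argument in three or four lines and not belabor it, since its role in the paper is merely to translate the assumed bound $\mbox{cond}(\Sigma)\leq c_4 n^\tau$ in Assumption~A3 into the per-eigenvalue control used repeatedly elsewhere, such as in \eqref{eq:lambdamin}.
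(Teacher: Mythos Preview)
Your proof is correct and follows essentially the same approach as the paper: both use the trace identity $\mbox{tr}(\Sigma)=\sum_i\lambda_i=p$ to deduce $\lambda_{max}(\Sigma)\geq 1$ and $\lambda_{min}(\Sigma)\leq 1$, then combine these with the definition of the condition number. Your write-up is in fact more explicit than the paper's two-line version, which compresses the same logic into the inequalities $p/\lambda_{max}\geq p/\mbox{cond}(\Sigma)$ and $p/\lambda_{min}\leq p\cdot\mbox{cond}(\Sigma)$.
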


\begin{proof}
  Notice that $p = tr(\Sigma) = \sum_{i = 1}^p\lambda_i$. Therefore, we have
  \begin{align*}
    p/\lambda_{max}\geq \frac{p}{cond(\Sigma)}\quad\mbox{and}\quad p/\lambda_{min}(\Sigma) \leq p\cdot cond(\Sigma),
  \end{align*}
which completes the proof.
\end{proof}

We now turn to the proof of Lemma \ref{Lemma:5}.

\begin{proof}[\textbf{Proof of Lemma \ref{Lemma:5}}] Notice that to quantify $e_iHH^T\beta$ is essential to quantify the entries of $HH^T$. The diagonal terms are already studied in Lemma \ref{Lemma:4} as taking $v = e_i$ we have
  \begin{align}
    P\bigg(e_i^THH^Te_i<c_1'\frac{n^{1-\tau}}{p}\mbox{ or }e_i^THH^Te_i > c_2'\frac{n^{1+\tau}}{p}\bigg) < 4e^{-Cn}.\label{eq:t1}
  \end{align}
The remaining task is to quantify off diagonal terms. Without loss of generality, we prove the bound only for $e_1^THH^Te_2$, then the other off-diagonal terms should follow exactly the same argument. According to Proposition \ref{prop:decomp} with $q$ being 1, we can decompose $H = (T_1, H_2)$ with $T_1 = G(H_2)H_1$, where $H_2$ is a $p\times (n-1)$ matrix, $H_1$ is a $(p-n + 1)\times 1$ vector and $G(H_2)$ is a matrix such that $(G(H_2), H_2) \in \mathcal{O}(p)$.The invariant measure on the Stiefel manifold can be decomposed as
\begin{align*}
  [H] = [H_1][H_2]
\end{align*}
where $[H_1]$ and $[H_2]$ are Haar measures on $V_{1,n-p+1}, V_{n-1,p}$.  $H_1|H_2$ follows the Angular Central Gaussian distribution $ACG(\Sigma')$, where
\begin{align*}
  \Sigma' = G(H_2)^T\Sigma G(H_2).
\end{align*}

Let $H_1 = (h_1,h_2,\cdots,h_p)^T$ and let $x^T = (x_1,x_2,\cdots,x_{p-n+1})\sim N(0,\Sigma')$, then we have
\begin{align*}
  h_i\stackrel{(d)}{=} \frac{x_i}{\sqrt{x_1^2+\cdots+x_{p-n+1}^2}}.
\end{align*}
Notice that $T_1 = G(H_2)H_1$, a linear transformation on $H_1$. Defining $y = G(H_2)x$, we have
\begin{align}
  T_1^{(i)}\stackrel{(d)}{=} \frac{y_i}{\sqrt{y_1^2+\cdots+y_{p}^2}},\label{eq:y}
\end{align}
where $y\sim N(0, G(H)\Sigma'G(H)^T)$ is a degenerate Gaussian distribution. This degenerate distribution contains an interesting form. Letting $z\sim N(0, \Sigma)$, we know $y$ can be expressed as $y = G(H)G(H)^Tz$. Write $G(H_2)^T$ as $[g_1,g_2]$ where $g_1$ is a $(p-n+1)\times 1$ vector and $g_2$ is a $(p-n+1)\times (p-1)$ matrix, then we have
\begin{align*}
  G(H_2)G(H_2)^T =
  \begin{pmatrix}
    g_1^Tg_1 & g_1^Tg_2\\
    g_2^Tg_1 & g_2^Tg_2
  \end{pmatrix}
.
\end{align*}
We can also write $H_2^T = [0_{n-1,1}, h_2]$ where $h_2$ is a $(n-1)\times (p-1)$ matrix, and using the orthogonality, i.e., $[H_2~G(H_2)][H_2~G(H_2)]^T = I_p$, we have
\begin{align*}
  g_1^Tg_1 = 1, ~g_1^Tg_2 = 0_{1,p-1}\quad \mbox{and}\quad g_2^Tg_2 = I_{p-1} - h_2h_2^T.
\end{align*}
Because $h_2$ is a set of orthogonal basis in the $p-1$ dimensional space, $g_2^Tg_2$ is therefore an orthogonal projection onto the space $\{h_2\}^{\perp} $ and $g_2^Tg_2 = AA^T$ where $A = g_2^T(g_2g_2^T)^{-1/2}$ is a $(p-1)\times (p-n)$ orientation matrix on $\{h_2\}^{\perp}$. Together, we have
\begin{align*}
  y =
  \begin{pmatrix}
    1 & 0\\
    0 & AA^T
  \end{pmatrix}
z.
\end{align*}
This relationship allows us to marginalize $y_1$ out with $y$ following a degenerate Gaussian distribution.
\par
Now according to Proposition \ref{prop:offdiag} and assuming $t_1^2 = e_1^THH^Te_1$, we have
\begin{align*}
  e_1^THH^Te_2 \stackrel{(d)}{=} \quad T_1^{(1)}T_1^{(2)}~ \bigg|~ T_1^{(1)2} = t_1^2.
\end{align*}

Because the magnitude of $t_1$ has been obtained in \eqref{eq:t1}, we can now condition on the value of $T_1^{(1)}$ to obtain the bound on $T_1^{(2)}$. From $T_1^{(1)2} = t_1^2$, we have
\begin{align}
  (1 - t_1^2) y_1^2 = t_1^2(y_2^2 + y_3^2 +\cdots + y_{p}^2).\label{eq:cons}
\end{align}
Notice this constraint is imposed on the norm of $\tilde y = (y_2,~y_3,\cdots,y_{p})$ and is thus independent of $(y_2/\|\tilde y\|,\cdots, y_{p}/\|\tilde y\|)$. Equation \eqref{eq:cons} also implies that 
\begin{align}
(1-t_1^2)(y_1^2 + y_2^2 + \cdots + y_p^2) = y_2^2 + y_3^2 + \cdots + y_p^2.\label{eq:norm}
\end{align}
Therefore, combining \eqref{eq:y} with \eqref{eq:cons}, \eqref{eq:norm} and integrating $y_1$ out, we have
\begin{align*}
  T_1^{(i)}~|~T_1^{(1)}=t_1 &~\stackrel{(d)}{=} \frac{\sqrt{1-t_1^2}y_i}{\sqrt{y_2^2+\cdots+y_{p}^2}},\qquad i = 2,3,\cdots,p,
\end{align*}
where $(y_2,y_3,\cdots,y_{p})\sim N(0, AA^T \Sigma_{22} AA^T)$ with $\Sigma_{22}$ being the covariance matrix of $z_2,\cdots, z_p$. 
\par
To bound the numerator, we use the classical tail bound on the normal distribution as for any $t>0$, ($\sigma_i = \sqrt{var(y_i)}\leq \sqrt{\lambda_{max}(AA^T\Sigma_{22}AA^T)}\leq \lambda_{max}(\Sigma)^{1/2}$),
\begin{align}
  P(|y_i|>t\sigma_i) \leq 2e^{-t^2/2}.\label{eq:normal2}
\end{align}
For any $\alpha > 0$ choosing $t = \sqrt{C}\frac{n^{\frac{1}{2}-\alpha}}{\sqrt{\log n}}$ we have,
\begin{align*}
  P(|y_i|>\frac{\sqrt{C\lambda_{max}(\Sigma)}}{\sqrt{\log n}}n^{\frac{1}{2} - \alpha}) \leq 2\exp\bigg\{\frac{-Cn^{1-2\alpha}}{2\log n}\bigg\}.
\end{align*}

For the denominator, letting $\tilde z \sim N(0, I_{p-1})$, we have
\begin{align*}
  \tilde y = AA^T\Sigma_{22}^{1/2}\tilde z
  \quad\mbox{and}\quad \tilde y^T\tilde y = \tilde z^T \Sigma_{22}^{1/2}AA^T\Sigma_{22}^{1/2} \tilde z \stackrel{(d)}{=} \sum_{i = 1}^{p-n}\lambda_i \mathcal{X}^2_i(1),
\end{align*}
where $\mathcal{X}^2_i(1)$ are iid chi-square random variables and $\lambda_i$
are non-zero eigenvalues of matrix
$\Sigma_{22}^{1/2}AA^T\Sigma_{22}^{1/2}$. Here $\lambda_i$'s are naturally upper bounded by $\lambda_{max}(\Sigma)$. To give a lower bound, notice that $\Sigma_{22}^{1/2}AA^T\Sigma_{22}^{1/2}$ and $A\Sigma_{22}A^T$ possess the same set of non-zero eigenvalues, thus
\begin{align*}
  \min_i\lambda_i \geq \lambda_{min}(A\Sigma_{22}A^T) \geq \lambda_{min}(\Sigma).
\end{align*}
Therefore, 
\begin{align*}
  \lambda_{min}(\Sigma)\frac{\sum_{i=1}^{p-n}\mathcal{X}^2_i(1)}{p-n}\leq \frac{\tilde y^T\tilde y}{p-n}\leq  \lambda_{max}(\Sigma)\frac{\sum_{i=1}^{p-n}\mathcal{X}^2_i(1)}{p-n}.
\end{align*}
The quantity $\frac{\sum_{i=1}^{p-n}\mathcal{X}^2_i(1)}{p-n}$ can be bounded
by Proposition \ref{Prop:1}. Combining with Proposition \ref{prop:condinum}, we have for any $C>0$, there exists some $c_3'>0$ such that
\begin{align*}
  P\bigg(\tilde y^T\tilde y/(p-n) < \frac{c_3'}{\lambda_{min}(\Sigma)}\bigg)\leq e^{-C(p -n)}.
\end{align*}
Therefore, noticing that $cond(\Sigma) = \lambda_{max}(\Sigma)/\lambda_{min}(\Sigma) \leq c_4 n^\tau$, $T_1^{(2)}$ can be bounded as
\begin{align*}
  P\bigg(|T_1^{(2)}|> \frac{\sqrt{1 - t_1^2}\sqrt{Cc_4} n^{\frac{1}{2} + \frac{\tau}{2} - \alpha}}{\sqrt{c_3'}\sqrt{p-n}\sqrt{\log n}}~\big|T_1^{(1)} = t_1\bigg)\leq e^{-C(p-n)} + 2\exp\bigg\{\frac{-Cn^{1-2\alpha}}{2\log n}\bigg\}.
\end{align*}
Using the results from \eqref{eq:t1}, we have
\begin{align*}
  P\bigg(t_1^2>c_2'\frac{n^{1+\tau}}{p}\bigg)\leq 2e^{-Cn}.\quad\mbox{and}\quad
  P\bigg(t_1^2<c_1'\frac{n^{1-\tau}}{p}\bigg)\leq 2e^{-Cn}.
\end{align*}
Consequently, defining $M = \frac{\sqrt{Cc_4}}{\sqrt{c_3'(c_0-1)}}$ we have 
\begin{align*}
  P\bigg(|e_1^TH&H^Te_2|>\frac{M}{\sqrt{\log n}}\frac{n^{1 + \tau - \alpha}}{p} \bigg) = P\bigg(|T_1^{(1)}T_1^{(2)}|> \frac{M}{\sqrt{\log n}}\frac{n^{1 + \tau - \alpha}}{p}~\big| T_1^{(1)} = t_1\bigg) \\
&\leq P\bigg(T_1^{(1)2}>c_2'\frac{n^{1+\tau}}{p}~|T_1^{(1)} = t_1\bigg) + P\bigg(|T_1^{(2)}|> \frac{\sqrt{Cc_4} n^{\frac{1}{2} + \frac{\tau}{2} - \alpha}}{\sqrt{c_3'(c_0-1)}\sqrt{n}\sqrt{\log n}}~\big|T_1^{(1)} = t_1\bigg)\\
&\leq e^{-C(c_0-1)n} + 4e^{-Cn} + 2\exp\bigg\{\frac{-Cn^{1-2\alpha}}{2\log n}\bigg\}\\
& = O\bigg\{\exp\bigg(\frac{-Cn^{1-2\alpha}}{2\log n}\bigg)\bigg\}.
\end{align*}
This result provides an upper bound on off diagonal terms. Using this result, we have for any $i\not\in S$
\begin{align}
\notag  |e_i^THH^T\beta| &= \bigg|\sum_{j\in S}e_i^THH^Te_j\beta_j\bigg|\leq \sum_{j\in S}|e_i^THH^Te_j||\beta_j|\\
&\leq \sqrt{\sum_{j\in S}|e_i^THH^Te_j|^2}\cdot \|\beta\|_2
\leq \frac{\sqrt{c'c_4}M}{\sqrt{\log n}}\frac{n^{1+3\tau/2 + \nu/2 - \alpha}}{p}, \label{eq:off-alpha}
\end{align}
with probability at least $1 - O\bigg\{n^{\nu}\exp\bigg(\frac{-Cn^{1-2\alpha}}{2\log n}\bigg)\bigg\}$. The last inequality is due to Assumption 3 that $var(Y) = O(1)$, which implies
\begin{equation}
c_4^{-1}\|\beta\|^2n^{-\tau}\leq \|\beta\|^2\lambda_{min}(\Sigma) \leq \beta^T\Sigma\beta = var(Y) - \sigma^2 \leq c' \label{eq:con3}
\end{equation}
for some constant $c'$. Taking $\alpha = (5/2)\tau + \kappa + \nu/2$ in \eqref{eq:off-alpha}, we have
\begin{align}
  P\bigg(|e_i^THH^T\beta| > \frac{\tilde c}{\sqrt{\log n}}\frac{n^{1-\tau-\kappa}}{p}\bigg) = O\big\{\exp\bigg(\frac{-C'n^{1-5\tau - 2\kappa - \nu}}{2\log n}\bigg)\bigg\},
\end{align}
where $\tilde c = \sqrt{c'c_4}M$ and $C'$ is any constant that is less than $C$ ($C$ is also an arbitrary constant).
Next, for $i\in S$, combining \eqref{eq:off-alpha} with \eqref{eq:t1} and using the same value of $\alpha$ yields
\begin{align*}
  |e_i^THH^T\beta| &\geq |e_i^THH^Te_i||\beta_i| - \sum_{j\in S}|e_i^THH^Te_j||\beta_j|\\
&\geq c_1'c_2\frac{n^{1-\tau-\kappa}}{p} - \frac{\tilde c}{\sqrt{\log n}}\frac{n^{1 - \tau - \kappa}}{p}\\
&\geq \frac{c_1'c_2}{2}\frac{n^{1-\tau-\kappa}}{p},
\end{align*}
with probability at least $1 - 2e^{-Cn} - O\bigg\{n^{\nu}\exp\bigg(\frac{-Cn^{1-5\tau-2\kappa - \nu}}{2\log n}\bigg)\bigg\}$. Letting $c = c_1'c_2/2$ we have
\begin{align}
  P\bigg(|e_i^THH^T\beta| < c\frac{n^{1-\tau-\kappa}}{p}\bigg) = O\big\{\exp\bigg(\frac{-C'n^{1-5\tau - 2\kappa - \nu}}{2\log n}\bigg)\bigg\},
\end{align}
which completes the proof.

\end{proof}

\begin{proof}[{\bf Proof of Lemma \ref{Lemma:6}}]
 Recall the random variable $\eta_i = e_i^T\eta = e_i^TX^T(XX^T)^{-1}\epsilon$. If we define 
\[
a = e_i^TX^T(XX^T)^{-1}/\|e_i^TX^T(XX^T)^{-1}\|_2,
\]
then $a$ is independent of $\epsilon$ and 
\[
\eta_i \stackrel{(d)}{=} \|e_i^TX^T(XX^T)^{-1}\|_2\cdot \sigma w,
\]
where $w$ is a standardized random variable such that $w = a^T\epsilon/\sigma$. Now for the norm  we have
\begin{align}
\notag \|e_i^TX^T(XX^T)^{-1}\|_2^2 &=  e_i^TX^T(XX^T)^{-2}Xe_i = e_i^TX^T(XX^T)^{-1/2}(XX^T)^{-1}(XX^T)^{-1/2}Xe_i\\
\notag &\leq \lambda_{max}((XX^T)^{-1})\|(XX^T)^{-1/2}Xe_i\|^2\leq \lambda_{max}((XX^T)^{-1}) e_i^THH^Te_i\\
&= \lambda_{max}((Z\Sigma Z^T)^{-1}) e_i^THH^Te_i. \label{eq:var_eta}
\end{align}
The first term follows that
\begin{align}
\notag \lambda_{max}((Z\Sigma Z^T)^{-1}) &= (\lambda_{min}(Z\Sigma Z^T))^{-1}\leq \lambda_{min}(ZZ^T)^{-1}\lambda_{min}(\Sigma)^{-1} = p^{-1}\lambda_{min}(p^{-1}ZZ^T)^{-1}\lambda_{min}(\Sigma)^{-1}\\
&<\frac{c_4n^\tau}{p}\lambda_{min}(p^{-1}ZZ^T)^{-1},\label{eq:2}
\end{align}
where the last step is {\color{black}due to equation \eqref{eq:lambdamin}}.
According to A2, for some $C_1>0$ and $c_1>1$, we have
\[
P\bigg(\lambda_{max}(p^{-1}ZZ^T)>c_1\quad\mbox{or}\quad\lambda_{min}(p^{-1}ZZ^T)<c_1^{-1}\bigg)<\exp\big(-C_1n\big),
\]
which together with \eqref{eq:2} ensures that
\begin{align}
\notag P\bigg(\lambda_{max}\big((Z\Sigma Z^T)^{-1}\big)>\frac{c_1c_4n^{\tau}}{p}\bigg)&<P\bigg(\frac{c_4n^\tau}{p}(\lambda_{min}\big(p^{-1}ZZ^T)\big)^{-1}>\frac{c_1c_4n^{\tau}}{p}\bigg)\\
&= P\bigg(\lambda_{min}(p^{-1}ZZ^T)<c_1^{-1}\bigg)<e^{-C_1n}. \label{eq:3}
\end{align}
Combining Lemma \ref{Lemma:4} and \eqref{eq:3} entails that for the same $C_1>0$,
\begin{equation}
P\bigg(\|e_i^TX^T(XX^T)^{-1}\|_2^2> c_1c_2'c_4\frac{n^{1+2\tau}}{p^2}\bigg)< 3\exp\big(-C_1n\big). \label{eq:var}
\end{equation}
For $w$, according to the $q$-exponential tail assumption, we have
\begin{align*}
  P(|\sum_{i=1}^n a_i\epsilon_i/\sigma |> t)\leq \exp(1-q(t)).
\end{align*}
By choosing $t= \frac{\sqrt C_1n^{1/2-2\tau-\kappa}}{\sqrt{\log n}}$, we have
\[
P\bigg(|w|>  \frac{\sqrt{C_1}n^{1/2-2\tau-\kappa}}{\sqrt{\log n}}\bigg)< \exp\bigg\{1-q\bigg(\frac{\sqrt{C_1}n^{1/2-2\tau-\kappa}}{\sqrt{\log n}}\bigg)\bigg\}.
\]
Combining it with \eqref{eq:var}, taking the union bound, we have
\begin{equation}
P\bigg(|\eta_i|>\frac{\sigma\sqrt{C_1c_1c_2'c_4}}{\sqrt{\log n}}\frac{n^{1-\kappa-\tau}}{p}\bigg)<\exp\bigg\{1-q\bigg(\frac{\sqrt{C_1}n^{1/2-2\tau-\kappa}}{\sqrt{\log n}}\bigg)\bigg\}+ 3\exp\big(-C_1n\big).\label{eq:epsilon1}
\end{equation}
The proof is completed.
\end{proof}

\subsection*{Proof of the theorems}
By now we have all the technical results needed to prove the main theorems. The proof
of Theorem 1 follows the basic scheme of \cite{Fan:Lv:2008} but with a
modification of their step 2 by using our Lemma \ref{Lemma:5}}. The proof of Theorem 2 is a direct application of Lemma \ref{Lemma:5} and step 4 in the proof of Theorem 1. Theorem 3 mainly use the properties of Taylor expansion on matrix elements.

\par
\begin{proof}[\textbf{Proof of Theorem 1}] 
Applying Lemma \ref{Lemma:5} and Lemma \ref{Lemma:6} to all $i\in S$, we have
\begin{align}
P\bigg(\min_{i\in S}|\xi_i|<c\frac{n^{1-\tau-\kappa}}{p}\bigg) = O\bigg\{s\cdot \exp\bigg(\frac{-Cn^{1-5\tau-2\kappa-\nu}}{2\log n}\bigg)\bigg\}\label{eq:index.low1}
\end{align}
and
\begin{align}
  P\bigg(\max_{i\in S}|\eta_i|>\frac{\sigma\sqrt{C_1c_1c_2'c_4}}{\sqrt{\log n}}\frac{n^{1-\tau-\kappa}}{p}\bigg) = s\cdot \exp\bigg\{1-q\bigg(\frac{\sqrt C_1 n^{1/2-2\tau-\kappa}}{\sqrt{\log n}}\bigg)\bigg\} + 3s\cdot \exp\big(-C_1n\big).
\end{align}
Because $s = c_3n^\nu$ with $\nu<1$, taking $C = 2C_1$, \eqref{eq:index.low1} can be updated as
\begin{align}
  P\bigg(\min_{i\in S}|\xi_i|<c\frac{n^{1-\tau-\kappa}}{p}\bigg) = O\bigg\{\exp\bigg(\frac{-C_1n^{1-5\tau-2\kappa-\nu}}{2\log n}\bigg)\bigg\}.\label{eq:index.low2}
\end{align}
Therefore, if we choose $\gamma_n$ such that
\begin{align}
\frac{\tilde c+\sigma\sqrt{C_1c_1c_2'c_4}}{\sqrt{\log n}}\frac{n^{1-\tau-\kappa}}{p}<\gamma_n<\frac{c}{2}\frac{n^{1-\tau-\kappa}}{p},  
\end{align}
or in an asymptotic form satisfying
\begin{align}
  \frac{p\gamma_n}{n^{1-\tau-\kappa}}\rightarrow 0\quad\mbox{ and }\quad\frac{p\gamma_n\sqrt{\log n}}{n^{1-\tau-\kappa}}\rightarrow\infty,
\end{align}
then we have
\begin{align}
\notag  P\bigg(\min_{i\in S}|\hat\beta_i|<\gamma_n\bigg) &= P\bigg(\min_{i\in S}|\xi_i+\eta_i|<\gamma_n\bigg)\\
\notag &\leq P\bigg(\min_{i\in S}|\xi_i|<c\frac{n^{1-\tau-\kappa}}{p}\bigg)+P\bigg(\max_{i\in S}|\eta_i|>\frac{\sigma\sqrt{C_1c_1c_2'c_4}}{\sqrt{\log n}}\frac{n^{1-\tau-\kappa}}{p}\bigg)\\
\notag &= O\bigg\{\exp\bigg(\frac{-C_1n^{1-5\tau-2\kappa-\nu}}{2\log n}\bigg)\bigg\}+s\cdot \exp\bigg\{1-q\bigg(\frac{\sqrt C_1 n^{1/2-2\tau-\kappa}}{\sqrt{\log n}}\bigg)\bigg\}.
\end{align}
This completes the proof of Theorem 1.
\end{proof}

\par
\begin{proof}[\textbf{Proof of Theorem 2}]
According to Lemma \ref{Lemma:5}, for any $i\not\in S$ and any $C>0$, there exists a $\tilde c>0$ such that
\[
P\bigg(|e_iHH^T\beta|>\frac{\tilde c}{\sqrt{\log n}}\frac{n^{1-\tau-\kappa}}{p}\bigg)\leq O\bigg\{\exp\bigg(\frac{-Cn^{1-5\tau-2\kappa-\nu}}{2\log n}\bigg)\bigg\}.
\]
Now with Bonferroni's inequality, we have
\begin{align}
\notag P\bigg(\max_{i\not\in S}|\xi_i|>\frac{\tilde c}{\sqrt{\log n}}\frac{n^{1-\tau-\kappa}}{p}\bigg) &= P\bigg(\max_{i\not\in S}|e_iHH^T\beta|>\frac{\tilde c}{\sqrt{\log n}}\frac{n^{1-\tau-\kappa}}{p}\bigg)\\
&<O\bigg\{p\cdot \exp\bigg(\frac{-Cn^{1-5\tau-2\kappa-\nu}}{2\log n}\bigg)\bigg\}.
\end{align}
Also, applying Bonferroni's inequality to \eqref{eq:epsilon1} in the proof of Lemma \ref{Lemma:6} gives
\begin{equation*}
P\bigg(\max_ i |\eta_i|>\frac{\sigma\sqrt{C_1c_1c_2'c_4}}{\sqrt{\log n}}\frac{n^{1-\kappa-\tau}}{p}\bigg)<p\cdot \exp\bigg\{1-q\bigg(\frac{\sqrt C_1 n^{1/2-2\tau-\kappa}}{\sqrt{\log n}}\bigg)\bigg\} + 3p\cdot \exp\big(-C_1n\big).
\end{equation*}
Now recall that
\begin{align}
  \log p = o\bigg( \min\bigg\{ \frac{n^{1-2\kappa-5\tau}}{2\log n}, q\bigg(\frac{\sqrt C_1 n^{1/2-2\tau - \kappa}}{\sqrt{\log n}}\bigg)\bigg\}\bigg),
\end{align}
we have for the same $C_1$ specified in A2 (with the corresponding c),
\begin{align}
P\bigg(\max_{i\not\in S}|\xi_i|>\frac{\tilde c}{\sqrt{\log n}}\frac{n^{1-\tau-\kappa}}{p}\bigg)&< O\bigg\{\exp\bigg(-C_1\frac{n^{1-5\tau-2\kappa-\nu}}{2\log n}\bigg)\bigg\},\label{eq:24}\\
P\bigg(\max_ i |\eta_i|>\frac{\sigma\sqrt{C_1c_1c_2'c_4}}{\sqrt{\log n}}\frac{n^{1-\kappa-\tau}}{p}\bigg)&<O\bigg\{\exp\bigg(1-\frac{1}{2}q\bigg(\frac{\sqrt C_1 n^{1/2-2\tau-\kappa}}{\sqrt{\log n}}\bigg)\bigg)+\exp\bigg(-\frac{C_1}{2}n\bigg)\bigg\}.
\end{align}
Now if $\gamma_n$ is chosen as the same as in Theorem 1, we have
\begin{align*}
P\bigg(\max_{i\not\in S}|\hat\beta_i|>\gamma_n\bigg)<O\bigg\{\exp\bigg(-C_1\frac{n^{1-5\tau-2\kappa-\nu}}{2\log n}\bigg)+ \exp\bigg(1-\frac{1}{2}q\bigg(\frac{\sqrt C_1 n^{1/2-2\tau-\kappa}}{\sqrt{\log n}}\bigg)\bigg)\bigg\}.
\end{align*}
Therefore, combining the above result with Theorem 1 and noticing that $s<p$, we have
\begin{equation*}
P\bigg(\min_{i\in S}|\hat\beta_i|>\gamma_n>\max_{i\not\in S}|\hat\beta_i|\bigg) = 1 - O\bigg\{\exp\bigg(-C_1\frac{n^{1-5\tau-2\kappa-\nu}}{2\log n}\bigg)+ \exp\bigg(1-\frac{1}{2}q\bigg(\frac{\sqrt C_1 n^{1/2-2\tau-\kappa}}{\sqrt{\log n}}\bigg)\bigg)\bigg\}.
\end{equation*}
Obviously, if we choose a submodel with size $d$ that $d\geq s$ we will have
\begin{align}
  P(\mathcal{M}_S\subset \mathcal{M}_d) = 1 - O\bigg\{\exp\bigg(-C_1\frac{n^{1-5\tau-2\kappa-\nu}}{2\log n}\bigg)+ \exp\bigg(1-\frac{1}{2}q\bigg(\frac{\sqrt C_1 n^{1/2-2\tau-\kappa}}{\sqrt{\log n}}\bigg)\bigg)\bigg\},
\end{align}
which completes the proof of Theorem 2. 
\end{proof}
\par
\begin{proof}[\textbf{Proof of Corollary 1}]
  Replacing $q(t)$ by $C_0t^2/K^2$ for some $C_0>0$, the condition \eqref{eq:p} becomes
  \begin{align}
    \log p = o\bigg(\frac{n^{1-2\kappa-5\tau}}{\log n}\bigg),
  \end{align}
  and the result becomes
  \begin{align}
\notag  P\bigg(\min_{i\in S}|\hat\beta_i| > \gamma_n > \max_{i\not\in S}|\hat\beta_i|\bigg) &= 1 - O\bigg\{\exp\bigg(-C_1\frac{n^{1-2\kappa-5\tau-\nu}}{2\log n}\bigg) + \exp\bigg(-\frac{C_0C_1}{K^2}\frac{n^{1-2\kappa-4\tau}}{2\log n}\bigg)\bigg\}\\
& = 1 - O\bigg\{\exp\bigg(-C_1\frac{n^{1-2\kappa-5\tau-\nu}}{2\log n}\bigg)\bigg\}.
  \end{align}
The proof of Corollary 1 is completed.
\end{proof}
\par
\begin{proof}[\textbf{Proof of Theorem 3}]
It is intuitive to see that when the tuning parameter $r$ is sufficiently
small, the results in Theorem 2 should continue to hold. The issue here is to
find a better rate on $r$ to allow a more flexible choice for $r$. Following the ridge formula in Part A of the Supplementary Materials, the HOLP solution can be expressed as
\begin{align}
\hat\beta(r) = X^T(XX^T+rI_n)^{-1}X\beta+X^T(XX^T+rI_n)^{-1}\epsilon := \xi(r)+\eta(r).
\end{align}
\par
We look at $\xi(r)$ first. Using the notations in the very beginning of this section, we write
\begin{align*}
&X^T(XX^T+rI_n)^{-1}X = \Sigma^{1/2}UDV^T(VDU^T\Sigma UDV^T+rI_n)^{-1}VDU^T\Sigma^{1/2}\\
&= \Sigma^{1/2}U(U^T\Sigma U+rD^{-2})^{-1}U^T\Sigma^{1/2} = \Sigma^{1/2}U A^{-1}(I_n+rA^{-T}D^{-2}A^{-1})^{-1}A^{-T}U^T\Sigma^{1/2},
\end{align*}
{\color{black} where $A = (U^T\Sigma U)^{1/2}$, the square root of a positive definite symmetric matrix, i.e.,  $U^T\Sigma U = A^TA =  AA^T = A^2$.} In order to expand the
inverse matrix by Taylor expansion, we need to evaluate the largest eigenvalue
of the matrix $A^{-T}D^{-2}A^{-1}$, where
\begin{align}
\notag \lambda_{max}(A^{-T}D^{-2}A^{-1})\leq \lambda_{max}(D^{-2}){\color{black}\lambda_{max}\big((AA^T)^{-1}\big)} =\lambda_{min}(D^2)^{-1}\lambda_{min}(U^T\Sigma U)^{-1}.
\end{align}
According to A1 and A3, we have
\begin{align*}
P(p^{-1}\lambda_{min}(D^2)<c_1^{-1})<e^{-C_1n}
\end{align*}
for some $c_1>1$ and $C_1>0$ and
\begin{align*}
\lambda_{min}(U^T\Sigma U)\geq \lambda_{min}(\Sigma)\lambda_{min}(U^TU)\geq c_4^{-1}n^{-\tau}.
\end{align*}
Therefore, with probability greater than $1-e^{-C_1n}$, we have
\begin{align}
\lambda_{max}(A^{-T}D^{-2}A^{-1})\leq \frac{c_1c_4n^{\tau}}{p}, \label{eq:20}
\end{align}
meaning that when $r<pc_1^{-1}c_4^{-1}n^{-\tau}$, the norm of the matrix
$rA^{-T}D^{-2}A^{-1}$ is smaller than 1, and that the inverse of the matrix
can be expanded by the following Taylor series as
\begin{align*}
\Sigma^{1/2}U A^{-1}&(I_n+rA^{-T}D^{-2}A^{-1})^{-1}A^{-T}U^T\Sigma^{1/2} = \Sigma^{1/2}U A^{-1}(I_n+\sum_{k=1}^\infty r^k(A^{-T}D^{-2}A^{-1})^k)A^{-T}U^T\Sigma^{1/2}\\
&= HH^T + \sum_{k=1}^\infty r^k\Sigma^{1/2}U A^{-1}(A^{-T}D^{-2}A^{-1})^kA^{-T}U^T\Sigma^{1/2} = HH^T+M.
\end{align*}
The largest eigenvalue for each component of the infinite sum in the above formula can be bounded as
\begin{align*}
\lambda_{max}(\Sigma^{1/2}U &A^{-1}(A^{-T}D^{-2}A^{-1})^kA^{-T}U^T\Sigma^{1/2})\\
&\leq \lambda_{max}(\Sigma^{1/2}U A^{-1}A^{-T}U^T\Sigma^{1/2})\lambda_{max}\big((A^{-T}D^{-2}A^{-1})^k\big)\\
&= \lambda_{max}(HH^T)\lambda_{max}(A^{-T}D^{-2}A^{-1})^k\\
&\leq \lambda_{max}(A^{-T}D^{-2}A^{-1})^k,
\end{align*}
and so is their infinite sum as
\begin{align}
\lambda_{max}(M) \leq \sum_{k=1}^{\infty} r^k\lambda_{max}(A^{-T}D^{-2}A^{-1})^k\leq \frac{r\lambda_{max}(A^{-T}D^{-2}A^{-1})}{1-r\lambda_{max}(A^{-T}D^{-2}A^{-1})}.\label{eq:21}
\end{align}
The last step in the above formula requires $r\lambda_{max}(A^{-T}D^{-2}A^{-1})$ to be less than 1, and according to \eqref{eq:20}, it is true with probability greater than $1-e^{-C_1n}$.
Now with equation \eqref{eq:20} and \eqref{eq:21}, we have
\begin{align}
\notag P\bigg(\lambda_{max}(M)>\frac{c_1c_4rn^{\tau}}{p-c_1c_4rn^{\tau}}\bigg)&<P\bigg(\frac{r\lambda_{max}(A^{-T}D^{-2}A^{-1})}{1-r\lambda_{max}(A^{-T}D^{-2}A^{-1})}>\frac{c_1c_4rn^{\tau}}{p-c_1c_4rn^{\tau}}\bigg)\\
&= P\bigg(\lambda_{max}(A^{-T}D^{-2}A^{-1})> \frac{c_1c_4n^{\tau}}{p}\bigg) <e^{-C_1n}.\label{eq:M}
\end{align}
With the above equation, the following steps are straightforward. By choosing an appropriate rate of $r$, the entries of $M$ will be much smaller than the entries of $HH^T$, and the results established in Theorem 2 will remain valid. For any $i\in \{1,2,\cdots,p\}$, according to \eqref{eq:con3} we have
\begin{align*}
\max_{i\in\{1,\cdots,p\}}|e_i^TM\beta|^2 \leq \beta^T M^2\beta \leq \|\beta\|^2\lambda_{max}(M)^2\leq c'c_4n^\tau\lambda_{max}(M)^2.
\end{align*}
Hence if $r$ satisfies  $rn^{5/2\tau+\kappa-1}\rightarrow 0$ to ensure 
\begin{align*}
\frac{c_1c_4\sqrt{c'c_4}rn^{5/2\tau+\kappa-1}}{1-c_1c_4rn^\tau/p} = o(1),
\end{align*}
we can obtain an upper bound on $\max_i|e_i^TM\beta|$ following 
\eqref{eq:M} as
\begin{align*}
P\bigg(\max_{i\in\{1,\cdots,p\}}|e_i^TM\beta|>&\frac{n^{1-\tau-\kappa}}{p}\cdot\frac{c_1c_4\sqrt{c'c_4}rn^{5/2\tau+\kappa-1}}{1-c_1c_4rn^\tau/p}\bigg)\\
&<P\bigg(\sqrt{c'c_4n^{\tau}}\lambda_{max}(M)>\frac{c_1c_4\sqrt{c'c_4}rn^{\frac{3}{2}\tau}}{p-c_1c_4rn^{\tau}}\bigg)<e^{-C_1n}.
\end{align*}
Thus,
\begin{align*}
P\bigg(\max_{i\in\{1,\cdots,p\}}|e_i^TM\beta|>o(1)\frac{n^{1-\tau-\kappa}}{p}\bigg)<e^{-C_1n}.
\end{align*}
Recall the fact that $\xi_i(r) = \xi_i+e_iM\beta$. Combining the above equation and the results obtained in \eqref{eq:index.low2} and \eqref{eq:24}, similar properties for $\xi_i(r)$ can be established as 
\begin{align}
\notag P\bigg(\max_{i\not\in S}|\xi_i(r)|>o(1)\frac{n^{1-\tau-\kappa}}{p}\bigg)&< O\bigg\{\exp\bigg(-C_1\frac{n^{1-5\tau-2\kappa-\nu}}{2\log n}\bigg)\bigg\},\\
P\bigg(\min_{i\in S} |\xi_i(r)|<\frac{c}{2}\frac{n^{1-\kappa-\tau}}{p}\bigg)&<O\bigg\{\exp\bigg(-C_1\frac{n^{1-5\tau-2\kappa-\nu}}{2\log n}\bigg)\bigg\},\label{eq:xi_r}
\end{align}
where $c$ is some positive constant and $o(1)$ represents $\frac{\tilde c}{\sqrt{\log n}} + \frac{c_1c_4\sqrt{c'c_4}rn^{5/2\tau+\kappa-1}}{1-c_1c_4rn^\tau/p}$.
\par
Second, we look at $\eta(r)$. By a similar argument, the previous results on $\eta$ in Theorem 2 can be generalized. Because
\begin{align*}
\eta_i(r) = e_i^TX^T(XX^T+rI_n)^{-1}\epsilon,
\end{align*}
it follows
\begin{align*}
var(\eta_i(r)|X) &= \sigma^2e_i^TX^T(XX^T+rI_n)^{-2}Xe_i\\
&= \sigma^2e_i^TX^T(XX^T+rI_n)^{-1/2}(XX^T+rI_n)^{-1}(XX^T+rI_n)^{-1/2}Xe_i\\
&\leq \sigma^2\lambda_{max}((XX^T+rI_n)^{-1})\cdot e_i^TX^T(XX^T+rI_n)^{-1}Xe_i\\
& = \sigma^2(\lambda_{min}(XX^T+rI_n))^{-1}\cdot e_i^TX^T(XX^T+rI_n)^{-1}Xe_i.
\end{align*}
Using the same notation in the $\xi(r)$ part, we have
\begin{align*}
e_i^TX^T(XX^T+rI_n)^{-1}Xe_i = e_i^THH^Te_i+e_i^TMe_i \leq e_i^THH^Te_i+\lambda_{max}(M),
\end{align*}
and for $\lambda_{min}(XX^T+rI_n)$, it can be expressed as
\begin{align*}
\lambda_{min}(XX^T+rI_n) = r+\lambda_{min}(XX^T)\geq \lambda_{min}(XX^T).
\end{align*}
Therefore, the conditional variance of $\eta_i(r)$ can be reformulated as
\begin{align}
\notag var(\eta_i(r)|X) &\leq \sigma^2 \big(\lambda_{min}(XX^T)\big)^{-1}(e_i^THH^Te_i+\lambda_{max}(M)) \\
\notag &= \sigma^2\big(\lambda_{min}(XX^T)\big)^{-1} e_i^THH^Te_i+\sigma^2\big(\lambda_{min}(XX^T)\big)^{-1}\lambda_{max}(M) \\
&= \sigma^2\big(\lambda_{min}(Z\Sigma Z^T)\big)^{-1} e_i^THH^Te_i+\sigma^2\big(\lambda_{min}\big(Z\Sigma Z^T)\big)^{-1}\lambda_{max}(M). \label{eq:eta_r}
\end{align}
The first term in the above formula appears as $var(\eta_i|X)$ in \eqref{eq:var_eta} in the proof of Lemma \ref{Lemma:6}, while the second term $\sigma^2(\lambda_{min}(Z\Sigma Z^T))^{-1}\lambda_{max}(M)$ is introduced by the ridge parameter $r$. If we are able to show that this new conditional variance has the same bound as specified in equation \eqref{eq:var} (with a different constant), then a similar result of Lemma \ref{Lemma:6} can also be established for $\eta_i(r)$, i.e.,
\begin{align}
P\bigg(|\eta_i(r)|>\frac{\sigma\sqrt{2C_1c_1c_2'c_4}}{\sqrt{\log n}}\frac{n^{1-\kappa-\tau}}{p}\bigg)<\exp\bigg\{1-q\bigg(\frac{\sqrt{C_1}n^{1/2-2\tau-\kappa}}{\sqrt{\log n}}\bigg)\bigg\}+ 5\exp\big(-C_1n\big)\label{eq:eta_r1}.
\end{align}
and therefore,
\begin{align}
P\bigg(\max_{i}|\eta_i(r)|>\frac{\sigma\sqrt{2C_1c_1c_2'c_4}}{\sqrt{\log n}}\frac{n^{1-\kappa-\tau}}{p}\bigg)<O\bigg\{\exp\bigg(1-\frac{1}{2}q\bigg(\frac{\sqrt{C_1}n^{1/2-2\tau-\kappa}}{\sqrt{\log n}}\bigg)\bigg)+ \exp\bigg(-\frac{C_1}{2}n\bigg)\bigg\}\label{eq:eta_r2}.
\end{align}
\par
In fact, a similar equation as \eqref{eq:var} can be verified for this new variance directly from \eqref{eq:3} and \eqref{eq:M}. Since $\big(\lambda_{min}(Z\Sigma Z^T)\big)^{-1} = \lambda_{max}\big((Z\Sigma Z^T)^{-1}\big)$, by inequality \eqref{eq:3} and \eqref{eq:M} we have
\begin{align*}
P\bigg(\sigma^2\big(\lambda_{min}(Z\Sigma Z^T)\big)^{-1}\lambda_{max}(M)> \sigma^2\frac{c_1c_4n^\tau}{p}\cdot\frac{c_1c_4rn^\tau}{p-c_1c_4rn^\tau}\bigg)< 2e^{-C_1n}.
\end{align*}
Rearranging the lower bound in the probability gives
\begin{align*}
P\bigg(\sigma^2(\lambda_{min}\big(Z\Sigma Z^T)\big)^{-1}\lambda_{max}(M)> \frac{n^{1+2\tau}}{p^2}\cdot \frac{\sigma^2c_1^2c_4^2 rn^{-1}}{1-c_1c_4rn^\tau/p}\bigg)< 2e^{-C_1n}.
\end{align*}
If $r$ satisfies the condition stated in the theorem ensuring
\begin{align*}
\frac{\sigma^2c_1^2c_4^2 rn^{-1}}{1-c_1c_4rn^\tau/p} = o(1),
\end{align*}
then it holds that
\begin{align*}
 P\bigg(\sigma^2\big(\lambda_{min}(Z\Sigma Z^T)\big)^{-1}\lambda_{max}(M)>o(1)\frac{n^{1+2\tau}}{p^2}\bigg)<2e^{-C_1n},
\end{align*}
which combined with \eqref{eq:var} and \eqref{eq:eta_r} entails that
\begin{align*}
P\bigg(var(\eta_i(r)|X)>2c_1c_2'c_4\frac{n^{1+2\tau}}{p^2}\bigg)< 5e^{-C_1n},
\end{align*}
and thus proves equation \eqref{eq:eta_r2} (by following the argument in Lemma \ref{Lemma:6}).
\par
Finally, combining \eqref{eq:xi_r} and \eqref{eq:eta_r2} we have,
\begin{align*}
\notag P\bigg(\min_{i\in S}|\hat\beta_i(r)|< \frac{c}{4}\frac{n^{1-\tau-\kappa}}{p}\bigg)&<O\bigg\{\exp\bigg(-C_1\frac{n^{1-5\tau-2\kappa-\nu}}{2\log n}\bigg) + \exp\bigg(1-q\bigg(\frac{\sqrt{C_1}n^{1/2-2\tau-\kappa}}{\sqrt{\log n}}\bigg)\bigg)\bigg\}\\
P\bigg(\max_{i\not\in S}|\hat\beta_i(r)|> o(1)\frac{n^{1-\tau-\kappa}}{p}\bigg)&<O\bigg\{\exp\bigg(-C_1\frac{n^{1-5\tau-2\kappa-\nu}}{2\log n}\bigg) + \exp\bigg(1-q\bigg(\frac{\sqrt{C_1}n^{1/2-2\tau-\kappa}}{\sqrt{\log n}}\bigg)\bigg)\bigg\}.
\end{align*}
where $o(1)$ is used to denote $\frac{\tilde c+\sigma\sqrt{2C_1c_1c_2'c_4}}{\sqrt{\log n}} + \frac{c_1c_4\sqrt{c'c_4}rn^{5/2\tau+\kappa-1}}{1-c_1c_4rn^\tau/p}$, which is an infinitesimal.
Therefore, if we choose $\gamma_n$ such that
\begin{align}
 \bigg(\frac{\tilde c+\sigma\sqrt{2C_1c_1c_2'c_4}}{\sqrt{\log n}} + \frac{c_1c_4\sqrt{c'c_4}rn^{5/2\tau+\kappa-1}}{1-c_1c_4rn^\tau/p}\bigg)\frac{n^{1-\kappa-\tau}}{p} < \gamma_n < \frac{c}{4}\frac{n^{1-\kappa-\tau}}{p},
\end{align}
or in asymptotic form, 
\begin{align}
   \frac{\gamma_n p}{n^{1-\kappa-\tau}}\rightarrow 0\quad\mbox{and}\quad\frac{\gamma_n p\sqrt{\log n}}{n^{1-\kappa-\tau}}\rightarrow \infty\quad\mbox{and}\quad\frac{\gamma_n p}{rn^{\frac{3}{2}\tau}}\rightarrow\infty.
\end{align}
Then we can conclude similarly as Theorem \ref{thm:1} and \ref{thm:2} that
\begin{align*}
P\bigg(\min_{i\in S}|\hat\beta_i(r)|>&\gamma_n>\max_{i\not\in S}|\hat\beta_i(r)|\bigg) \\
&= 1 - O\bigg\{\exp\bigg(-C_1\frac{n^{1-5\tau-2\kappa-\nu}}{2\log n}\bigg)+ \exp\bigg(1-\frac{1}{2}q\bigg(\frac{\sqrt C_1 n^{1/2-2\tau-\kappa}}{\sqrt{\log n}}\bigg)\bigg)\bigg\}.
\end{align*}
Obviously, if we choose a submodel with size $d$ that $d\geq s$ we will have
\begin{align}
  P\bigg(\mathcal{M}_S\subset \mathcal{M}_d\bigg) = 1 - O\bigg\{\exp\bigg(-C_1\frac{n^{1-5\tau-2\kappa-\nu}}{2\log n}\bigg)+ \exp\bigg(1-\frac{1}{2}q\bigg(\frac{\sqrt C_1 n^{1/2-2\tau-\kappa}}{\sqrt{\log n}}\bigg)\bigg)\bigg\}.
\end{align}
The proof is now completed.
\end{proof}

%%%%%%%%%%%%%
\setcounter{table}{0}
\renewcommand{\thetable}{S.\arabic{table}}
\setcounter{figure}{0}
\renewcommand{\thefigure}{S.\arabic{figure}}

\section*{D: Additional simulation}
This section contains the plots for ridge-holp in Simulation study 2 in Section 4.2 as well as the detailed simulation results for $(p,n)=(1000,100)$ in Simulation Study 1 in Section 4.1 and Simulation Study 4 in Section 4.4. 

\begin{figure}[!htbp]
  \centering
  \includegraphics[width = 16cm]{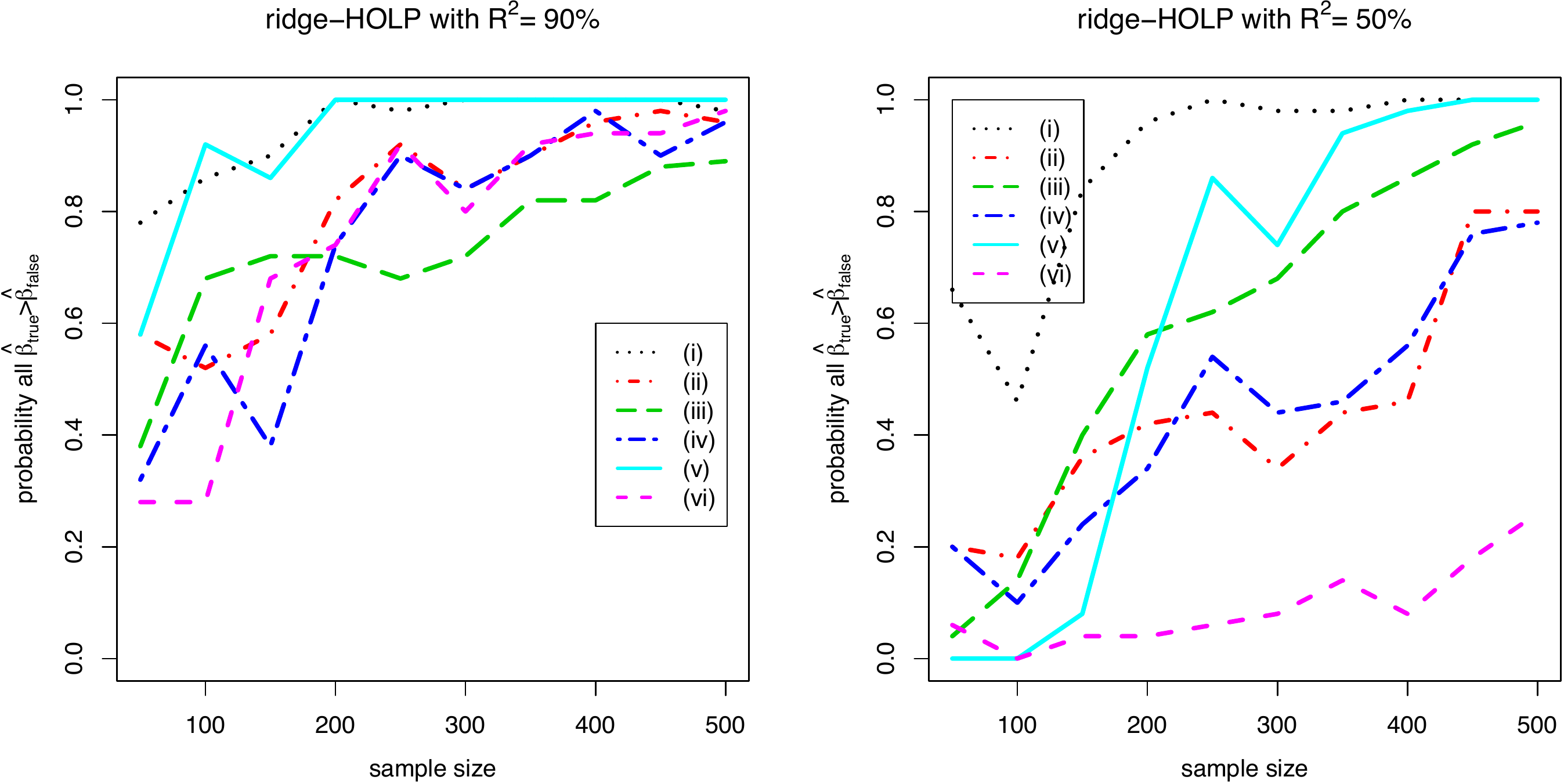}
  \caption{ridge-HOLP ($r=10$): $P(\min_{i\in S}|\hat\beta_i|>\max_{i\not\in
      S}|\hat\beta_i|)$ versus sample size $n$.}
  \label{fig:rholp}
\end{figure}

\begin{table}[!htbp]
\footnotesize
  \centering
  \caption{The probability to include the true model when $(p,n) = (1000,100)$ for Simulation Study 1 in Section 4.1}
  \begin{tabular}{l|l|l|cccccc}
\hline
 &Example &  & HOLP & SIS & RRCS & ISIS & FR & Tilting \\
\hline\hline
\multirow{14}{*}{$\mathcal{R}^2 = 50\%$}&
(i) Independent predictors &  & 0.685 & 0.690 & 0.615 & 0.270 & 0.370 & 0.340\\
\cline{2-9}
&\multirow{3}{*}{(ii) Compound symmetry}
                   & $\rho = 0.3$ & 0.195 & 0.135 & 0.195 & 0.050 & 0.005 & 0.000  \\
&                  & $\rho = 0.6$ &  0.020 & 0.010 & 0.040 & 0.005 & 0.000 & 0.000  \\
&                  & $\rho = 0.9$ &  0.000 & 0.000 & 0.000 & 0.000 & 0.000 & 0.010  \\
\cline{2-9}
&\multirow{3}{*}{(iii) Autoregressive}& $\rho = 0.3$ & 0.810 & 0.810 & 0.790 & 0.510 & 0.555 & 0.525\\
&                   & $\rho = 0.6$ & 0.970 & 0.985 & 0.970  & 0.560 & 0.390 & 0.355\\
&                  & $\rho = 0.9$ & 0.990 & 1.000 & 1.000 & 0.500 & 0.185 & 0.160\\
\cline{2-9}
&\multirow{3}{*}{(iv) Factor models}& $k = 2$ & 0.295 & 0.000 &0.000 & 0.045 & 0.135 & 0.105\\
&                      & $k = 10$&  0.060 & 0.000 &0.000 & 0.000 & 0.000 & 0.025\\
&                       & $k = 20$& 0.010 & 0.000 &0.000 & 0.000 & 0.000 & 0.000\\
\cline{2-9}
&\multirow{3}{*}{(v) Group structure} & $\delta^2 = 0.1$ & 0.935 & 0.970 & 0.950 & 0.000 & 0.000 & 0.000\\
&                      & $\delta^2 = 0.05$&  0.950 & 0.970 & 0.950 & 0.000 & 0.000 & 0.000\\
&                      & $\delta^2 = 0.01$&  0.960 & 0.980 & 0.970 & 0.000 & 0.000 & 0.000\\
\cline{2-9}
&(vi) Extreme correlation&                  & 0.305 & 0.000 &0.000 & 0.000 & 0.000 & 0.020 \\
\hline
\multirow{14}{*}{$\mathcal{R}^2 = 90\%$}&
(i) Independent predictors &  & 1.000 & 0.995 & 0.990 & 0.990 & 1.000 & 1.000\\
\cline{2-9}
&\multirow{3}{*}{(ii) Compound symmetry}
                   & $\rho = 0.3$ & 0.980 & 0.815 & 0.705 & 0.955 & 1.000 & 0.990 \\
&                  & $\rho = 0.6$ &  0.830 & 0.580 & 0.435  & 0.305 & 0.575 & 0.490 \\
&                  & $\rho = 0.9$ &  0.100 & 0.030 & 0.055 & 0.005 & 0.000 & 0.050 \\
\cline{2-9}
&\multirow{3}{*}{(iii) Autoregressive}& $\rho = 0.3$ & 0.990 & 0.965 & 0.945 & 1.000 & 1.000 & 1.000\\
&                   & $\rho = 0.6$ & 1.000 & 1.000 & 1.000 & 1.000 & 1.000 & 1.000 \\
&                  & $\rho = 0.9$ & 1.000 & 1.000 & 1.000 & 0.970 & 0.985 & 1.000 \\
\cline{2-9}
&\multirow{3}{*}{(iv) Factor models}& $k = 2$ & 0.940 & 0.015 & 0.000 &0.490 & 0.950 & 0.960\\
&                      & $k = 10$&  0.715 & 0.000 & 0.000 & 0.115 & 0.370 & 0.455\\
&                       & $k = 20$& 0.430 & 0.000 & 0.000 &0.015 & 0.105 & 0.225\\
\cline{2-9}
&\multirow{3}{*}{(v) Group structure} & $\delta^2 = 0.1$ & 1.000 & 1.000 & 1.000 & 0.000 & 0.000 & 0.000\\
&                      & $\delta^2 = 0.05$& 1.000 & 1.000 &1.000 & 0.000 & 0.000 & 0.000\\
&                      & $\delta^2 = 0.01$& 1.000 & 1.000 &1.000 & 0.000 & 0.000 & 0.000\\
\cline{2-9}
& (vi) Extreme correlation &                  & 0.905 & 0.000 &0.000 & 0.000 & 0.150 & 0.110\\
\hline
\end{tabular}\label{tab:1}
\end{table}

\begin{savenotes}
\begin{table}[!htbp]
\scriptsize
  \centering
  \caption{Model selection results when $(p, n) = (1000,100)$ for Simulation Study 4 in Section 4.4}\label{tab:iv1}
  \begin{tabular}{l|l|ccccccl}
\hline
    example & &\#FNs& \#FPs & Coverage(\%) & Exact(\%)  & Size & $||\hat\beta-\beta||_{2}$ & time (sec) \\
\hline\hline
\multirow{9}{*}{\specialcell{(i) Independent\\predictors\\ ~\\ $s = 5,
    ||\beta||_2 = 3.8$}}
                             & Lasso      & 0.05 & 0.78  & 95.0 & 44.5 &5.73  & 2.20 &0.09 \\
                             & SCAD       & 0.00 & 0.02  &100.0 & 98.0 &5.02  & 0.59 &1.37 \\
                             & ISIS-SCAD  & 0.01 & 0.05  & 99.0 & 94.0 & 5.04 & 0.59 &11.22\\
                             & SIS-SCAD   & 0.10 & 0.04  & 91.5 & 89.5 & 4.94 & 0.90 &0.19\\
                             & RRCS-SCAD  & 0.16 & 0.07  & 86.5 & 85.5 & 4.91 & 1.09 &0.54\\
                             & FR-Lasso   & 0.01 & 0.78  & 99.0 & 55.0 &5.62  & 1.75 &67.87 \\
                             & FR-SCAD    & 0.00 & 0.04  & 100.0& 96.0 &5.04  & 0.59 &68.02 \\
                             & HOLP-Lasso & 0.09 & 0.82  & 94.0 & 43.5 &5.72  &2.23  & 0.06\\
                             & HOLP-SCAD  & 0.06 & 0.04  & 95.5 & 94.0 &4.98  & 0.61 &0.20 \\
                             & HOLP-EBICS  & 0.18 & 0.02  & 83.5 & 83.0 &4.84  &0.93 &0.05 \\
                             & Tilting    & 0.00 & 0.05  & 100.0& 95.0 &5.05  & 0.59 & 294.7\\
\hline
\multirow{9}{*}{\specialcell{(ii) Compound\\symmetry\\~\\ $s = 5, ||\beta||_2 = 8.6$ }}    
                             & Lasso      & 2.64 & 2.38  &9.0   & 0.0 &4.74  & 9.85  &0.12\\
                             & SCAD       & 0.28 & 8.15 & 75.5 & 1.0 &12.97 & 8.33 & 5.64\\
                             & ISIS-SCAD  & 1.52 & 5.82  & 22.0 & 1.5 &9.30  & 8.68  & 20.15\\
                             & SIS-SCAD   & 1.59 & 4.58  & 24.0 & 5.5 &7.99  & 8.83 & 0.58\\
                             & RRCS-SCAD  & 1.79 & 4.78  & 18.5 & 5.0 &7.99  & 9.22 & 1.04\\
                             & FR-Lasso   & 0.93 & 5.43  &50.0  &1.0  &9.50  & 8.51 &92.81\\
                             & FR-SCAD    & 0.74 & 6.50  &56.0  & 1.0 &10.76 & 7.26 & 93.24\\
                             & HOLP-Lasso & 2.41 & 2.48  &12.0  &0.0  & 5.07 & 9.61  &0.10\\
                             & HOLP-SCAD  & 0.34 & 5.58  &72.5  & 3.0 &10.24 & 6.84  &0.52\\
                             & HOLP-EBICS  & 1.06 & 2.62  & 34.0 & 11.0& 6.56 & 7.16  & 0.19\\
                             & Tilting    & 3.07 & 5.07  & 20.0 & 0.0 & 7.00 & 9.82 & 238.2\\
\hline
\multirow{9}{*}{\specialcell{(iii) Autoregressive\\correlation\\~\\ $s = 3, ||\beta||_2 = 3.9$}}        
                             & Lasso      & 0.00 &1.12 &100.0   & 0.0  &4.12 &0.84 & 0.12\\
                             & SCAD       & 0.00 &0.03 &100.0   & 97.5 &3.66 &0.36 & 1.31\\
                             & ISIS-SCAD  & 0.00 &0.01 &100.0   & 99.0 &3.01 & 0.30 &14.27\\
                             & SIS-SCAD   & 0.00 &0.02 & 100.0  & 98.5 &3.02 & 0.30 &0.16\\
                             & RRCS-SCAD  & 0.00 &0.01 &100.0   & 99.0 &3.01 & 0.30 &0.66\\
                             & FR-Lasso   & 0.00 &1.12 & 100.0  & 0.0  &4.12 & 0.73 &96.01\\
                             & FR-SCAD    & 0.00 &0.04 & 100.0  & 96.5 &4.70 & 0.46 &96.19\\
                             & HOLP-Lasso & 0.00 &1.16 & 100.0  & 0.0  &4.16 & 0.83 & 0.06\\
                             & HOLP-SCAD  & 0.00 &0.01 & 100.0  & 99.0 &3.01 & 0.28 &0.15\\
                             & HOLP-EBICS  & 0.00 &0.00 & 100.0  & 100.0&3.00 & 0.28 &0.10\\
                             & Tilting    & 0.00 &0.01& 100.0  & 99.0 &3.01 &0.28 & 233.9\\
\hline
\multirow{9}{*}{\specialcell{(iv) Factor Models\\~\\$s = 5, ||\beta||_2 = 8.6$}}      
                             & Lasso      & 4.37 & 4.89  & 0.5  & 0.0 &5.52 & 11.20 &0.13\\
                             & SCAD       & 0.30 & 20.18 & 75.0 & 0.0 &24.88 &13.29 &3.23\\
                             & ISIS-SCAD  & 2.64 & 14.82 & 7.0  & 2.0 &17.19 & 15.50 &18.80\\
                             & SIS-SCAD   & 3.89 & 15.94 & 0.5  & 0.0 &17.05 & 16.58 & 0.66\\
                             & RRCS-SCAD  & 3.86 & 16.54 & 0.5  & 0.0 &17.68& 16.62 &1.03\\
                             & FR-Lasso   & 4.77 & 3.90  & 0.5  & 0.0 &6.13 & 11.92 &95.86\\
                             & FR-SCAD    & 2.87 & 12.08 & 16.0 & 0.0 &14.21& 15.80 &96.32\\
                             & HOLP-Lasso & 3.83 & 4.41  & 0.5  & 0.0 &5.58 & 11.20 &0.07\\
                             & HOLP-SCAD  & 0.81 & 11.22 & 65.0 & 2.5 &15.41 & 10.72 &0.67\\
                             & HOLP-EBICS  & 1.45 & 7.29  & 29.0 & 6.0 &10.84&11.37 &0.11\\
                             & Tilting    & 2.98 & 3.15  & 14.3 & 2.0 &5.17 & 12.02 & 165.8\\
\hline
\multirow{9}{*}{\specialcell{(v) Group\\structure\\~\\ $s = 5, ||\beta||_2 = 19.4$}} 
                             & Lasso          & 9.34  & 0.12  &0.0 &0.0& 5.77 & 14.08 & 0.13\\
                             & SCAD           & 11.94 & 63.76 &0.0 &0.0& 66.82 & 26.64 &3.74\\
                             & ISIS-SCAD      & 11.97 & 19.20 &0.0 &0.0&22.24 & 22.90 & 21.08\\
                             & SIS-SCAD       & 11.93 & 18.33 &0.0 &0.0&21.39 & 22.57 & 0.54\\
                             & RRCS-SCAD      & 11.93 & 18.03 &0.0 &0.0&21.10 & 22.65 & 0.94\\
                             & FR-Lasso       & 11.51 & 0.73  &0.0 &0.0& 4.22 & 18.52 & 96.64\\
                             & FR-SCAD        & 11.96 & 19.48 &0.0 &0.0& 23.84& 24.84 & 97.02\\
                             & HOLP-Lasso & 9.29  & 0.12  &0.0 &0.0& 5.83 & 14.07 &0.11\\
                             & HOLP-SCAD      & 11.93 & 18.32 &0.0 &0.0& 21.38 & 22.52 &0.42\\
                             & HOLP-EBICS      & 11.95 & 1.03  &0.0 &0.0 &4.08 & 22.32 &0.18\\
                             & Tilting        & 11.75 & 0.37  &0.0 &0.0& 3.62 & 22.95 &175.5\\
\hline
\multirow{9}{*}{\specialcell{(vi) Extreme\\correlation\\~\\ $s = 5, ||\beta||_2 = 8.6$}} 
                             & Lasso      & 2.35 & 8.20  &8.0  &0.0 & 10.84 & 10.14 &0.12\\
                             & SCAD       & 0.03 & 0.10  &97.5 &92.0& 5.07 & 1.93 & 3.28\\
                             & ISIS-SCAD  & 4.84 & 3.78  &0.0  &0.0 & 3.94 & 13.80 & 20.00\\
                             & SIS-SCAD   & 4.98 & 2.07  &0.0  &0.0 & 2.09 & 12.35 & 0.89\\
                             & RRCS-SCAD  & 4.98 & 2.06  &0.0  &0.0 & 2.08 & 12.35 & 1.38\\
                             & FR-Lasso   & 2.89 & 6.41  &1.0  &0.0 & 8.52 & 11.28 & 87.24\\
                             & FR-SCAD    & 3.08 & 3.15  &0.5  &0.5 & 5.07 & 12.41 & 87.60\\
                             & HOLP-Lasso & 2.26 & 8.16  &8.5  &0.0 & 10.90 & 10.13 &0.09\\
                             & HOLP-SCAD  & 0.03 & 0.08  &97.5 &93.5& 5.05 & 1.46 & 0.52\\
                             & HOLP-EBICS  & 0.70 & 0.70  &46.0 &46.0& 5.00 & 5.91 &0.16\\
                             & Tilting    & 4.47 & 3.17  &0.0  &0.0 & 3.70 & 12.54 & 208.8\\
\hline
  \end{tabular}
\end{table}
\end{savenotes}

%%%%%%%%%%%%%%
\section*{E: An analysis of the commonly selected genes}
We summarize the commonly selected genes in Table 
\ref{tab:7}. The listed genes are all selected by at least two
methods. In particular, gene BE107075 is chosen by all methods other than
tilting. \cite{Breheny:Huang:2013} reported that this gene is also selected
via group Lasso and group SCAD, {\color{black}and we find that by fitting a
  cubic curve, it can explain more
than 65\% of the variance of TRIM32. Interestingly, tilting selects a
completely different set of genes, and even the submodel after screening is
thoroughly different from other screening methods. This result may be
explained by the strong correlations among genes, as the largest absolute
correlation is around 0.99 {\color{black} and the median is 0.62}. 
\begin{table}[!ht]
\begin{center}
\caption{Commonly selected genes for different methods}\label{tab:7}
\begin{tabular}{l|cccc}
  \hline\hline
  Probe ID & 1376747 &1381902 &1390539 &1382673\\
  \hline
  Gene name &BE107075&Zfp292&BF285569&BE115812\\
  \hline
  Lasso& yes & yes & yes &  \\
  SCAD& yes & yes &yes &  \\
  \hline
  ISIS-SCAD&yes & & & \\
  SIS-SCAD&yes&yes& & \\
  RRCS-SCAD &yes&yes& & \\
  \hline
  FR-Lasso &yes & yes & & \\
  FR-SCAD  &yes & yes & & yes \\
  \hline
  HOLP-Lasso& yes & & yes & \\
  HOLP-SCAD& yes& & &yes  \\
  HOLP-EBICS&yes&\\
  \hline
  Tilting& & & & \\
\hline
\end{tabular}
\end{center}
\end{table}

\end{document}